\newtheorem{theorem}{Theorem}
\newtheorem{lemma}{Lemma}
\newtheorem{proposition}{Proposition}
\newtheorem{corollary}{Corollary}
\newtheorem{remark}{Remark}
\newtheorem{example}{Example}
\newcommand{\eref}[1]{Eq.~(\ref{#1})}
\newcommand{\Tr}{{\mathrm{Tr}}}
\newcommand{\1}{{\rm 1\hspace{-0.9mm}l}}
\DeclareMathOperator{\diag}{diag}
\newtheorem{definition}{Definition}
\newcommand{%
    
    \import{./fig/}{.pdf_tex}
}[2]{%
    
    \import{./fig/}{#1.pdf_tex}
}
\newcommand{\ket}[1]{|#1\rangle}
\newcommand{\bra}[1]{\langle #1|}
\newcommand{\project}[1]{\ket{#1}\bra{#1}}
\renewcommand{\c}[1]{\mathcal{#1}}
\newcommand{\T}{{\mathrm T}}
\newcommand{\vect}[1]{\mathbf{#1}}
\def\a{\alpha}
\def\b{\beta}
\def\m{\mu}
\def\n{\nu}
\def\be{\begin{equation}}
\def\ee{\end{equation}}
\def\ba{\begin{eqnarray}}
\def\ea{\end{eqnarray}}
\def\la{\langle}
\def\ra{\rangle}
\def\a{\alpha}
\def\nn{\nonumber}
\def\b{\beta}
\def\m{\mu}
\def\n{\nu}
\def\h{\hskip 1cm}
\def\lo{\longrightarrow}
\begin{document}

\title{Quasi-inversion of quantum and classical channels \\
     in finite dimensions}

\author[1,2]{Fereshte Shahbeigi}
\author[1]{Koorosh Sadri}
\author[1]{Morteza Moradi}
\author[2,3]{ Karol {\.Z}yczkowski}
\author[1]{Vahid Karimipour}
\affil[1]{Department of Physics, Sharif University of Technology, Tehran, Iran}
\affil[2]{Center for Theoretical Physics, Polish Academy of Sciences, 02-668 Warszawa, Poland}
\affil[3]{Institute of Theoretical Physics,
         Uniwersytet Jagiello\'{n}ski, 30-348 Krak{\'o}w, Poland}

\date{June 14, 2021}

\maketitle

\begin{abstract}
\noindent
We introduce the concept of quasi-inverse of quantum and classical channels,
prove general properties of these inverses and determine them for
a large class of channels acting in an arbitrary finite dimension.
Therefore we extend the previous results of \cite{kbf} to
arbitrary dimensional channels and to the classical domain.
We demonstrate how application of the proposed scheme can increase
on the average the fidelity between a given random pure state and its image
transformed by the quantum channel followed by its quasi-inversion.
\end{abstract}
\section{Introduction}

It is generally understood that quantum resources
make significant improvements over the classical ones
in most of the  information processing tasks \cite{NC00}.
However, these resources are usually  fragile under  the noise
caused by inevitable interactions with environment which may drastically
neutralize the quantum advantage mentioned  above.
Stated in other words, an open quantum system is usually
interacting with an environment, so
its dynamics cannot be described by a unitary evolution,
$\rho\lo U\rho U^\dagger$.
This unitary dynamics is an idealization which almost never occurs in reality.
There are always inevitable and unknown couplings with the environment
which destroy the coherence, decrease the purity of the state,
 and deteriorate information encoded into a quantum system \cite{GKS76,Li76}.
One of the central results in the quantum theory is that
a general non-unitary dynamics of an open quantum system
can be characterized by operators acting entirely within the quantum system \cite{cho,kra}.
The latter dynamics has long been studied and by now there is an extensive
literature on the subject. The simplest way to describe a non-unitary
 dynamics is to use the Kraus form of a channel acting on a density matrix $\rho$
 of order $d$,
\be\label{channel}
\rho\lo {\cal E}(\rho)=\sum_{\a}K_\a \rho K_\a^\dagger,
\ee
 which can be interpreted as a generalization of the unitary dynamics.
The standard unitarity condition $U^\dagger U=I$ has been replaced above
by the identity resolution, $\sum_{\a=1}^r K_\a^\dagger K_\a=I$.
Here $K_\a$  denotes a Kraus operator of size $d$.
The number of these operators, $r$, may vary, but it is always 
possible to find representations with $r \leq d^2$.
Any map of this class is Completely Positive and
Trace-preserving (CTP) and is called a quantum operation
 or a quantum channel \cite{cho,kra}.
It captures the effect of errors (noise, decoherence and dissipation)
in a quantum system caused by interaction with the  environment. \\

It is easy to see that any unitary evolution can be explicitly inverted by replacing $U$
with $U^{-1}$, so one can get back the original state by turning the
dynamics backward. Even if we set aside practical considerations for reducing
the effect of noise and errors, since after all there exist error correcting codes
and other methods
\cite{lam, lid, bei, kwi,K08,SAZ09,SADZ10,KLKK12,AAV88,POWRW05}
for dealing with these issues,
it is a mathematical curiosity to ask, whether a general quantum channel can  be
inverted too \cite{karol,CAZ21}.
It is however an established fact that a quantum channel can be
exactly inverted only if it is a unitary transformation.\\

By inversion of a given quantum channel we mean here using another
quantum channel, which is physically possible.
One may therefore ask, whether a quantum channel can be quasi-inverted,
in the sense that another quantum channel ${\cal E}^{qi}$ exists,
 such that
${\cal E}^{qi}\circ {\cal E}$ is as close  to the identity map as possible.
This was the approach taken in \cite{kbf}, where the case of qubit channels
was studied in detail. As qubit channels are completely characterized
and classified in \cite{rus,fuj}, the authors of \cite{kbf}
managed to find the quasi-inverse of all qubit channels. It was shown that the
quasi-inverse of every quantum channel, except for a measure zero set,
 is nothing but a suitably defined unitary map \cite{kbf}.
An explicit formula for deriving this unitary map was also derived.\\

Slightly related issue  was earlier considered by Koenig et al. \cite{KRS09}
who analyzed the following problem:
 given a bipartite quantum state $\rho_{AB}$
 one wishes to convert it as closely as possible to a maximally entangled state
 by applying a quantum channel only on the system $B$.
Due to Jamio{\l}kowski isomorphism any quantum channel
acting on a $d$-dimensional system can be treated as a bipartite
state $\rho_{AB}$ of size $d^2$ -- see \cite{karol2}.
Thus the maximal fidelity optimized over all possible channels
can be related to the conditional min-entropy of the state $\rho_{AB}$.
A similar approach was later used by
Chiribella and Ebler \cite{CE16}, who demonstrated
 that an unknown unitary channel
 can be optimally inverted with an average fidelity of $1/d^2$.
\\

 Extension of  results  obtained in \cite{kbf} for single-qubit channels
  for higher dimensions is not  straightforward,
as very little is known concerning the structure of
the set of channels acting in dimensions $d\ge 3$.
In addition to the quartic increase of the number of parameters
with the dimension $d$, which defies
any kind of geometrical picture for these channels,
certain important and simplifying theorems
which hold for qubit channels do not extend to the higher dimensional case.
For instance,  for $d=2$
any unital channel,
which leaves the maximally mixed state invariant, ${\cal E}(I)=I$
belongs to the class of  mixed unitary channels,
 so it can be represented as a mixture of unitary operations \cite{LS}.
 This property does not hold for general $d$ \cite{Tg, Kumm}
 and already for $d\geq3$ there exists the unital channel of
Landau and Streater, which is not mixed unitary \cite{LS},
see the recent study \cite{FK19} for further information on this map.
This is also related to another important difference which concerns
the extreme points of a convex set, i.e. those points which cannot be written
as convex combination of other points of the set.
While the extreme points of the convex set of unital qubit channels are
unitary maps, this is no longer the case for higher dimensional channels.\\

All these properties
make the study of quasi-inverse of quantum channels a non-trivial task.
Nevertheless, we obtain here some general results on higher dimensional
 channels and their quasi-inverses
 and substantiate these results by several examples of
general families of $d-$ dimensional channels.
We provide upper and lower bounds for the performance of a channel after
it is compensated by its quasi-inverse, i.e. ${\cal E}^{qi}\circ {\cal E}$.
In particular, we show that the quasi-inverse of a channel of an arbitrary
dimension $d$
need not  be a unitary map. We also study 
a class of self-quasi-inverse quantum channels including 
the interesting case of Landau-Streater \cite{LS}.\\

In the second part of this work we study an analogous problem
posed for
classical channels, namely for stochastic matrices
which map the set of probability vectors into itself.
This question, left open in \cite{kbf},
is of a special interest in view of the
correspondence between quantum channels and
their classical counterparts \cite{kcpz}.
We show that several results originally formulated for
the quantum domain find
their natural parallels in the classical scenario. \\

The structure of this paper is as follows: In Section \ref{prem},
we collect the preliminary ingredients, in Section \ref{defff}
we define the quasi-inverse channel whose general properties and
examples are respectively explored in Sections \ref{inversegeneral}
and \ref{inverseexamples}.
These studies are extended to the classical domain in Sections
\ref{inverseclassical} and \ref{furtherexamples}.
We conclude the paper with a discussion. Some of the detailed calculations
and proofs are collected in the Appendices.

\section{Preliminaries}
\label{prem}
Let $H_d$ be a complex $d-$dimensional Hilbert space for which we choose
the computational basis $\{|m\ra, m=1,\cdots, d\}$.
Let $L(H_d)$ be
 the space of linear operators on this Hilbert space.
The state of a $d-$dimensional quantum system (a qudit) is described
by a density matrix $\rho$ which is a positive operator of unit trace
acting on this Hilbert space.
The space of all density matrices  is denoted by $D(H_d)$.
This is a convex subset of $L(H_d)$. Any point of this convex set
is described by $d^2-1$ real parameters. Any linear map $A\in L(H_d)$
can be uniquely mapped to a vector $|A\ra\in H_d\otimes H_d$
in the form
\be\label{vectorized}
|A\ra=d(A\otimes I)|\phi^+\ra,
\ee
where $|\phi^+\ra=\frac{1}{\sqrt{d}}\sum_i|i,i\ra$ is
a maximally entangled state. This is called vectorization of a matrix
$A=\sum_{i,j}A_{i,j}|i\ra\la j|$ into a vector $|A\ra=\sum_{i,j}A_{i,j}|i,j\ra$,
with the correspondence between the inner products:

\be
\Tr(A^\dagger B)=\la A|B\ra.
\ee


Taking $d^2-1$  traceless and Hermitian matrices $\Gamma_i$
along with identity $I$ as a complete basis,
one can write any density matrix  as
\begin{equation}
\label{density}
\rho=\frac1d\big(I+\sum_{i=1}^{d^2-1}r_i\Gamma_i\big)=
\frac1d\big(I+{\bf r}\cdot \bm\Gamma\big)
\end{equation}
We normalize $\Gamma_i$ matrices to satisfy
\begin{equation}
\label{gamma}
\Tr\left(\Gamma_i\Gamma_j\right)=d(d-1)\delta_{ij},
\end{equation}
a concrete choice for this basis is given in Appendix \ref{app1}.
We will then have
\be\label{rhosquared}
 \Tr(\rho^2)=\frac{1}{d}\big[1+(d-1){\bf r}\cdot{\bf r}\big].
\ee
The $(d^2-1)$-dimensional vector $\bm{r}=(x_1,x_2,\dots,x_{d^2-1})^\T$
is a real vector called the generalized Bloch vector.
In contrast to the qubit case $d=2$, the convex set of physical states
(positive matrices with unit trace) is no longer a unit ball.
In fact the geometry of this convex set can be quite complicated and only
partial facts are known about low dimensions, i.e. for $d=3$ \cite{goy,karol2}.
The set of pure states $\rho=|\psi\ra\la \psi|$, where
$|\psi\ra=\sum_{i=1}^{d}\psi_i|i\ra$ is a sphere of dimension $2d-2$
which we denote by $S_{2d-2}$. On the other hand,
any pure state has the property $\Tr(\rho^2)=1$ which
in view of Eqs. (\ref{gamma}) and (\ref{rhosquared})  is equivalent to
${\bf r}\cdot {\bf r}=1$. This is a sphere of dimension
$S_{d^2-2}$ which for $d>2$ has a higher dimension than
the set of pure states. Hence the set of pure states is a subset of
this larger sphere.  This larger sphere contains other points which are not 
even states at all,  see  Appendix \ref{app1} for concrete examples.
The necessary and sufficient condition for the Bloch vector to describe a
pure quantum state can be found in \cite{JS01}.
The necessary condition for the Bloch vector to produce
a general mixed state is provided in \cite{K03}.\\

Consider now a quantum channel ${\cal E}$ represented
by its Kraus operators $K_\alpha$ acting on states of dimension $d$ through
Eq.~\eqref{channel}.
The correspondence (\ref{vectorized}) leads to the following representation
 of channels by matrices acting on vectorized form of density matrices,
\be
{\cal E}(\rho)=\sum_\a K_\a \rho K_\a^\dagger \ \ \ \lo\ \ \
|{\cal E}(\rho)\ra=\Phi_{\cal E}|\rho\ra
\ee
where
\begin{equation}\label{superoperator}
\Phi_{\cal E}=\sum_\a K_\alpha\otimes K_\alpha^*,
\end{equation}
is called the superoperator of the map ${\cal E}$ \cite{karol2}
in which $*$ denotes complex conjugation.
Although a quantum channel has many different sets of Kraus operators, connected by
$L_\beta = \sum_{\beta}U_{\a,\b}K_\beta$, where $U$ is a unitary matrix,
it is straightforward to see that $\Phi$ is unique. It also has the property that
$$\Phi_{\cal E\circ \cal E'}=\Phi_{\cal E}\Phi_{ \cal E'}.$$
The quantum channel ${\cal E}$ induces an affine transformation on the generalized Bloch vector ${\bf r}$

\be \label{affine}
{\bf r}\lo {\bf r'}=M{\bf r}+{\bf t}
\ee
where
\be
M_{ij}=\frac{1}{d(d-1)}\Tr(\Gamma_i {\cal E}(\Gamma_j))\ \ \ \ {\rm and}\ \ \ \  t_i=\frac{1}{d(d-1)}\Tr(\Gamma_i {\cal E}(I)).
\ee

 The matrix $M$ of order $d^2-1$ is called  the distortion matrix. Writing the basis $\{I, \Gamma_i\}$ in vectorized form $\{|I\ra,|\Gamma_i\ra\}$ which are vectors of dimension $d^2$, we have the normalization condition
\be
\la I|I\ra=d,\h \la I|\Gamma_i\ra=\la \Gamma_i|I\ra=0,\h \la \Gamma_i|\Gamma_j\ra=d(d-1)\delta_{i,j}.
\ee

The $d^2$ dimensional vector $|\rho\ra$ can be written in terms of the normalized basis vectors
$|\tilde{I}\ra=\frac{1}{\sqrt{d}}|I\ra$ and $|\tilde{\Gamma}_i\ra=\frac{1}{\sqrt{d(d-1)}}|\Gamma_i\ra$, in the symbolic form
\be
|\rho\ra=\frac{1}{\sqrt{d}}\left(\begin{array}{c}1\\ \sqrt{d-1}\ {\bf r}\end{array}\right),
\ee
where the first component is the coefficient of $|\tilde{I}\ra$ and the second component encapsulates the  coefficients of $|\tilde{\Gamma}_i\ra$
as a vector. The quantum channel ${\cal E}$ turns this into the vector
\be
|\rho'\ra=\frac{1}{\sqrt{d}}\left(\begin{array}{c}1\\ \sqrt{d-1}\ M{\bf r}+{\bf t}\end{array}\right).
\ee
This means that in this basis the superoperator can be written as

\be
\label{superoperator-affine}
 \Phi_{\cal E}=\left(\begin{array}{cc}1&0\\ \sqrt{d-1}\
 {\bf t}&M\end{array}\right).
 \ee
 In the case of unital channels the translation vector
 vanishes, ${\bf t}=0$. If the entire matrix $M$ also vanishes
 the corresponding map $\Phi_*$ represents the completely depolarizing channel,
 which sends any state $\rho$  into the maximally mixed state,
 $\c E_*(\rho)= I/d$.\\

The above form of the map $\Phi$,
 which represents the evolution of the Bloch vector ${\bf r}$,
 is also called its {\sl Liouville  representation}  \cite{KSRJO14}.
  On the other hand, it can be interpreted as the
Fano form \cite{Fa83} of the bi-partite state representing the map in the
Choi-Jamio{\l}kowski isomorphism.
Such a state, proportional to the Choi matrix,
\be\label{ChoiMatrix}
C_{\cal E}:=d({\cal E}\otimes I)(|\phi^+\ra\la \phi^+|)=
\sum_{i,j}{\cal E}(|i\ra\la j|)\otimes |i\ra\la j|,
\ee
forms a positive operator with  $\Tr(C_{\cal E})=d$.
It is related to the superoperator of the channel through
the simple relation \cite{karol2}
\be\label{reshuffling}
C_{\cal E}=\Phi_{\cal E}^R,
\ee
where $A^R$ denotes the following reshuffling of the entries of
 a matrix $A$ of order $d^2$
\be\label{reshuff}
A^R_{ij,kl}:=A_{ik,jl}.
\ee

\subsection{Average fidelity of a channel}
The performance of a quantum channel ${\cal E}$ can be studied in
different ways. Among these we choose the input-output fidelity
 averaged over a uniform distribution of pure states,
 $\overline{F}({\cal E})$,
and the entanglement fidelity $F_E({\cal E})$ \cite{SN96}.
These are respectively defined as follows:
\be\label{fid}
\overline{F}({\cal E})=\int_{S_{2d-2}} d\psi\
\la \psi|{\cal E}(|\psi\ra\la \psi|)|\psi\ra,
\ee
where $\int d\psi=1$ and $d\psi=d(U\psi)$ for any unitary, and
\be\label{efid}
F_E({\cal E})=\la \phi^+|{\cal E}\otimes I|(|\phi^+\ra\la \phi^+|)|\phi^+\ra.
\ee
The average fidelity measures how much the input and output states
are similar to each other and the entanglement fidelity measures how much
a maximally entangled state is affected if the channel ${\cal E}$
acts on one part of this state. As we will see the two quantities are related to
each other in a simple way. Below we calculate these quantities
by different methods and each method sheds light on these quantities
 from a different angle. \\

First consider the average fidelity where it can be written as
\ba\label{fidel}
\overline{F}({\cal E})&=&\sum_\a\int_{S_{2d-2}} d\psi \
\la \psi|K_\a|\psi\ra\la \psi|K_\a^\dagger |\psi\ra\cr
&=&\sum_\a\int_{S_{2d-2}} d\psi\ \la \psi|K_\a|\psi\ra
\la \psi^*|K_\a^* |\psi^*\ra
=\Tr(L \Phi_{\cal E}),
\ea
where
\be
L:=\int_{S_{2d-2}}  d\psi\  |\psi\ra\la\psi|\otimes |\psi^*\ra\la\psi^*|,
\ee
is the isotropic state and $\Phi_{\cal E}$ is defined in Eq. (\ref{superoperator}).
Since $L$ is an isotropic operator in the sense that
$[L,U\otimes U^*]=0\ \ \ \forall\  U$,
it can be written in the form \cite{HH99},
\be\label{isostate}
L=\frac{1}{d(d+1)}(I+\sum_{i,j}|i,i\ra\la j,j|).
\ee
Inserting (\ref{isostate}) in (\ref{fidel}) and using (\ref{reshuffling})
and (\ref{reshuff}) one finds the following
 formula for the average fidelity
\ba\label{fbar}
\overline{F}({\cal E})&=&\frac{1}{d(d+1)}\big[d+\Tr\Phi_{\cal E}\big].
\ea
More details on these calculations will be given in Appendix \ref{A1}.
From \eqref{superoperator} and \eqref{superoperator-affine}
we see that this can also be written as
$\overline{F}({\cal E})=\frac{1}{d(d+1)}\left[d+\sum_\a |\Tr K_\a|^2\right]$ and $\overline{F}({\cal E})=\frac{1}{d(d+1)}\left[d+1+\Tr M\right]$.\\

Entanglement fidelity is derived in a simpler way, e.g. by direct expansion
of the state $|\phi^+\ra=\frac{1}{\sqrt{d}}\sum_i |i,i\ra$,
and hence one finds
\be\label{entanglement fidelity}
F_E({\cal E})=\frac{1}{d}\la \phi^+|\sum_\a K_\a |i\ra\la j|K_\a^\dagger
\otimes |i\ra\la j|\phi^+\ra=\frac{1}{d^2}\sum_\a |\Tr K_\a|^2.
\ee
Therefore, we find the simple relation \cite{HHH99}
\be\label{input-output nd entanglement fidelity}
\overline{F}=\frac{1}{d+1}(1+d\ F_E({\cal E})),
\ee
showing that quantities \eqref{fid} and \eqref{efid}
 are monotone with respect to each other.

\section{Definition of the quasi-inverse channel}\label{defff}
In the quantum communication context, it is usually the case that Alice
(the information source), generates a message state $\rho_M$
from a distribution corresponding to her language alphabet
which is known to Bob (the receiver).
Let us denote by $\mu$ this probability measure over $D(H_d)$,
the set of all quantum states.
The state is then fed into a quantum channel $\mathcal{E}$ to reach Bob.
The received state is denoted by $\rho_R$.
Assuming that the characteristics of the channel are well known to Bob,
and before performing any quantum error correction,
he may want to pass the received state through some other channel
in order to process it to a state more $similar$ to the one Alice has sent.
This latter channel clearly cannot depend on $\rho_M$
since Bob does not know it. It only depends on the channel
$\mathcal{E}$ and the probability measure $\mu$.
The second channel is applied to somehow invert the action of
$\mathcal{E}$, therefore, it is natural to call it the quasi-inverse of
$\mathcal{E}$  \cite{kbf} and it is expected that this channel does this
inversion in the best possible way, given the constraint of being a CPT map.
Therefore, we have the formal definition of the quasi-inverse:

\begin{definition}\label{defin}
	The quasi-inverse of a
	channel $\c E$ is defined as a CPT map ${\cal E}^{qi}$ fulfilling
	the following condition \cite{kbf}
	\begin{equation}\label{quasi definition}
	\overline{F}(\c E^{qi}\circ\c E)\geq \overline{F}(\c E'\circ\c E)
	\quad\forall\c E',
	\end{equation}
where $\overline{F}$ is the average of a proper fidelity function between the output state and the input pure state.
\end{definition}

\begin{remark} \label{remark}
As we will see in the sequel, the quasi-inverse of a quantum channel is unique except when the channel falls on a set of measure zero.
In the simplest case of unital qubit channels,
 in which every channel is unitarily equivalent to a Pauli channel, ${\cal E}(\rho)=\sum_{i=0}^3p_i\sigma_i\rho \sigma_i$,
 only the channels with two or more equal weights $p_i$
  have non-unique quasi-inverses.
 \end{remark}

We now prove our first theorem whose validity does not depend on
the specific form of the fidelity function, nor on the input state restricted
to be pure, rather it depends on two very general properties of the
fidelity measure, as described in the proof.

\begin{theorem}\label{boundary point}
	For any proper similarity function $F$, probability measure $\mu$
	and quantum channel $\mathcal{E}$, a quasi-inverse $\mathcal{E}^{qi}$
	can be found on the boundary of the set of allowed channels.
\end{theorem}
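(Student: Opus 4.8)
The plan is to recast the defining condition of Definition~\ref{defin} as the maximization of a single functional over the set of channels, and then to exploit that this functional is convex (indeed affine for the fidelity used here) in the second channel, so that its maximum is attained at an extreme point, which for a convex body necessarily lies on the boundary.

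First I would note that, by \eqref{fbar} together with the composition rule $\Phi_{\mathcal{E}'\circ\mathcal{E}}=\Phi_{\mathcal{E}'}\Phi_{\mathcal{E}}$, the quantity to be maximized over $\mathcal{E}'$ is
\be
g(\mathcal{E}'):=\overline{F}(\mathcal{E}'\circ\mathcal{E})=\int d\mu(\rho)\; F\big(\rho,\,\mathcal{E}'(\mathcal{E}(\rho))\big).
\ee
For each fixed message state $\rho$ the output $\sigma_\rho:=\mathcal{E}(\rho)$ of the first channel is fixed, and the assignment $\mathcal{E}'\mapsto\mathcal{E}'(\sigma_\rho)$ is affine in $\mathcal{E}'$; equivalently, the Choi operator $C_{\mathcal{E}'}$ of \eqref{ChoiMatrix} enters $g$ linearly. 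Here the first of the two properties of the similarity measure is used: I take $F(\rho,\cdot)$ to be convex in its output argument. Since the composition of a convex function with an affine map is convex, and averaging over $\mu$ preserves convexity, $g$ is a convex function on the channel set. For the input--output fidelity of \eqref{fid} with pure inputs one has $F(\rho,\sigma)=\langle\psi|\sigma|\psi\rangle$, which is affine in $\sigma$, so $g$ is in fact affine and the conclusion below holds a fortiori.

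Next I would invoke the geometry of the set of channels. Identifying each map with its Choi operator, the set $\mathcal{C}$ of CPT maps is a compact convex body (positive semidefinite operators with the appropriate partial trace equal to the identity). The second general property I require is that $F$, and hence $g$, is continuous; together with compactness of $\mathcal{C}$ this guarantees that the supremum defining the quasi-inverse is actually attained. By the Bauer maximum principle --- a convex continuous function on a compact convex set attains its maximum at an extreme point --- the maximum of $g$ is reached at some extreme point $\mathcal{E}^{qi}$ of $\mathcal{C}$. Because every extreme point of a convex body lies on its boundary $\partial\mathcal{C}$, this $\mathcal{E}^{qi}$ is a quasi-inverse located on the boundary of the set of allowed channels, which is the assertion.

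The step I expect to be the main obstacle is isolating exactly the two abstract hypotheses on $F$ --- convexity in the output state and continuity --- and checking that the notion of a ``proper similarity function'' genuinely supplies them, so that the result is really independent of the detailed form of $F$ and of the restriction to pure inputs. A minor further subtlety is non-uniqueness: when $g$ is constant along a face of $\mathcal{C}$ the set of maximizers is larger, but this only reinforces the claim that at least one quasi-inverse can be chosen on the boundary, consistent with Remark~\ref{remark}.
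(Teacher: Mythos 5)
There is a genuine gap. Your argument hinges on the claim that $F(\rho,\cdot)$ is \emph{convex} in the output state, so that $g(\mathcal{E}')$ is convex and the Bauer maximum principle places a maximizer at an extreme point. But the defining property of a proper similarity measure here is the opposite one: \emph{concavity} in the second argument, Eq.~\eqref{property b}, which is what the Uhlmann--Jozsa fidelity actually satisfies for mixed states. A concave continuous function on a compact convex body can perfectly well attain its maximum at an interior point (think of $-x^2$ on $[-1,1]$), so the Bauer principle gives you nothing for a general proper $F$ and a general measure $\mu$. Your argument does go through in the special case where $F(\rho,\cdot)$ is affine --- pure inputs with $F(\rho,\sigma)=\langle\psi|\sigma|\psi\rangle$, i.e.\ the linear functional of Eq.~\eqref{fbar} --- but then what you have proved is essentially the stronger extremality statement of Theorem~\ref{extreme-ness} for that linear case, not Theorem~\ref{boundary point}, which is explicitly claimed for \emph{any} proper similarity function and any $\mu$, not necessarily supported on pure states.

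The missing idea is the one the paper's proof supplies to handle concave $F$: combine concavity with property \eqref{property a}, $F(\rho,\rho')\leq F(\rho,\rho)$, and use the \emph{formal} affine inverse $\mathcal{E}^{-1}=(M^{-1},-M^{-1}\mathbf{t})$ of $\mathcal{E}$, which need not lie in $\mathcal{C}_d$. If a maximizer $\mathcal{E}'$ sits in the interior, then for small $\varepsilon$ the mixture $(1-\varepsilon)\mathcal{E}'+\varepsilon\mathcal{E}^{-1}$ is still a legitimate channel, and concavity together with $\mathcal{E}^{-1}\circ\mathcal{E}=\mathrm{id}$ and \eqref{property a} shows its average fidelity is at least that of $\mathcal{E}'$. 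Hence the direction toward $\mathcal{E}^{-1}$ is a direction of non-decrease, and following the segment until it exits $\mathcal{C}_d$ yields a maximizer on the boundary (with a separate continuity argument for singular $\mathcal{E}$, where $M^{-1}$ does not exist). That mechanism is what makes the theorem independent of the detailed form of $F$; your route cannot be repaired without imposing a convexity hypothesis that the intended similarity measures do not satisfy.
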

\begin{proof}
	
A proper similarity measure $F$ should satisfy the following two conditions:
\textbf{(a)} \be\label{property a} F(\rho, \rho^\prime)\leq F(\rho, \rho)
\hspace{4mm}\forall\rho, \rho^\prime\in D(H_d); \ee
the state most similar to a state $\rho$ is the state $\rho$ itself, and
\textbf{(b)} concavity:
\be\label{property b}
F(\rho, \alpha\rho_1+(1-\alpha)\rho_2)
\geq\alpha F(\rho, \rho_1)+(1-\alpha)F(\rho, \rho_2).
\ee
One such measure is the well-known fidelity,
but the following argument is independent of the particular form of this
measure.  We proceed with a  proof by contradiction.\\

Let us denote the set of all possible quantum channels on $H_d$ by
$\c C_d$. This is a convex set.  Assume the quasi-inverse
for the channel  ${\cal E}\equiv (M,{\bf t})$,
defined by Definition \ref{defin}, is a quantum channel
${\cal E}'$ in the interior of $\mathcal{C}_d$, figure (\ref{boundary1}).
Now take the inverse of the above affine map which is given by
$(M^{-1},-M^{-1}{\bf t})$. This affine map may not correspond to a
legitimate quantum channel and may be outside $\c C_d$.
Denote it by ${\cal E}^{-1}$. However for small enough $\varepsilon$,
the channel $(1-\varepsilon)\c E'+\varepsilon {\cal E}^{-1}$ is a CPT
$\in \c C_d$.
We now note that this channel performs better in quasi-inverting
the channel ${\cal E}$ since using (\ref{property a}) and (\ref{property b}) we find
\begin{eqnarray*}
	&&\int{\rm d}\mu\ F
	\Big(\rho, \big[(1-\varepsilon)\c E'+
	\varepsilon \c E^{-1}\big]\circ {\cal E} (\rho)\Big)\geq\\&&
	 (1-\varepsilon)\int {\rm d}\mu\  F
	\left(\rho, \c E'\circ \c E(\rho)\right)+
	\varepsilon\int {\rm d}\mu\  F(\rho, \rho)=\\&&
	\int {\rm d}\mu\
	F(\rho, \c E'\circ \c E)\rho)+
	\varepsilon(\int {\rm d}\mu\ \big[F(\rho, \rho)-
	F\left(\rho, \c E'\circ \c E(\rho)\right)\big])\geq\\&&
	\int{\rm d}\mu\ F\left(\rho, \c E'\circ \c E(\rho)\right).
\end{eqnarray*}
This means we have found a linear path along which the average
fidelity increases or stays constant as we go toward the boundary of $\c C_d$.
Clearly, the quasi-inverse has to be in the end of such a path
and hence at the boundary of $\c C_d$.\\
		
	\begin{figure}
		$${\includegraphics[scale=0.23]{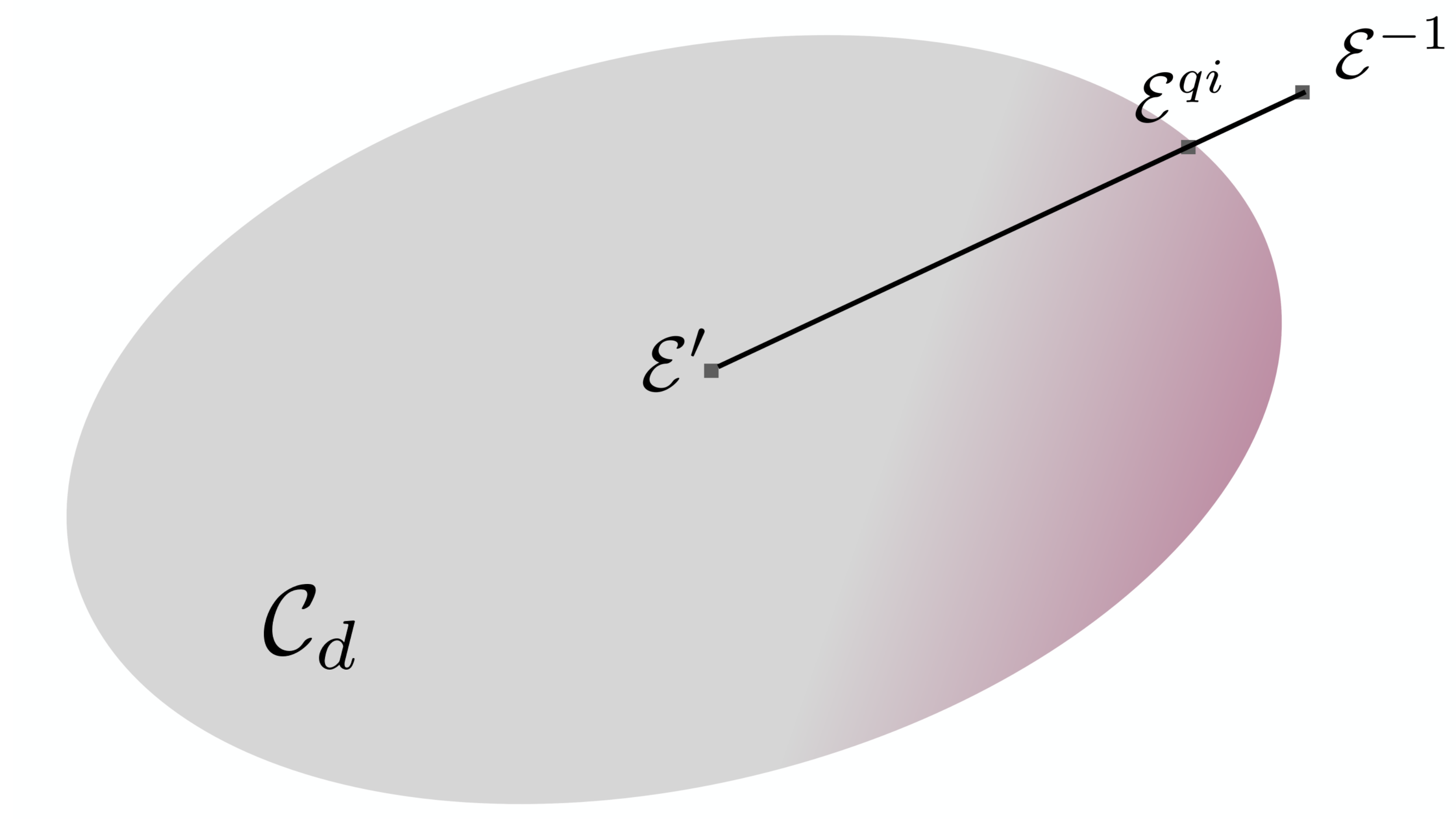}}$$\vspace{-0.4cm}
		\caption{
	Assume that the requested quasi-inverse of a given channel ${\cal E}$ is
	${\cal E}'$, located inside the convex set  $\c C_d$ of all channels.
	Consider the true inverse  map ${\cal E}^{-1}$  (defined in the text),
	which may not belong to $\c C_d$.
	Then at the line joing ${\cal E}'$ and  ${\cal E}^{-1}$
	we always find a legitimate quasi-inverse with a higher average fidelity.
	 Hence the true quasi-inverse ${\cal E}^{qi}$  lies on the boundary
	   of the set  $\c C_d$ of all channels.
	  }
		\label{boundary1}
	\end{figure}

To complete the proof, we need to consider the case of singular channels.
This corresponds to $\det M=0$ and therefore has unit co-dimension.
It means for any singular channel $\mathcal{E}$, we may come up with a
sequence of nonsingular channels $\mathcal{E}_n\rightarrow\mathcal{E}$.
So far we have proved that
	$$\forall\ \  \c E'\in\textrm{int}\ \mathcal{C}_d \hspace{3mm},
	\hspace{3mm}\exists\ \
	\mathcal{E}^{qi}_n\in\partial\mathcal{C}_d \hspace{2mm}$$ such that
	$$\hspace{2mm}\langle F\big(\rho,
	\mathcal{E}^{qi}_n\circ\mathcal{E}_n(\rho)\big)
	\rangle\geq\langle F\big(\rho, \c E'\circ\mathcal{E}_n(\rho)\big)\rangle,$$
where $\langle X\rangle$ denotes the average  of $X$ over $\mu$.
Using the continuity of $F$, implied by its concavity,
the proposition is proved for the singular channels as well.
Note that in this proof we have not assumed any particular form for
the fidelity measure, except the two natural properties (\ref{property a})
and (\ref{property b}), neither we have assumed the average fidelity
to be defined only for pure states.
\end{proof}

Hereafter we assume the standard fidelity measure,
$F(\rho, \rho^\prime)=
\Big(\Tr \sqrt{\sqrt{\rho}\rho^\prime\sqrt{\rho}}\Big)^2$,
of Uhlmann and Jozsa \cite{Uhl76,Jo94},
and use the average
input-output fidelity  (\ref{fid}) for the performance of the channel.
In such a case the average value can be compared
with the mean fidelity between two random quantum states
averaged over the set of all mixed states
with an appropriate measure \cite{ZS05}.\\

Under these assumptions, the quasi-inversion is defined as the channel
maximizing  $\overline{F}(\c E'\circ\c E)$.
In view of Eq. (\ref{fbar}), the practical method for finding
the quasi-inverse is through one of the following maximization problems:
\begin{eqnarray}\label{relation of corrected fidelity}
\overline{F}(\c E^{qi}\circ\mathcal{E})&=&
\nonumber\max_{\Phi'}\frac{1}{d+1}\left(
1+\frac{1}{d}\mathrm{Tr}\Phi'\Phi\right)\\ &=&\nonumber
\max_{\{K'_\beta\}}\frac{1}{d+1}\left(1+\frac{1}{d}\sum_{\alpha,\beta}
|\Tr K'_\beta K_\alpha|^2\right)\\
&=&\nonumber\max_{M'}\frac{1}{d}
\left(1+\frac{1}{d+1}\mathrm{Tr}M'M\right).\\
\end{eqnarray}

These equations immediately imply
that for a given channel, the left and right quasi-inverses are the same, i.e.
$\overline{F}\left(\c E^{qi}\circ\c E\right)=
\overline{F}\left(\c E\circ\c E^{qi}\right)$.
It is an interesting fact with practical benefit. Either Bob can apply the
quasi-inverse after receiving the state or Alice before sending the state to Bob.
Furthermore,  note that for $\c E$ and $\c E'$ denoted by $(M,\bm{t})$ and
$(M',\bm{t}')$, respectively, their concatenation, $\c E'\circ\c E$,
is represented by $(M'M, M'\bm{t}+\bm{t}')$.
This implies that  the translation vector ${\bf t}$, which determines the
non-unitality of the channel, plays no role directly in amount of fidelity and
fidelity after correction. However, it affects the range of the allowed
values of the distortion matrix elements. This is one of the
 features of quasi-inverse for qubit channels \cite{kbf},
which survives for higher dimensions.\\

It is worth mentioning that one may find a relation between the
input-output fidelity after correction and the conditional min-entropy.
The latter quantity for a bipartite state $\rho_{AB}$ is defined as \cite{R05}
\begin{equation}\label{min-ent}
H_{\min}(B|A)_\rho=-\inf_{\sigma_A}\inf_\lambda
\{\lambda\in\mathbb{R}|\ \rho_{AB}\leq2^{\lambda}(\sigma_A\otimes I)\},
\end{equation}
where $\sigma_A$ is a quantum state.
Indeed, it has been proven that \cite{KRS09}
\begin{equation}
2^{-H_{\min}(B|A)_\rho}=d\max_{\c E'}\bra{\phi^+}\ (\c E'\otimes I)
[\rho_{AB}] \ket{\phi^+}.
\end{equation}
Let us assume that $\rho_{AB}$ is the Jamio{\l}kowski state
(the normalised form of the Choi matrix \eqref{ChoiMatrix}) assigned
to a quantum channel  and denoted by $J_\c E=C_{\c E}/d$. Then we get
\begin{eqnarray}\label{op-min-ent}
2^{-H_{\min}(B|A)_{J_\c E}}=d\max_{\c E'}\bra{\phi^+}
(\c E'\circ \c E\otimes I)[\project{\phi^+}]\ \ket{\phi^+}=
d\max_{\c E'} F_E(\c E'\circ\c E)=d F_E(\c E^{qi}\circ\c E).
\end{eqnarray}
Here $F_E(\c E'\circ\c E)$ is the entanglement fidelity \eqref{entanglement fidelity}
of the composed map $\c E'\circ\c E$. So the above equation
shows that entanglement fidelity after correction (thus input-output fidelity
after correction, see \eqref{input-output nd entanglement fidelity}) are
directly related to the conditional min-entropy of  the Choi matrix of the channel.

\section{General properties of the quasi-inverse channel}\label{inversegeneral}
In this section we elaborate on some general properties of the quasi-inverse of
quantum channels and discuss the similarities and the  crucial differences
with the qubit case \cite{kbf}.
Thus far, we have seen in Theorem \ref{boundary point}
that for any proper similarity measure the quasi-inverse
lies on the boundaries of the set of quantum channels.
In what follows, we show that if this similarity measure is linear with respect
to quantum channels, we can specify the quasi-inverse in the set of {\sl extreme}
channels, a subset of the boundary points.
Recall that a point 
of a convex set $\Omega$
is called  extreme  if it cannot be
written as a convex combination of two other points of $\Omega$.
We will use the fact that the set  $\c C_d$
of quantum channels of dimension $d$ is convex \cite{karol2} for any $d$.
\\

\begin{theorem}\label{extreme-ness}
The quasi-inverse of a quantum channel can always be taken to
be an extreme channel.
\end{theorem}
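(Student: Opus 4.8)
The plan is to exploit the key structural fact, established in Eq.~(\ref{relation of corrected fidelity}), that the average fidelity $\overline{F}(\c E'\circ\c E)$ is an \emph{affine} (indeed linear plus constant) function of the quasi-inverting channel $\c E'$. Concretely, writing $\c E'$ via its superoperator $\Phi'$ or its distortion matrix $M'$, we have
\begin{equation*}
\overline{F}(\c E'\circ\c E)=\frac{1}{d+1}\left(1+\frac{1}{d}\,\Tr(\Phi'\Phi_{\c E})\right),
\end{equation*}
which depends on $\c E'$ only through the linear functional $\Phi'\mapsto\Tr(\Phi'\Phi_{\c E})$. Since $\c C_d$ is convex and compact, maximizing a linear functional over it attains its maximum at an extreme point; this is the Bauer maximum principle. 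That is the whole conceptual content, and the proof amounts to packaging it carefully.

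\emph{The steps I would carry out, in order.} First I would invoke Theorem~\ref{boundary point} to guarantee that at least one quasi-inverse exists and lies on $\partial\c C_d$, so that the maximization problem in Eq.~(\ref{relation of corrected fidelity}) is well-posed and its maximum is attained. Second, I would observe that the objective is affine in $\c E'$: because concatenation gives $\c E'\circ\c E\equiv(M'M,\,M'\bm t+\bm t')$, the map $\c E'\mapsto\Phi_{\c E'\circ\c E}$ is affine in the parameters of $\c E'$, and taking the trace against the fixed isotropic state $L$ yields a linear functional of $\c E'$ up to an additive constant. Third, I would argue that a linear functional achieving its maximum over the convex set $\c C_d$ must achieve it at some extreme point: take any maximizer $\c E^{qi}$; if it is not extreme, write it as a nontrivial convex combination $\c E^{qi}=\lambda\c E_1+(1-\lambda)\c E_2$, and by linearity $\overline{F}(\c E^{qi}\circ\c E)=\lambda\,\overline{F}(\c E_1\circ\c E)+(1-\lambda)\,\overline{F}(\c E_2\circ\c E)$, forcing at least one of the endpoints to do at least as well. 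Iterating (or appealing directly to Bauer's principle via the Krein--Milman representation of $\c E^{qi}$ as a barycenter of extreme points) produces an extreme maximizer.

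\emph{The main obstacle} is not the algebra but the topology: to pass from ``the max is attained at a boundary point'' to ``it is attained at an extreme point,'' one needs $\c C_d$ to be compact (so that Krein--Milman applies and the supremum over extreme points is itself attained). I would address this by noting that $\c C_d$, the set of CPT maps in fixed dimension $d$, is a closed bounded subset of the finite-dimensional real space of superoperators parametrized by $(M,\bm t)$, hence compact and convex; consequently every point of $\c C_d$ is a finite convex combination of extreme points (Carath\'eodory), which makes the iteration above terminate and avoids any infinite-dimensional subtlety. The one point requiring a word of care is that the \emph{set} of maximizers is a face of $\c C_d$ (the intersection of $\c C_d$ with a supporting hyperplane), and the extreme points of that face are extreme points of $\c C_d$ itself; this is exactly what lets me conclude that some genuine extreme channel of $\c C_d$, not merely an extreme point of the optimal face, serves as a quasi-inverse. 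Throughout I would emphasize that this refinement of Theorem~\ref{boundary point} uses the extra hypothesis of \emph{linearity} of the similarity measure in $\c E'$, which holds for the chosen Uhlmann--Jozsa fidelity precisely because of the trace formula in Eq.~(\ref{fbar}).
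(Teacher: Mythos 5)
Your proposal is correct and follows essentially the same route as the paper: both arguments rest on the linearity of $\overline{F}(\c E'\circ\c E)$ in $\c E'$ from Eq.~(\ref{relation of corrected fidelity}), write the putative quasi-inverse as a convex combination of extreme channels, and conclude by linearity that the best extreme point in that decomposition performs at least as well. Your version merely spells out the compactness and Carath\'eodory/Krein--Milman details that the paper's terse proof leaves implicit.
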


\begin{proof} Assume that the quasi-inverse of
a channel ${\cal E}$ is in the form
\be
{\cal E}^{qi}=\sum_i \lambda_i {\cal E}_i.
\ee
Let ${\cal E}_m$ be the element in the above set for which
$\overline{F}({\cal E}_m\circ {\cal E})\geq \overline{F}({\cal E}_i\circ {\cal E})\ \ \forall\ \ i$.
Then using the linearity of the quasi-inverse, we will have
\be
\overline{F}({\cal E}_m\circ {\cal E})\geq
\overline{F}(\sum_i \lambda_i {\cal E}_i\circ {\cal E})=
\overline{F}({\cal E}^{qi}\circ {\cal E}),
\ee
which means, according to Definition \ref{defin}, that the quasi-inverse can always be taken as an extreme channel.
\end{proof}
The crucial difference between the qubit case and the higher dimensional case
is that for qubit channels quasi-inversion is unital and the extreme points
of the set of one-qubit unital maps are unitary channels, while this is no longer
the case for $d> 2$. Even more than that, not all extreme points of the set of
channels are known for $d>2$, not even for unital channels. \\

Note that  the linearity of $\overline{F}(\c E'\circ \c E)$ over $\c E'$ implies
that if $\c E^{qi}_1$ and $\c E^{qi}_2$ are both quasi-inversions of
$\c E$, then any convex combination of them,
$p\c E^{qi}_1+(1-p)\c E^{qi}_2$, is the inverse as well.
In accordance with the above theorem we arrive at
the following result.
\begin{corollary}\label{uniqueness}
The quasi-inverse
is either unique or an infinite number of them exist where
at least two of them are extreme channels.
\end{corollary}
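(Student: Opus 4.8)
The plan is to combine the convexity of the set of quasi-inverses, already noted in the remark preceding the statement, with the extremality supplied by Theorem~\ref{extreme-ness}, and then invoke standard finite-dimensional convexity. Denote by $S$ the set of all quasi-inverses of a fixed channel $\c E$, i.e. the set of maximizers of the functional $\c E'\mapsto\overline{F}(\c E'\circ\c E)$ over $\c C_d$. By Eq.~\eqref{relation of corrected fidelity} this functional equals $\frac{1}{d+1}\bigl(1+\frac1d\Tr\Phi'\Phi\bigr)$ and so is affine in $\c E'$; since $\c C_d$ is convex and compact, $S$ is a nonempty, convex, and compact subset of $\c C_d$ (it is the preimage of the maximal value under a continuous map).

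First I would settle the dichotomy. If $S$ is a singleton the quasi-inverse is unique and there is nothing further to prove. Otherwise $S$ contains two distinct elements $\c E^{qi}_1\neq\c E^{qi}_2$, and by the convexity recalled above the entire segment $p\c E^{qi}_1+(1-p)\c E^{qi}_2$, $p\in[0,1]$, lies in $S$; as the endpoints differ these points are pairwise distinct, so $S$ is infinite.

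Next I would locate the extreme quasi-inverses. The key structural observation is that, being the set of maximizers of an affine functional, $S$ is a \emph{face} of $\c C_d$: if $\c E'\in S$ is written as $\lambda\c E_a+(1-\lambda)\c E_b$ with $\c E_a,\c E_b\in\c C_d$ and $\lambda\in(0,1)$, then $\overline{F}(\c E'\circ\c E)=\lambda\overline{F}(\c E_a\circ\c E)+(1-\lambda)\overline{F}(\c E_b\circ\c E)$ forces both terms to attain the maximum, whence $\c E_a,\c E_b\in S$. Consequently every extreme point of $S$ is also an extreme point of $\c C_d$: were an extreme point of $S$ a nontrivial convex combination of two channels in $\c C_d$, the face property would place both in $S$, contradicting its extremality within $S$.

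Finally, since $S$ is a compact convex set in the finite-dimensional space of superoperators and is not a single point, the Krein--Milman (Minkowski) theorem guarantees that $S$ is the convex hull of its extreme points and therefore has at least two of them; by the previous paragraph these are extreme channels of $\c C_d$, yielding at least two extreme quasi-inverses. I expect the only delicate step to be the face argument identifying extreme points of $S$ with extreme points of $\c C_d$; the infinitude and the existence of two extreme points then follow from routine convexity, and the result is consistent with Theorem~\ref{extreme-ness}, which already furnishes one extreme quasi-inverse.
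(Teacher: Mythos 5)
Your proof is correct, and its core is the same as the paper's: the paper justifies the corollary by noting that linearity of $\overline{F}(\c E'\circ\c E)$ in $\c E'$ makes the set of quasi-inverses convex (hence infinite once it contains two points) and then simply appeals to Theorem~\ref{extreme-ness}. Where you genuinely go beyond the paper is in the last step. Theorem~\ref{extreme-ness} by itself only says that \emph{some} quasi-inverse can be taken extreme; applied to two distinct quasi-inverses it could a priori hand back the same extreme channel twice, so the claim ``at least two of them are extreme channels'' does not follow immediately from the paper's one-line argument. Your observation that the maximizer set $S$ is a \emph{face} of $\c C_d$ (so extreme points of $S$ are extreme in $\c C_d$), combined with the finite-dimensional Krein--Milman theorem (a compact convex set with more than one point has at least two extreme points), closes exactly this gap. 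In short: same strategy, but your write-up is the rigorous version of the paper's sketch, and the face argument is the piece worth keeping.
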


An immediate result of Theorem \ref{extreme-ness} is that the quasi-inversion
is not an involution, i.e. $(\c E^{qi})^{qi}\neq\c E$ for a general $\c E$.
It is because a quasi-inverse map is an extreme point. So
even if one takes into account the non-uniqueness of quasi-inverse,
see Remark \ref{remark} and Corollary \ref{uniqueness},
there are always non-extremal maps which are not quasi-inverse of any other maps.

\begin{proposition}
Let $\c E^{qi}$ denote the quasi-inverse of $\c E$. Then
 $(\c E^{qi}\circ\c E)^{qi}$ can be taken to be the identity.
\end{proposition}
\begin{proof}
The quasi-inverse $(\c E^{qi}\circ\c E)^{qi}$ is the map which maximizes
the input-output fidelity
\begin{equation}
\max_{\c E^{\prime\prime}}
\overline{F}\left(\c E^{\prime\prime}\circ(\c E^{qi}\circ\c E)\right)\leq
\max_{\c E'}\overline{F}\left(\c E^{\prime}\circ\c E\right)=
\overline{F}\left(\c E^{qi}\circ\c E\right).
\end{equation}
This inequality is saturated if we take $\c E^{\prime\prime}$ equal to the identity map.
\end{proof}
On the other hand, let $\mathfrak{C}_{\c E}$ denote the set of
quantum channels for which $\c E$ defines the quasi-inverse. We argue that such
a set is convex because for any $\c E_1,\c E_2\in \mathfrak{C}_{\c E}$ we
have $p\c E_1+(1-p)\c E_2\in\mathfrak{C}_{\c E}$, where $0\leq p\leq1$.
In this sense, $\mathfrak{C}_I$ is a special convex subset of quantum channels
which are not correctable, i.e. the identity map is its quasi-inverse.
According to the above proposition, applying the quasi-inversion
we actually send a given quantum channel to this special subset since
we cannot correct the fidelity afterwards.\\

As an example consider the tetrahedron of Pauli channels,
$\Phi=\sum_ip_i\Phi_i=
\sum_{i=0}^3 p_i \sigma_i\otimes\overline{\sigma}_i$
where $\sigma_0=I$ and other $\sigma_i$'s are the Pauli matrices.
According to the results of \cite{kbf}, the subset of Pauli channels for which
$\forall i\ \ p_0\geq p_i$ belongs to $\mathfrak{C}_I$. This is also 
a consequence of the result of Example \ref{ortho-unit} in the next section, since $\sigma_i$'s satisfy orthogonality. The set of non-correctable Pauli channels is
presented in  Fig~\ref{tet}.
\begin{figure}[h]
\centering
\includegraphics[scale=0.3]{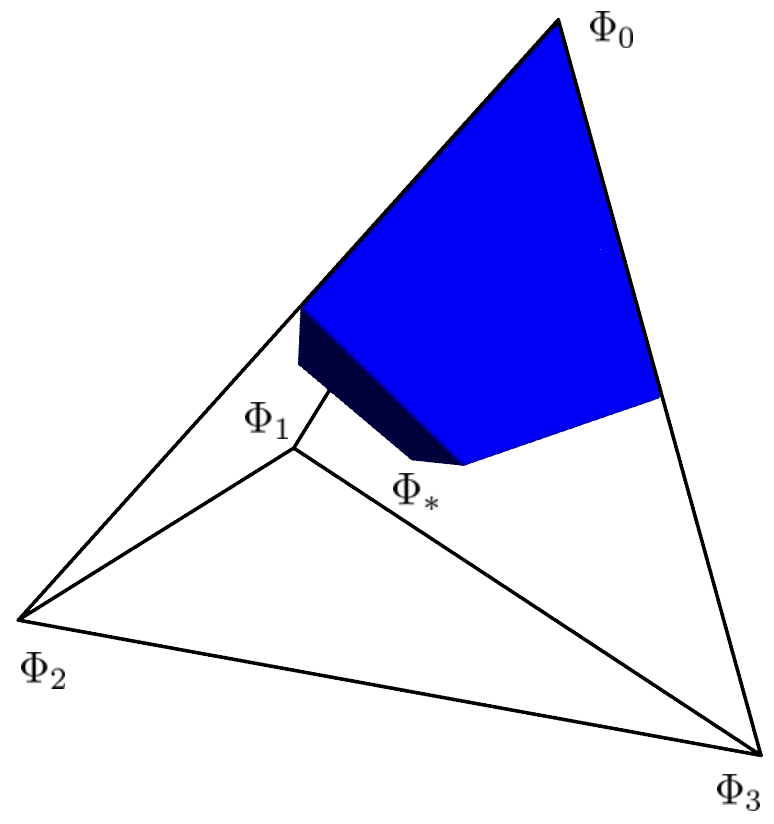}
\caption{The set of Pauli channels. The blue region shows $\mathfrak{C}_I$,
 the convex subset of the channels whose quasi-inversion is the identity map.
$\Phi_*$ is the center of the tetrahedron defined by
$\frac14\sum_{i=0}^3 \Phi_i$.}\label{tet}
\end{figure}

Two unitarily equivalent channels are defined as
$\c E_1$ and $\c E_2$ such that 
$\c E_2=\c E_ U\circ\c E_1\circ\c E_V$ 
where $\c E_ U$ and $\c E_ V$ are two unitary maps.
We use $\Phi_1$ and $\Phi_2$ to denote $\c E_1$ and $\c E_2$. 
Unitary maps, $\c E_U$ and $\c E_V$, are
represented by  $\Phi_{U}$ and $\Phi_V$. Hence,
\begin{eqnarray}\label{unitarily}
\nonumber		\Tr\left(\Phi^{qi}_2\Phi_2\right)&=&
		\max_{\Phi^{\prime\prime}}\Tr\left(\Phi^{\prime\prime}\Phi_2\right)=
		\max_{\Phi^{\prime\prime}}\Tr\left(\Phi^{\prime\prime}
		\Phi_U\Phi_1\Phi_V\right)\\&=&
		\max_{\Phi^{\prime\prime}}\Tr\left(\Phi_V\Phi^{\prime\prime}
		\Phi_U\Phi_1\right)=\max_{\Phi'}\Tr\left(\Phi^{\prime}
		\Phi_1\right)=\Tr\left(\Phi^{qi}_1\Phi_1\right),
	\end{eqnarray}
where we have used the fact that the set of all quantum channels,
$\c C_d$, is invariant under unitary transformations.
This relation proves the quasi-inverse channels of
$\c E_1$ and $\c E_2$ are also unitarily equivalents,
$\c E^{qi}_2=\c E_{V^{-1}}\circ\c E^{qi}_1\circ\c E_{ U^{-1}}$, and
they can reach the same amount of fidelity after correction.
This fact expands the result  of \cite{kbf} related to
unitarily equivalent channels of the form 
$\c E_2=\c E_U\circ\c E_1\circ\c E_{U^{-1}}$.\\

Here a crucial difference between the qubit channels and
higher dimensional channels shows up. In the case of qubits,
a complete characterization of qubit channels exists and it is known that
any qubit channel has the decomposition
${\cal E}={\cal E}_{U}\circ {\cal E}_c\circ {\cal E}_V$, where ${\cal E}_c$
is a canonical map with diagonal distortion matrix
$M_c=\diag (\lambda_1, \lambda_2,\lambda_3)$.
The
 signed singular values \cite{karol2} of the matrix  $M_c$
 confined inside a tetrahedron
$\Delta$ whose extreme points are unitary operations
$\rho\lo \sigma_i\rho \sigma_i^\dagger\ \ \  (i=0,1,2,3)$.
Therefore the task of finding quasi-inverse of any qubit channel is
considerably easy compared with higher dimensional channels
where such a canonical decomposition does not exist and even if there was,
with a presumably diagonal matrix $M$, we were faced with
a highly complex characterization of the vector  $\lambda$.
It is  known that the structure of the convex set of higher
dimensional channels is far more complex than that of a simple tetrahedron,
and in particular it is known that the extreme points of this set are not
necessarily unitary channels. A well-known counter example is the
Landau-Streater channel \cite{LS} which will be discussed
in Section \ref{inverseexamples}.
Let us now put general bounds on the improved average fidelity.

\begin{theorem}\label{upp-loww}
	The average input-output fidelity of a channel after correction has the
	following upper and lower bounds:
\begin{equation}\label{up-low}
\frac{f+1}{d+1}\leq\overline{F}(\c E^{qi}\circ\c E)
\leq\frac{p_m+1}{d+1},
\end{equation}
where
$f=\max_{\ket\beta}\bra\beta C_{\c E}\ket\beta$
is the fully
entangled fraction of the Choi matrix  ${C}_{\cal E}$ of the channel ${\c E}$
and $ |\beta\ra$ denotes a maximally entangled state,
while  $p_m$ is the maximal eigenvalue of ${C}_{\cal E}$.
\end{theorem}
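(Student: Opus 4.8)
The plan is to collapse the whole problem onto the entanglement fidelity of the optimized map and then bound it against the spectral data of the Choi matrix. By the monotone relation \eqref{input-output nd entanglement fidelity}, maximizing $\overline{F}(\c E'\circ\c E)$ over $\c E'$ is the same as maximizing $F_E(\c E'\circ\c E)$, so that
$$\overline{F}(\c E^{qi}\circ\c E)=\frac{1}{d+1}\Big(1+d\,\max_{\c E'}F_E(\c E'\circ\c E)\Big).$$
Writing $(\c E'\circ\c E)\otimes I=(\c E'\otimes I)\circ(\c E\otimes I)$ and using $(\c E\otimes I)(\project{\phi^+})=J_\c E=C_\c E/d$ from \eqref{ChoiMatrix} together with \eqref{entanglement fidelity}, I would reduce the quantity to $F_E(\c E'\circ\c E)=\bra{\phi^+}(\c E'\otimes I)[J_\c E]\ket{\phi^+}$. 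The whole theorem then amounts to sandwiching $\max_{\c E'}\bra{\phi^+}(\c E'\otimes I)[J_\c E]\ket{\phi^+}$ between $f/d$ and $p_m/d$, since $\lambda_{\max}(J_\c E)=p_m/d$ and $\max_\beta\bra\beta J_\c E\ket\beta=f/d$.

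For the lower bound I would simply exhibit a good candidate $\c E'$. Every maximally entangled state can be written (by the ricochet identity) as $\ket\beta=(U\otimes I)\ket{\phi^+}$ for some unitary $U$; taking $\c E'$ to be the unitary channel $\rho\mapsto U^\dagger\rho U$ and conjugating gives $\bra{\phi^+}(\c E'\otimes I)[J_\c E]\ket{\phi^+}=\bra\beta J_\c E\ket\beta=\tfrac1d\bra\beta C_\c E\ket\beta$. Because the maximization over $\c E'$ dominates any single choice, optimizing over $U$ (equivalently over all $\ket\beta$) yields $\max_{\c E'}F_E(\c E'\circ\c E)\ge\tfrac1d\max_\beta\bra\beta C_\c E\ket\beta=f/d$, which inserted into the display above is exactly the claimed lower bound $\tfrac{f+1}{d+1}$.

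For the upper bound I would show $\bra{\phi^+}(\c E'\otimes I)[J_\c E]\ket{\phi^+}\le\lambda_{\max}(J_\c E)=p_m/d$ for every CPT map $\c E'$. The quickest route uses the conditional min-entropy already set up in \eqref{min-ent}--\eqref{op-min-ent}: since $2^{-H_{\min}(B|A)_{J_\c E}}=d\max_{\c E'}F_E(\c E'\circ\c E)$ and $2^{-H_{\min}(B|A)_\rho}=\inf_{\sigma_A}\min\{\mu\ge0:\rho_{AB}\le\mu(\sigma_A\otimes I)\}$, evaluating the infimum at the feasible (not necessarily optimal) choice $\sigma_A=I/d$ forces $\mu=d\,\lambda_{\max}(J_\c E)=p_m$, hence $d\max_{\c E'}F_E\le p_m$. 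Alternatively, and more self-containedly, I would push $\c E'$ onto the other factor through its adjoint: $\bra{\phi^+}(\c E'\otimes I)[J_\c E]\ket{\phi^+}=\Tr\big(J_\c E\,\tau\big)$ with $\tau:=((\c E')^\dagger\otimes I)[\project{\phi^+}]$, where $\tau\ge0$ because the adjoint of a completely positive map is completely positive and $\Tr\tau=1$ because $\c E'$ is trace preserving (using $\sum_\beta(K'_\beta)^\dagger K'_\beta=I$); then $\Tr(J_\c E\,\tau)\le\lambda_{\max}(J_\c E)\,\Tr\tau=p_m/d$. Either way $\max_{\c E'}F_E\le p_m/d$ gives $\overline{F}(\c E^{qi}\circ\c E)\le\tfrac{p_m+1}{d+1}$.

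The main obstacle is the upper bound: one must rule out that a partial channel $\c E'\otimes I$ inflates the overlap with $\ket{\phi^+}$ beyond the largest eigenvalue of $J_\c E$. This is where the essentially semidefinite nature of the optimization is used, and the two interchangeable devices above---either the min-entropy feasibility bound at the maximally mixed $\sigma_A=I/d$, or the adjoint trick recasting the overlap as $\Tr(J_\c E\,\tau)$ with $\tau$ a genuine state---both pin the maximum below $\lambda_{\max}(J_\c E)$. The lower bound, by contrast, is routine once one recognizes that unitary pre-processing sweeps out precisely the maximally entangled states entering the definition of the fully entangled fraction $f$.
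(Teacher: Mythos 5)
Your proof is correct, and both of your routes to the upper bound go through. The core inequality is the same one the paper uses---$\Tr(AB)\le\lambda_{\max}(A)\,\Tr(B)$ for positive operators---but you reach the relevant pair of positive operators by a different path. The paper stays at the superoperator level: it converts $\Tr(\Phi_{\c E'}\Phi_{\c E})$ into $\Tr(\tilde C_{\c E'}C_{\c E})$ by explicit index reshuffling with the swap $S$, where $\tilde C_{\c E'}=(SC_{\c E'}S)^\T$ is positive with trace $d$, and for the lower bound simply sets $\tilde C_{\c E'}=d\project{\beta}$. You instead phrase everything through the entanglement fidelity and the Jamio{\l}kowski state: your operator $\tau=((\c E')^\dagger\otimes I)[\project{\phi^+}]$ is, up to the factor $d$, exactly the paper's $\tilde C_{\c E'}$, but you establish its positivity and unit trace from complete positivity of the adjoint map and from $\Tr\bigl(\sum_\gamma K'_\gamma (K'_\gamma)^\dagger\bigr)=\Tr\bigl(\sum_\gamma (K'_\gamma)^\dagger K'_\gamma\bigr)=d$, which avoids the index gymnastics and makes the ``state'' nature of the witness manifest. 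Your min-entropy route (feasibility of $\sigma_A=I/d$ in \eqref{min-ent}) is a genuinely independent second derivation of the upper bound, which the paper only gestures at through \eqref{op-min-ent}. For the lower bound you also make explicit something the paper leaves implicit, namely that the choice $\tilde C_{\c E'}=d\project{\beta}$ is realized by a legitimate channel: the unitary pre-processing $\rho\mapsto U^\dagger\rho U$ with $\ket\beta=(U\otimes I)\ket{\phi^+}$. Both arguments yield identical bounds; yours is more self-contained about the admissibility of the optimizing channels, while the paper's is more compact notationally.
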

Before proceeding with the proof let us mention in view of Eq.
\eqref{input-output nd entanglement fidelity} and the definition of
fully entangled fraction, the lower bound in above equation is actually
an upper bound for the input-output fidelity before we correct it with
quasi-inverse map, $\overline{F}(\c E)$.
\begin{proof}
To prove the upper bound, we note that 
\begin{eqnarray}
\Tr(\Phi_{{\cal E}'}\Phi_{\cal E})&=&\sum (\Phi_{{\cal E}'})_{ij,kl}(\Phi_{\cal E})_{kl,ij}=\sum (C_{{\cal E}'})_{ik,jl}(C_{\cal E})_{ki,lj}\nonumber\\&=&
\sum (SC_{{\cal E}'}S)_{ki,lj}(C_{\cal E})_{ki,lj}=\Tr(\tilde{C}_{{\cal E}'}C_{\cal E})\nonumber\\ &\leq& p_m \Tr(\tilde{C}_{{\cal E}'})=
p_m\Tr(C_{{\cal E}'})=dp_m,
\end{eqnarray}
where $S=\sum_{i,j}|i,j\ra\la j,i|$ is the swap operator, 
$\tilde{C}_{\c E^\prime}=(SC_{\c E^\prime}S)^\T$,
and $p_m$ is the largest eigenvalue of the Choi matrix 
of the channel ${\cal E}$. In writing these 
equations we have used Eqs. \eqref{reshuffling} and \eqref{reshuff}, and the fact that $C_\c E$ and $\tilde{C}_{\c E^\prime}$ are 
Hermitian and positive matrices.
The above inequality leads to the following upper bound for the improved average fidelity
\be
\overline{F}({\cal E})\leq \frac{p_m+1}{d+1}.
\ee
In order to obtain a lower bound, we note that in view of
(\ref{quasi definition})
\be
\Tr(\tilde{C}_{{\cal E}'}C_{\cal E})=\Tr(\Phi_{{\cal E}'}\Phi_{\cal E})\leq
 \Tr(\Phi_{\cal E}^{qi}\Phi_{\cal E})\ \ \  \forall \ { C}_{{\cal E}'},
\ee
and choose for $\tilde{C}_{{\cal E}'}$ to be equal to $d|\beta\ra\la \beta|$,
where $|\beta\ra$ is a maximally entangled state.  This gives the lower bound.
\end{proof}

One of the main differences between the qubit case and the
higher dimensional channels is that the singular values of $M$
in the qubit case are always less than or equal to one. As a result,
the Bloch vector cannot be stretched by applying the
distortion matrix $M$ in the qubit case, while as we will show in an explicit example,
this is not necessarily the case for higher dimensional channels.
	\begin{figure}[H]
	$${\includegraphics[scale=0.3]{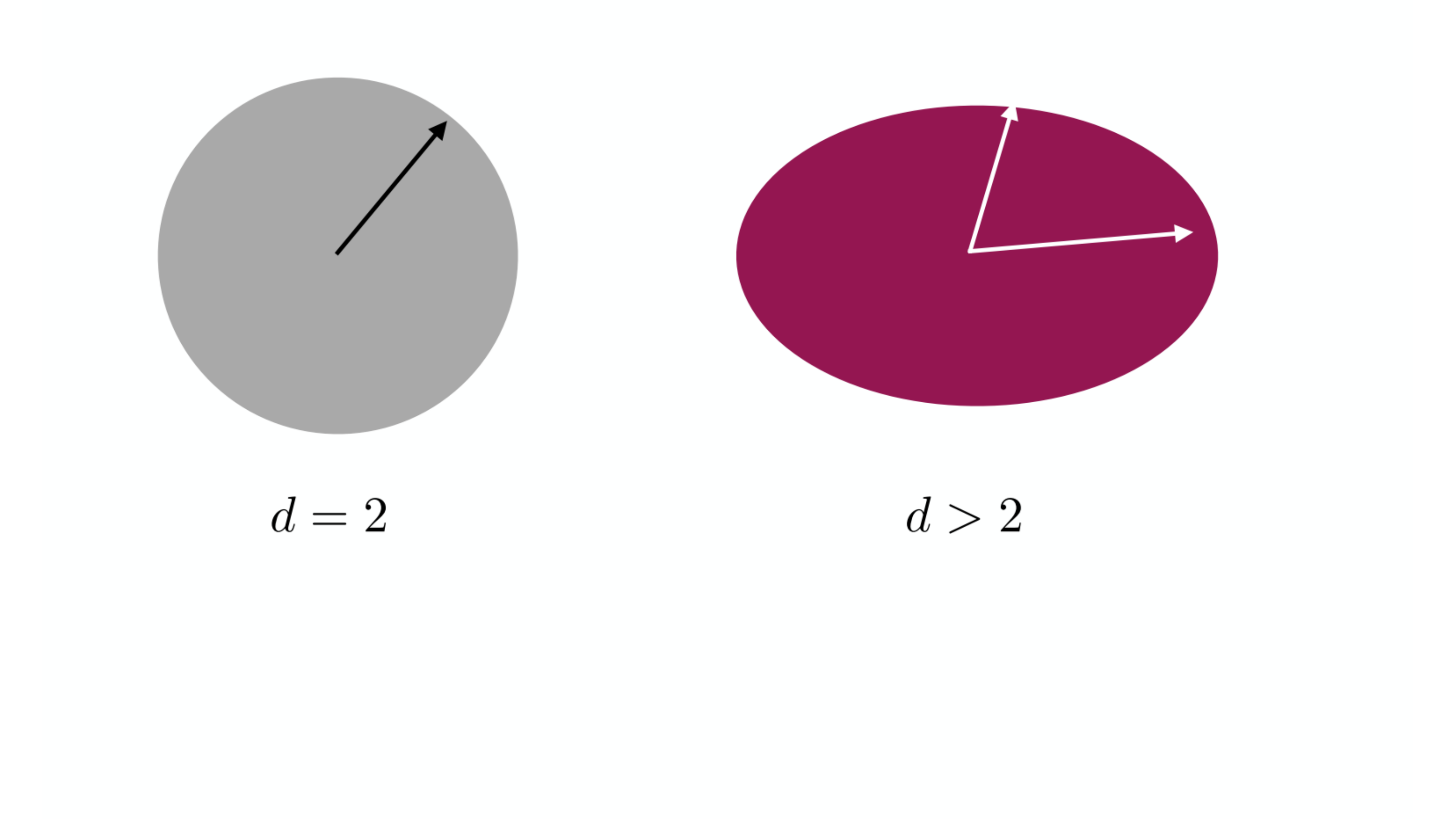}}$$\vspace{-2.5cm}
	\caption{(color online) Left:  In $d=2$ every Bloch vector ${\bf r}$ is rotated or
	shrunk when it is acted on by the distortion matrix of
	a qubit channel. Right:
	 for $d>2$, the distortion matrix of a quantum channel can also stretch
	 the Bloch vector. }
	\label{boundary}
	\end{figure}
There are certain channels whose distortion matrix $M$
in Eq. (\ref{superoperator-affine})
can stretch the generalized Bloch vector {\bf r}. This is due to the
non-spherical shape of the space of quantum states in higher dimensions, see Fig. \ref{boundary}.
We will elaborate on this point and its consequences in Appendix \ref{A2}.

\begin{proposition}
The quasi-inverse of the tensor product of quantum channels is
the tensor product of the quasi-inverses, i.e. if $\c E=\c E_1\otimes\c E_2$
then $\c E^{qi}=\c E_1^{qi}\otimes\c E_2^{qi}$.
\end{proposition}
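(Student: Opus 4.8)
The plan is to carry out the optimization in terms of the entanglement fidelity $F_E$ rather than $\overline{F}$ directly, since by \eqref{input-output nd entanglement fidelity} the two are monotonically related and a map maximizing one maximizes the other. Write $d=d_1d_2$ with $\c E_1,\c E_2$ acting on $H_{d_1},H_{d_2}$. First I would record two elementary multiplicativity facts. From the Kraus expression \eqref{entanglement fidelity}, $F_E(\c E)=\frac{1}{d^2}\sum_\alpha|\Tr K_\alpha|^2$, and the Kraus operators of a tensor-product channel are $K^{(1)}_\alpha\otimes K^{(2)}_\beta$ with $\Tr\!\big(K^{(1)}_\alpha\otimes K^{(2)}_\beta\big)=\Tr K^{(1)}_\alpha\,\Tr K^{(2)}_\beta$; hence $F_E(\c A\otimes\c B)=F_E(\c A)F_E(\c B)$. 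Combined with $(\c E_1'\otimes\c E_2')\circ(\c E_1\otimes\c E_2)=(\c E_1'\circ\c E_1)\otimes(\c E_2'\circ\c E_2)$ this yields
\[
F_E\big((\c E_1^{qi}\otimes\c E_2^{qi})\circ(\c E_1\otimes\c E_2)\big)=F_E(\c E_1^{qi}\circ\c E_1)\,F_E(\c E_2^{qi}\circ\c E_2).
\]
This shows the candidate map $\c E_1^{qi}\otimes\c E_2^{qi}$ attains the product of the two individually optimal fidelities; equivalently, restricting the maximization in \eqref{relation of corrected fidelity} to product maps already produces this value as a lower bound for $\max_{\c E'}F_E(\c E'\circ\c E)$.

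The substance of the proof is the reverse inequality, that even an entangling correction map $\c E'$ cannot beat this value. For this I would invoke the min-entropy characterization \eqref{op-min-ent}, namely $d\max_{\c E'}F_E(\c E'\circ\c E)=2^{-H_{\min}(B|A)_{J_\c E}}$, applied to the Jamio\l{}kowski state. Regrouping the four tensor factors into $(A_1B_1)$ and $(A_2B_2)$, the product channel satisfies $J_{\c E_1\otimes\c E_2}=J_{\c E_1}\otimes J_{\c E_2}$ and the conditioning systems split as $A=A_1A_2$, $B=B_1B_2$. It then suffices to establish
\[
2^{-H_{\min}(B_1B_2|A_1A_2)_{J_{\c E_1}\otimes J_{\c E_2}}}\ \le\ 2^{-H_{\min}(B_1|A_1)_{J_{\c E_1}}}\,2^{-H_{\min}(B_2|A_2)_{J_{\c E_2}}},
\]
which by \eqref{op-min-ent} is exactly the statement $\max_{\c E'}F_E(\c E'\circ\c E)\le \max_{\c E_1'}F_E(\c E_1'\circ\c E_1)\cdot\max_{\c E_2'}F_E(\c E_2'\circ\c E_2)$.

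This last inequality is the easy direction of the additivity of the conditional min-entropy, and I would prove it straight from the definition \eqref{min-ent}: if $\sigma_{A_i}$ and $\lambda_i$ are feasible for $J_{\c E_i}$, that is $J_{\c E_i}\le 2^{\lambda_i}\,\sigma_{A_i}\otimes I_{B_i}$, then $\sigma_{A_1}\otimes\sigma_{A_2}$ with $\lambda=\lambda_1+\lambda_2$ is feasible for $J_{\c E_1}\otimes J_{\c E_2}$, since for positive operators $X_i\le Y_i$ one has $X_1\otimes X_2\le Y_1\otimes Y_2$ (write $Y_1\otimes Y_2-X_1\otimes X_2=(Y_1-X_1)\otimes Y_2+X_1\otimes(Y_2-X_2)\ge0$). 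Taking infima over $\sigma_{A_i}$ bounds $-H_{\min}$ of the product by the sum, which is the required submultiplicativity. With the optimal individual correction maps chosen, this forces equality in both directions, so $\c E_1^{qi}\otimes\c E_2^{qi}$ maximizes $F_E(\c E'\circ\c E)$ and is a quasi-inverse of $\c E_1\otimes\c E_2$ in the sense of Definition \ref{defin}.

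I expect the only genuinely delicate point to be the step that excludes entangling correction maps. The multiplicativity of $F_E$ and the product-map lower bound are routine, but without passing to the min-entropy representation it is not obvious that the global optimum over the full convex set $\c C_d$, whose extreme points may be entangling (as the Landau-Streater example warns), cannot exceed the product value. Moving the constraint into the dual/SDP picture of \eqref{min-ent}, rather than optimizing in the primal space of channels, is precisely what makes the obstruction tractable, since there the product ansatz enters as an easy feasibility construction.
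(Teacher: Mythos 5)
Your proof is correct and follows essentially the same route as the paper: both pass to the conditional min-entropy via Eq.~\eqref{op-min-ent}, use the factorization $J_{\c E_1\otimes\c E_2}=J_{\c E_1}\otimes J_{\c E_2}$, and conclude from additivity of $H_{\min}$ on product states. The only difference is that the paper cites that additivity from \cite{KRS09} as a black box, whereas you prove both directions yourself --- subadditivity of $-H_{\min}$ from the product feasibility ansatz in \eqref{min-ent}, and achievability from multiplicativity of $F_E$ over product correction maps --- which makes your argument self-contained but does not change the underlying approach.
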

\begin{proof}
Let $\rho_{AA'BB'}=\rho_{AB}\otimes\rho_{A'B'}$, then one can show
the min-entropy \eqref{min-ent} is additive \cite{KRS09}, i.e.
$H_{\min}(BB'|AA')=H_{\min}(B|A)+H_{\min}(B'|A')$.
The proof of the proposition then becomes straightforward using
Eq.~\eqref{op-min-ent} and the fact that for the tensor product of quantum
channels, the Jamio{\l}kowski state is in the tensor product shape.
\end{proof}
Applying this proposition we show that for
$N$ copies of a given quantum channel in a multipartite setting
\begin{equation}\label{npart}
\max_{\c E'}\overline{F}(\c E'\circ\c E^{\otimes N})=
\overline{F}\left((\c E^{qi})^{\otimes N}\circ\c E^{\otimes N}\right)
\leq\left(\overline{F}(\c E^{qi}\circ\c E)\right)^N.
\end{equation}
To prove this equation we note that for any quantum channel $\c Q$
\be
\Tr \Phi_{{\cal Q}^{\otimes N}}=(\Tr\Phi_{\cal Q})^N.
\ee
Let us apply $\Tr(\Phi_{\cal Q})=x$ for simplicity of notation.
In view of \eref{fbar}, we prove whenever 
$1\leq x\leq d^2$ the following inequality holds:
\be\label{nq}
\overline{F}(\c Q^{\otimes N})=\frac{d^N+x^N}{d^N(d^N+1)}\leq (\frac{d+x}{d(d+1)})^N=\left(\overline{F}(\c Q)\right)^N.
\ee
To prove this relation we note that both sides are increasing functions of $x$. 
 When $x=1$ or $x=d^2$  both sides are equal. Comparing their derivatives with 
 respect to $x$ at $x=1$, we see the right hand side function grows faster at 
 $x=1$ which proves the inequality \eqref{nq}.  To prove Eq. \eqref{npart}, it 
 remains to show that $1\leq \Tr(\Phi_{\c E^{qi}\circ\c E})\leq d^2$ for 
 the composition of any quantum channel and its quasi-inverse. 
 The later is, however, a consequence of Eq. \eqref{up-low} and the fact 
 that  fully entangled fraction of the Choi matrix is greater than $1/d$.\\
  
\section{Examples}
\label{inverseexamples}
Taking into account the  upper and lower bounds given in (\ref{up-low})
and the general theorems of the previous section, in this section
we will  consider a few classes of examples,
as  the optimization problem (\ref{relation of corrected fidelity}) cannot
be solved analytically in the general case.
In the case of single-qubit systems a complete classification of quantum
channels \cite{rus,fuj} leads to an explicit description
of their quasi-inverse which can be unitary \cite{kbf}.
For  higher dimensional channels,  the quasi-inverse may not
necessarily be a unitary map. Thus identifying the quasi-inverse
is related to finding the extreme points of the set  $\c C_d$
of quantum channels, which remains an open problem.

\begin{example}[\bf Mixed unitary channels with orthogonal unitaries]
\label{mixedU}\normalfont\label{ortho-unit}
A mixed unitary channel is defined  \cite{AL07,aud} by
\begin{equation}\label{mixed unitary}
\c E(\rho)=\sum_{\a=1}^{r}q_\alpha V_\a\rho V_\a^\dagger,
\end{equation}
where $\{V_\a\}_{\a=1}^{r}$ is an arbitrary set of
$r$ unitary transformations.
We restrict ourselves to the case where the unitaries are orthogonal
with respect to the Hilbert-Schmidt scalar product,
$\Tr(V_\a^\dagger V_\beta)=d\delta_{\a,\beta}$.
In analogy to the construction of an approximate
time reversal proposed in \cite{karol},
the quasi-inverse of $\c E$ is then the unitary channel
${\cal V}_m^\dagger:\rho\lo V_m^\dagger\rho V_m$,
where $V_m$ corresponds to the largest weight $q_m$ in
the mixture (\ref{mixed unitary}).
To see this  note that the Choi matrix of this channel reads
\be
C_{\cal E}=\sum_\a q_\a|V_\a\ra\la V_\a|,
\ee
where we have used the correspondence (\ref{vectorized}).
In view of the orthogonality of the vectors $|V_\a\ra$,
this is then the spectral decomposition of the Choi matrix $C_{\cal E }$
with eigenvalues equal to $p_\alpha=dq_\alpha$
(note that $\ket{V_\alpha}$ is not normalized).
Let $q_m=p_m/d$ be the largest of coefficients in \eqref{mixed unitary}. In view of the diagonal nature of the Choi matrix, $p_m$ is the largest eigenvalue of the Choi matrix. Moreover, by taking $|\beta\ra=|V_m\ra$, one sees that $f=p_m$. So the upper and lower bounds of \eqref{up-low} coincide and we find the quasi-inverse is 
the unitary map induced by $V_m^\dagger$, i.e.
${\cal E}^{qi}=\c E_{V_m^\dagger}$, with

\be
	\overline{F}({\cal E}^{qi}\circ {\cal E})=\frac{p_m+1}{d+1}.
\ee
\end{example}

Let us emphasize again that this result is valid only for mixture of unitary maps
corresponding to unitary matrices mutually orthogonal
in sense of the Hilbert-Schmidt scalar product.
If this assumption is not satisfied
the quasi-inverse is not the inverse of one of  the unitaries.
As the Example \ref{A3} shows.

\begin{example}[\bf Uniform mixture of orthogonal conjugations]\normalfont
\label{example LS} In the following two examples, we will bring some quantum channels which are self-inverse. 
\begin{theorem}\label{t10}
Let $\c E$ be a unital channel obtained by the uniform mixture of conjugation 
(not necessarily unitary ones) by matrices which are orthogonal to each other.
Such a channel is specified by
\begin{equation}\label{unif.con.}
\c E(\rho)=\sum_{\alpha=1}^qX_{\a}\rho X^\dagger_{\a},
\end{equation}
where 
\begin{equation}
\sum_{\a=1}^qX_{\a}^\dagger X_{\a}=
\sum_{\a=1}^qX_{\a}X_{\a}^\dagger=I_d,\qquad{\rm and}\qquad
\Tr\left(X_{\a}^\dagger X_{\b}\right)=\frac{d}{q}\delta_{\a\b}.
\end{equation}
We now show that the quasi-inverse of this map is given by its dual, i.e. 
$\c E^{qi}(\rho)=\sum X_{\a}^\dagger\rho X_{\a}$.
\end{theorem}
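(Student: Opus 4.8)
The plan is to exhibit the claimed dual map as an explicit candidate and then show that it saturates the upper bound of Theorem~\ref{upp-loww}, which holds for every correcting channel $\c E'$; saturation then forces the candidate to be optimal and hence a genuine quasi-inverse. First I would check that $\c E^{qi}(\rho)=\sum_\a X_\a^\dagger\rho X_\a$ is a legitimate CPT map: its Kraus operators are $\{X_\a^\dagger\}$, and trace preservation requires $\sum_\a(X_\a^\dagger)^\dagger X_\a^\dagger=\sum_\a X_\a X_\a^\dagger=I_d$, which is exactly the second unitality condition in the hypothesis, while complete positivity is automatic from the Kraus form.

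Next I would compute the Choi matrix of $\c E$. Following Example~\ref{ortho-unit}, the correspondence (\ref{vectorized}) gives $C_{\c E}=\sum_\a|X_\a\ra\la X_\a|$, and the orthogonality relation $\la X_\a|X_\b\ra=\Tr(X_\a^\dagger X_\b)=\frac{d}{q}\delta_{\a\b}$ shows the vectors $|X_\a\ra$ are mutually orthogonal, each of squared norm $d/q$. Hence $C_{\c E}$ is a sum of orthogonal rank-one positive operators, so every nonzero eigenvalue equals $d/q$ and in particular $p_m=d/q$. The upper bound of Theorem~\ref{upp-loww} then gives
\begin{equation}
\overline{F}(\c E'\circ\c E)\leq\frac{p_m+1}{d+1}=\frac{d+q}{q(d+1)}\qquad\forall\,\c E'.
\end{equation}

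It then remains to verify that the dual channel attains this value. Using the Kraus form of the average fidelity in (\ref{relation of corrected fidelity}) with $K_\a=X_\a$ and $K'_\b=X_\b^\dagger$, the orthogonality relation yields $\sum_{\a,\b}|\Tr X_\b^\dagger X_\a|^2=\sum_{\a,\b}(d/q)^2\delta_{\a\b}=d^2/q$, so that
\begin{equation}
\overline{F}(\c E^{qi}\circ\c E)=\frac{1}{d+1}\left(1+\frac{1}{d}\cdot\frac{d^2}{q}\right)=\frac{d+q}{q(d+1)}.
\end{equation}
Since this matches the universal upper bound, no correcting channel can do strictly better, and the dual is therefore a quasi-inverse of $\c E$.

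The argument is largely book-keeping once the spectrum of $C_{\c E}$ is pinned down; the one point requiring care is the consistent use of the normalization $\la X_\a|X_\a\ra=\Tr(X_\a^\dagger X_\a)=d/q$ both in identifying $p_m$ and in evaluating the fidelity sum, so that the bound and the achieved value coincide exactly rather than up to a spurious constant. I would also remark that this determines the optimal fidelity but not the uniqueness of the quasi-inverse, and that when the $X_\a$ can be chosen Hermitian the dual coincides with $\c E$ itself, giving the self-quasi-inverse behaviour anticipated just before the theorem.
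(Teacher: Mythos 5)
Your proposal is correct and follows essentially the same route as the paper's proof: identify the spectral decomposition of $C_{\c E}$ from the orthogonality of the vectors $|X_\a\ra$ to get $p_m=d/q$, then show the dual channel saturates the upper bound of Theorem \ref{upp-loww} via $\sum_{\a,\b}|\Tr(X_\a^\dagger X_\b)|^2=d^2/q=dp_m$. Your additional explicit check that $\sum_\a X_\a X_\a^\dagger=I_d$ guarantees the dual is trace-preserving is a point the paper handles only by remarking that unitality of $\c E$ makes $\Phi_{\c E}^\dagger$ a valid channel, but the substance is identical.
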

\begin{proof}
The Choi matrix of the channel $\c E$ \eqref{unif.con.} is 
\begin{equation}
C_{\c E}=\sum_{\a}\project{X_{\a}},
\end{equation}
where $\ket{X_{\a}}$ is based on the correspondence of Eq. \eqref{vectorized} and it
fulfills $\bra{X_{\alpha}}X_\beta\rangle=\frac{d}{q}\delta_{\alpha\beta}$.
Thus we find that the above equation is indeed the spectral decomposition 
of the degenerated Choi matrix with the $q$-fold degenerated 
largest eigenvalue equal to $p_m=\frac{d}{q}$. The superoperator of this channel
is given by 
\begin{equation}
\Phi_\c E=\sum_{\a =1}^{q}X_\a\otimes X_\a^\ast.
\end{equation}
Note that since $\c E$ is assumed to be unital, $\Phi^\dagger_{\c E}$ is also a 
valid quantum channel corresponding to the dual of $\c E$. Composing 
$\Phi_\c E$ and $\Phi_\c E^\dagger$, we get
\begin{equation}
\Tr\left(\Phi^\dagger_\c E\Phi_\c E\right)=
\sum_{\a,\b=1}^q|\Tr\left(X_\a^\dagger X_\b\right)|^2=
\sum_{\a=1}^q (\frac{d}{q})^2=\frac{d^2}{q}=dp_m.
\end{equation}
So by such a composition the upper bound of Eq. \eqref{up-low} 
is obtained, which completes the proof. 
\end{proof}

The fidelity after correction for the channel \eqref{unif.con.} 
then reads 
$\overline{F}(\c E^\dagger\circ\c E)=\frac{d+q}{q(d+1)}$, while
for the case with $\Tr(X_\a)=0$ we have
$\overline{F}(\c E)=\frac{1}{d+1}$ before applying quasi-inversion 
which admits significant improvement specially in higher dimensions 
and when $q$ is not large.
Note that for $q=1$ the average fidelity after correction is $1$,
as it is expected.
Moreover, it is obvious that if the operators $X_\a$ are 
Hermitian, $\c E$ is its own quasi-inverse. Two explicit examples for this case
 are provided in what follows.\\
 
 {\bf Example 2.1 (Landau-Streater (LS) channel).}
Consider the Landau-Streater channel  \cite{LS},
\begin{equation}\label{LSchannel}
\c E(\rho)=\frac{1}{j(j+1)}\left(J_1\rho J_1+J_2\rho J_2+
J_3\rho J_3\right),
\end{equation}
where $J_i$ are the Hermitian generators of $SU(2)$ in
its irreducible representation in dimension $d=2j+1$ and they satisfy
$\Tr(J_m^\dagger J_n)  =\frac{1}{3}j(j+1)(2j+1)\delta_{mn}$.
It is clear then that LS channel is a special case of Eq. \eqref{unif.con.} with $q=3$. 
It is worth mentioning that the Landau-Streater channel
 (\ref{LSchannel})
is an extreme point of $\c C_d$ when $d\geq3$ \cite{LS}.
However for  $d=2$,
the Landau-Streater channel is not an extreme point of $\c C_2$,
so in view of Theorem \ref{extreme-ness}
there exist several quasi-inverse channels.\\

{\bf Example 2.2 (Generalized Landau-Streater channel).}
Let $G$ be a  Lie group with dimension $n$, with Lie algebra generators $A_\a$  
where $1\leq\a\leq n$. Let $D_\mu$ be an irreducible unitary representation of 
the Lie algebra. We define the generalized Landau-Streater channel as 
\begin{equation}
\c E_\mu(\rho)=\frac{1}{c_\mu}\sum_{\a=1}^nD_\mu(A_\a)\rho D_\mu(A_\a)^\dagger,
\end{equation}
where $c_\mu$ is the value of the second Casimir operator in this representation
\begin{equation}
c_\mu I=\sum_{\a=1}^n D_\mu(A_\a)^2.
\end{equation}
The generators can be made orthogonal so that
\begin{equation}
\Tr\left[D_\mu(A_\a)D_\mu(A_\b)\right]=\frac{c_\mu d_\mu}{n}.
\end{equation}
Such a channel satisfies the assumption of Theorem \ref{t10} and is hence its 
own inverse.
\end{example}

\begin{example}[\bf The  transverse-depolarizing and depolarizing channel]
\normalfont
The transverse-depolarizing channel is defined as
$\c E_w^{td}(\rho)=(1-w)\rho^\T+\frac{w}{d} \Tr(\rho) I_d$, where
$\frac{d}{d+1}\leq w\leq \frac{d}{d-1}$ to satisfy complete positivity.
The superoperator of this channel is given by:
\begin{equation}\label{td-map}
\Phi_w^{td}=(1-w)S+w\project{\phi^+},
\end{equation}
and the Choi matrix is equal to 
\begin{equation}
C_w^{td}=(1-w)S+\frac{w}{d}I\otimes I.
\end{equation}
$S$ is the swap operator introduced in the proof of Theorem 
\ref{upp-loww}. It is a Hermitian unitary 
so its eigenvalues are $\pm1$. Thus, the largest eigenvalue of
$C_w^{td}$ is 
\begin{equation}
\begin{cases}
p_m=1-w+\frac{w}{d},\quad\quad\frac{d}{d+1}\leq w\leq 1,\\ \\
p_m=w-1+\frac{w}{d},\quad\quad1\leq w\leq\frac{d}{d-1}.
\end{cases}
\end{equation} 
Let $\c E_\pm(\rho)=\frac{1}{d\pm1}\left(\Tr(\rho) I_d\pm\rho^\T\right)$. 
The channel $\c E_+$ is the transverse-depolarizing channel with
$w=\frac{d}{d+1}$ and $\c E_-$, also called Werner-Holevo channel 
\cite{WH02}, is equal to $\c E_w^{td}$ for
$w=\frac{d}{d-1}$. Indeed, any transverse-depolarizing channel 
is a convex combination of $\c E_+$ and $\c E_-$. 
Now it is straightforward to see 
$\overline{F}(\c E_+\circ\c E_w^{td})$ saturates the upper bound 
of \eref{up-low} when $\frac{d}{d+1}\leq w\leq 1$, so $\c E_+$ is
the quasi-inverse on this domain. However, we find in this region 
the equality
$\overline{F}(\c E_+\circ\c E_w^{td})=\overline{F}(\c E_w^{td})$
holds. This implies that for this range of $w$ the transverse-depolarizing
channel is  non-correctable. On the other hand, for $1\leq w\leq\frac{d}{d-1}$, the upper bound of \eref{up-low} 
is achievable by $\c E_-\circ\c E_w^{td}$ confirming 
that $\c E_-$ is the quasi-inverse in this interval and it can indeed
improve the fidelity by $\Delta\overline{F}=\frac{2(w-1)}{d+1}$.
The parameter space of the transverse-depolarizing channel is 
shown in Fig. \ref{td}.\\
\begin{figure}
\centering\vspace{-1cm}
\includegraphics[scale=0.3]{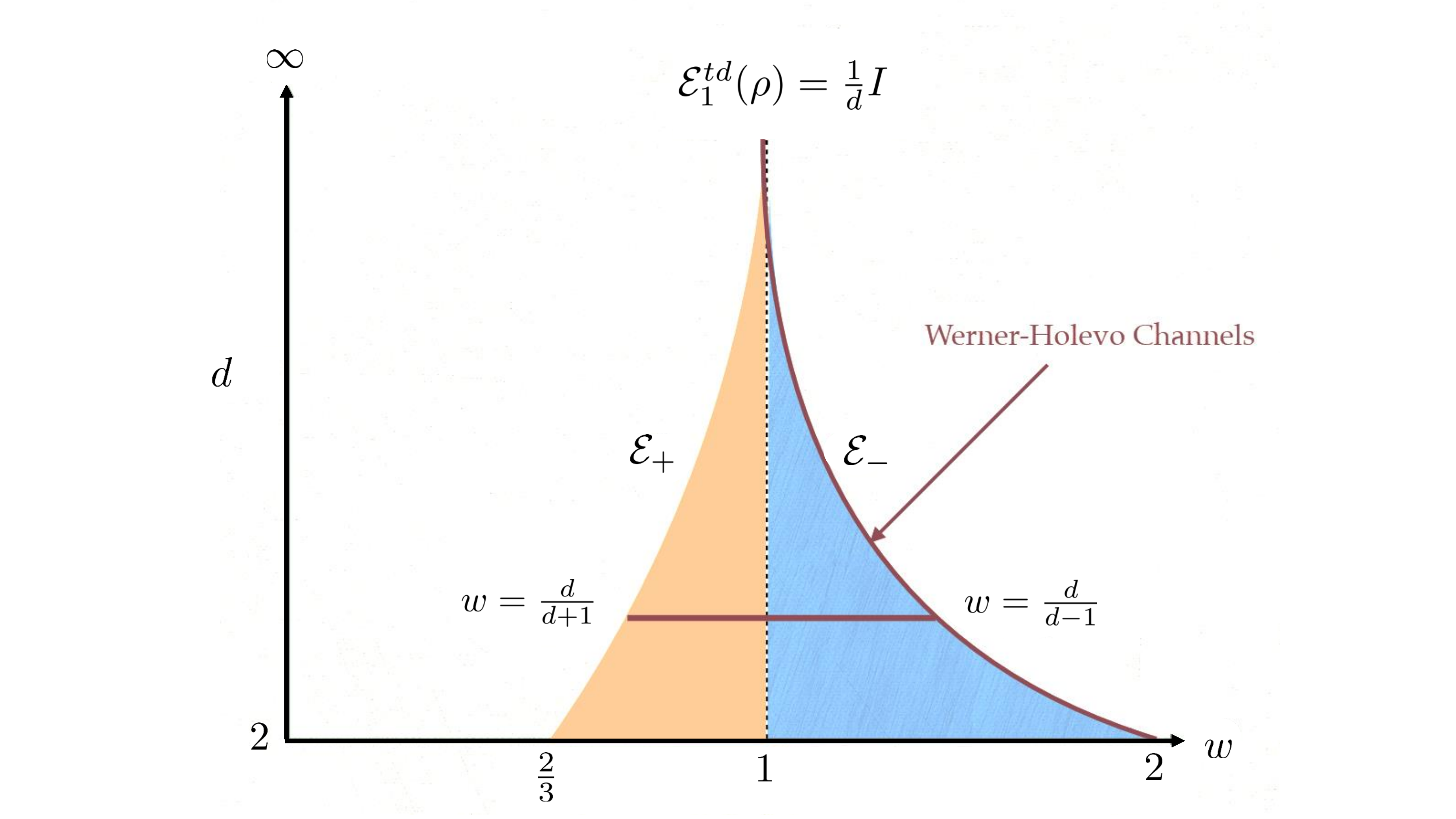}
\caption{(Color Online)A rough sketch of the parameter space of the transverse-depolarizing channel \eqref{td-map}. The colored region defines the
valid range of the parameters for this map to be complete positive. 
The orange part belongs to the set of non-correctable maps, also can be quasi-inverted by $\c E_+$, the boundary of the orange part.
The blue part is the set of correctable channels, invertible by
$\c E_-$, the boundary of the blue part. Fixing $d$, any 
transverse-depolarizing channel is a convex combination of 
$\c E_+$ and $\c E_-$.}\label{td}
\end{figure}

We can also consider the depolarizing channel defined as 
$\c E_q^d=(1-q)\rho+\frac{q}{d}\Tr(\rho)I_d$, where $q$ is a probability.
The superoperator and the Choi matrix are given by
\begin{equation}
\Phi_q^{d}=(1-q)I\otimes I+q\project{\phi^+},\quad
C_q^{d}=d(1-q)\project{\phi^+}+\frac{q}{d}I\otimes I.
\end{equation}
It is now obvious that the largest eigenvalue of the Choi matrix is given by
$p_m=d(1-q)+\frac{q}{d}$ which is actually equal to $\Tr(\Phi_q^{d})/d$. 
This implies that quasi-inverse is the identity map. So
the depolarizing channel is not correctable.
\end{example}

\begin{example}[\bf Covariant Channels]\normalfont
	
A quantum channel ${\cal E}:L(H_d)\lo L(H_d)$ is called {\sl covariant}
with respect to a group $G$, if the following property holds:
\be
	{\cal E}(U(g)\rho U^\dagger(g))=V(g)	{\cal E}(\rho)V^\dagger(g),
	\h \forall \ g \in G.
\ee
in which $U(g)$ and $V(g)$ are two not necessarily equivalent representations
of $g$. From the vectorization (\ref{vectorized}), we find
\be
	\Phi_{{\cal E}}U(g)\otimes U^*(g)=V(g)\otimes V^*(g)\Phi_{\cal E}.
\ee 
This property has implications for the quasi-inverse. To see this we note that
\ba
{\cal E}^{qi}&=&
\underset{\c E'}{\mathrm{argmax}}\  \Tr(\Phi_{\cal E'}\Phi_{\cal E})=
\underset{\c E'}{\mathrm{argmax}}\  \Tr\Big(\Phi_{{\cal E}'}(V^\dagger(g)
\otimes V^T(g))\Phi_{\cal E}(U(g)\otimes U^*(g))\Big)\cr
&=&\underset{\c E'}{\mathrm{argmax}}\   \Big((U(g)\otimes
U^*(g))\Phi_{{\cal E}'}(V^\dagger(g)\otimes V^T(g))\Phi_{\cal E})\Big),
\ea
from which we obtain
\be
\Phi_{{\cal E}^{qi}}V(g)\otimes V^*(g)=U(g)\otimes U^*(g)\Phi_{{\cal E}^{qi}}.
\ee
Equivalently this means that
\be\label{cov1}
{\cal E}^{qi}(V(g)\rho V^\dagger(g))=U(g)	{\cal E}^{qi}(\rho)U^\dagger(g),\h \forall \ g \in G.
\ee
Therefore the quasi-inverse of a covariant channel is also covariant except
that the order of the two representations of the group is reversed.
\end{example}
\begin{example}[\bf mixed unitary channels with commuting unitaries]\normalfont
\label{A3}
In contrast to the qubit case where a classification of quantum channels
facilitates the  study of various aspects of them including their quasi-inverses,
for higher dimensional channels, many aspects do not easily yield an
analytical treatment. In this section we pose a simple problem whose 
solution, as we will see, is quite nontrivial and yet instructive.
We have seen in  Example \ref{mixedU}, that the quasi-inverse of a
 mixed unitary channel of the form
\be\label{pmax}
{\cal E}(\rho)=\sum_k p_k U_k\rho U_k^\dagger
\ee
when $U_k's$ are orthogonal to each other, is the unitary channel
${\cal E}^{qi}(\rho)=U_{max}^\dagger \rho U_{max}$,
where $U_{max}$ is the unitary corresponding to the maximum probability
$p_{max}$ in (\ref{pmax}).  When the unitaries are not orthogonal to
each other, then we know the complete answer only for qubit channels
\cite{kbf}.
For higher dimensional channels we can tackle the simplified version of the
problem, if all
unitary matrices commute, so that they are diagonal in a certain basis.
Under such assumptions we will prove that for qutrit channels
 the quasi-inverse is a unitary map. Our analysis 
 also reveals certain facts about
higher dimensional channels which may be of interest in their own right.
At the end of this section we will consider a concrete case for $d=3$
and the reader can follow the general arguments
 here by looking at that special case.\\

We aim to find
 the quasi-inverse of the channel in (\ref{pmax}) when
$[U_k,U_l]=0\ \ \ \forall \ k$. In the basis in which
all the unitaries are diagonal,
$U_k={\rm diag}(e^{i\theta^{(k)}_1},e^{i\theta^{(k)}_2},\cdots e^{i\theta^{(k)}_d})$,
the superoperator of the channel reads
\be
\Phi = \sum_{k=1}^M p_k U_k\otimes U^*_k
= \sum_{i,j}^d \sum_{k=1}^M p_k e^{i(\theta^{(k)}_i-\theta^{(k)}_j)} |i,j\rangle\langle i,j|
=\sum_{i,j}^d w_{ij} |i,j\rangle\langle i,j|,
\ee
where
\be\label{a}
w_{ij}=	 \langle e^{i\theta_{ij}}\rangle:=\sum_k p_k e^{i(\theta_i^{(k)}-\theta_j^{(k)})}.
\ee
Let $G$ be the unitary group that consists of all diagonal unitary matrices
in this basis.\footnote{This is also known as the phase group.}
Since $U_k$'s commute with any diagonal matrix, the channel
$\mathcal{E}=\{U_k\}$ is $G$-covariant.

\be\label{covv}
U\mathcal{E}(\rho)U^\dagger=\mathcal{E}(U\rho U^\dagger)
\hspace{5mm}\forall\rho,\ \ {\rm and} \ \ \forall\  U\in G.
\ee
According to property (\ref{cov1}),
this implies that the quasi-inverse of this channel is
also $G$-covariant. Expressed in terms of the superoperators,
this means that the superoperator ${\Phi}_{\cal E}$ must commute
with the superoperator of all unitary maps $\rho\lo U\rho U^\dagger$,
where $U\in G,$. The superoperator of the latter is of the diagonal form
\begin{equation}\label{Equality}
\Phi_U=\sum_{mn} e^{i(\phi_m-\phi_n)}|m,n\rangle\langle m,n|    .
\end{equation}
Let the superoperator of the quasi-inverse be given by
\be
\Phi_{{\cal E}^{qi}}=	\sum \Phi_{ij,kl}|ij\rangle\langle kl|.
\ee
Equation (\ref{covv}) now restricts the form of this superoperator to
the following simple form

\be
\Phi_{{\cal E}^{qi}}=	\sum_{i,j}q_{ij}|ij\ra\la ij|+r_{ij}|ii\ra\la jj|.
\ee

The Choi-matrix of the quasi-inverse is obtained by reshuffling
the entries of the superoperator, see \eqref{reshuffling},
which amounts to
\be\label{choiA}
C_{{\cal E}^{qi}}=	\sum_{i,j}r_{ij}|ij\ra\la ij|+q_{ij}|ii\ra\la jj|.
\ee	
According to
$\overline{F}(\c E^{qi}\circ\mathcal{E})=\frac{1}{d+1}\left(
1+\frac{1}{d}\mathrm{\Tr}\Phi_{{\cal E}^{qi}}\Phi_{\cal E}\right)$, 	
the quasi-inverse is the channel which  maximizes the following quantity
\be\label{FF}
F:=	\Tr(	\Phi_{{\cal E}^{qi}}\Phi_{\cal E})=
\sum_{i=1}^d(q_{ii}+r_{ii})+\sum_{i\ne j=1}^d{q_{ij}w_{ij}}.
\ee
Here  we have used the equality $w_{ii}=1\ \forall\ i$,
subject to the condition that $ C_{\c E^{qi}}$ in (\ref{choiA})
designates the Choi matrix of a legitimate quantum channel,
i.e. it is a positive matrix with partial trace equal to the identity
\be\label{choii}
C_{\c E^{qi}}\geq 0\ \ \ \ {\rm and}\ \ \ \Tr_1C_{\c E^{qi}}=I_d.
\ee
In view of the block-diagonal form of the Choi matrix,
it turns out that its eigenvalues are of the form
$\{r_{i\ne j}\}\cup {\rm Eigenvalues\  of\   Y}$,
where $Y$ is the $d$-dimensional matrix
\be
Y=\sum_{i,j=1}q_{ij}|i\ra\la j|+\sum_{i}r_{ii}|i\ra\la i|.
\ee
The second condition in \eqref{choii} leads to the following set of equalities
\be
q_{ii}+\sum_{j}r_{ij}=1\h \forall\ i ,
\ee
which can be rewritten as
\be\label{qqrr}
q_{ii}+r_{ii}+\sum_{j\ne i}r_{ij}=1\h \forall\ i.
\ee	
Note that  $r_{i\ne j}$, being the eigenvalues of the Choi matrix are
non-negative. Therefore in order to maximize the right hand side of
(\ref{FF}),  we can take all of them to be zero, reducing (\ref{qqrr}) to
\be\label{constraint}
q_{ii}+r_{ii}=1\h \forall \ i,
\ee
which further simplifies the expression (\ref{FF}) and reduces our problem
to maximization of the expression
\be
\label{FXS}
F=d+\sum_{i\ne j=1}^d{q_{ij}w_{ij}}\equiv \Tr(X^\T W),
\ee	
subject to the positivity of the following matrix
\be\label{XX}
X=I+\sum_{i\ne j}q_{ij}|i,j\ra\la i,j|.
\ee
The set of all matrices $X$, denoted by $\Omega$ is a convex set.
Therefore the  linear function 	$F$ takes its maximum at the extreme points
of the set $\Omega$.\\

In genera, the problem of finding the extreme points of $d-$dimensional
channels is a difficult and rather non-trivial. It is only known that
the extreme points of the set of  $2$ dimensional unital channels
are unitary maps.
Here we show that for channels defined by the Choi matrix (\ref{choiA}),
 the extreme points are unitary maps if $d\leq 3$.
 We show also a  stronger result:
  for any dimension $d$ the rank of any extreme point
  of this set is less than $ \sqrt{d}$.
To this end, we first need to clarify a few definitions and  a lemma.
In what follows, $\c S$ is a vector space and $\Omega\subset \c S$
is a convex subset of $\c S$.\\

%

A basic property of extreme points of  a convex set  $\Omega$ 
is depicted in figure (\ref{extwitness}). 
In this figure $X_0$ and $X'_0$ are extreme points
while $Y_0$
 is not. We note that any line (no matter how short) passing
through an extreme point like $X_0$ or $X'_0$ contains points which
do not belong to $\Omega$, while for the non-extreme point $Y_0$,
there exists a sufficiently short line (namely the one lying on the edge)
which lies entirely in $\Omega$.
We present  this more formally  in the following statement.
\begin{definition}
Let  $Y_0$ be  a non-extreme point of $\Omega$. Then there is a
$Z\in\c S$, and $\epsilon>0$, such that $Y_0+t(Z-Y_0)\in \Omega$ for all
$t\in (-\epsilon, \epsilon)$. We call $Z$ a witness of non-extremality of $Y_0$.
\end{definition}
This definition implies its equivalent form.
\begin{lemma}
Let $X_0$ be an extreme point of $\Omega$ and let $Z\in \c S$ be an
arbitrary element in $\c S$ such that for all $t\in (-\epsilon,\epsilon)$,
$X_0+t(Z-X_0)\in \c S$. Then $Z=X_0$.
\end{lemma}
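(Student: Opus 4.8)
The plan is to argue by contradiction, exploiting that the hypothesis places a \emph{two-sided} segment through $X_0$ inside the convex set (reading the membership condition as $X_0+t(Z-X_0)\in\Omega$ for all $t\in(-\epsilon,\epsilon)$, since this is precisely the content that turns the lemma into the converse of the preceding definition of a witness of non-extremality). Suppose, toward a contradiction, that $Z\neq X_0$, and set $v:=Z-X_0$, so that $v\neq 0$.

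First I would fix any $t_0\in(0,\epsilon)$ and form the two points $A:=X_0+t_0 v$ and $B:=X_0-t_0 v$. Since both $t=t_0$ and $t=-t_0$ lie in $(-\epsilon,\epsilon)$, the hypothesis applied at these two values gives $A\in\Omega$ and $B\in\Omega$. The key is that the interval is symmetric about $0$, so we obtain a member of $\Omega$ on \emph{each} side of $X_0$.

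Next I would use the elementary identity $X_0=\frac{1}{2}A+\frac{1}{2}B$, which displays $X_0$ as the midpoint of $A$ and $B$. Because $v\neq 0$, we have $A\neq X_0$, $B\neq X_0$, and $A\neq B$; hence $X_0$ is written as a convex combination of two \emph{distinct} points of $\Omega$, neither equal to $X_0$. This contradicts the defining property of an extreme point recalled before Theorem~\ref{extreme-ness}, and therefore forces $Z=X_0$, as claimed.

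The argument is essentially immediate, and there is no serious obstacle; the single point requiring care—and the one I would flag as the crux—is to invoke both endpoints $t=\pm t_0$ of the interval, so that $X_0$ lies strictly \emph{between} two genuinely different elements of $\Omega$. Had one used only $t\in(0,\epsilon)$, the point $X_0$ would sit at the \emph{end} of a segment rather than in its interior, and that alone would not contradict extremality.
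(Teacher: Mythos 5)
Your proof is correct and is precisely the argument the paper leaves implicit: the text simply asserts the lemma as the equivalent (contrapositive) form of its definition of a witness of non-extremality, and your midpoint decomposition $X_0=\tfrac{1}{2}A+\tfrac{1}{2}B$ with $A,B\in\Omega$ distinct is the standard way to make that equivalence explicit. Your reading of the membership condition as $X_0+t(Z-X_0)\in\Omega$ rather than $\in\mathcal{S}$ is also the right correction of what is evidently a typo in the statement, since membership in the ambient vector space $\mathcal{S}$ would be vacuous.
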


\begin{figure}[H]
	$${\includegraphics[scale=0.3]{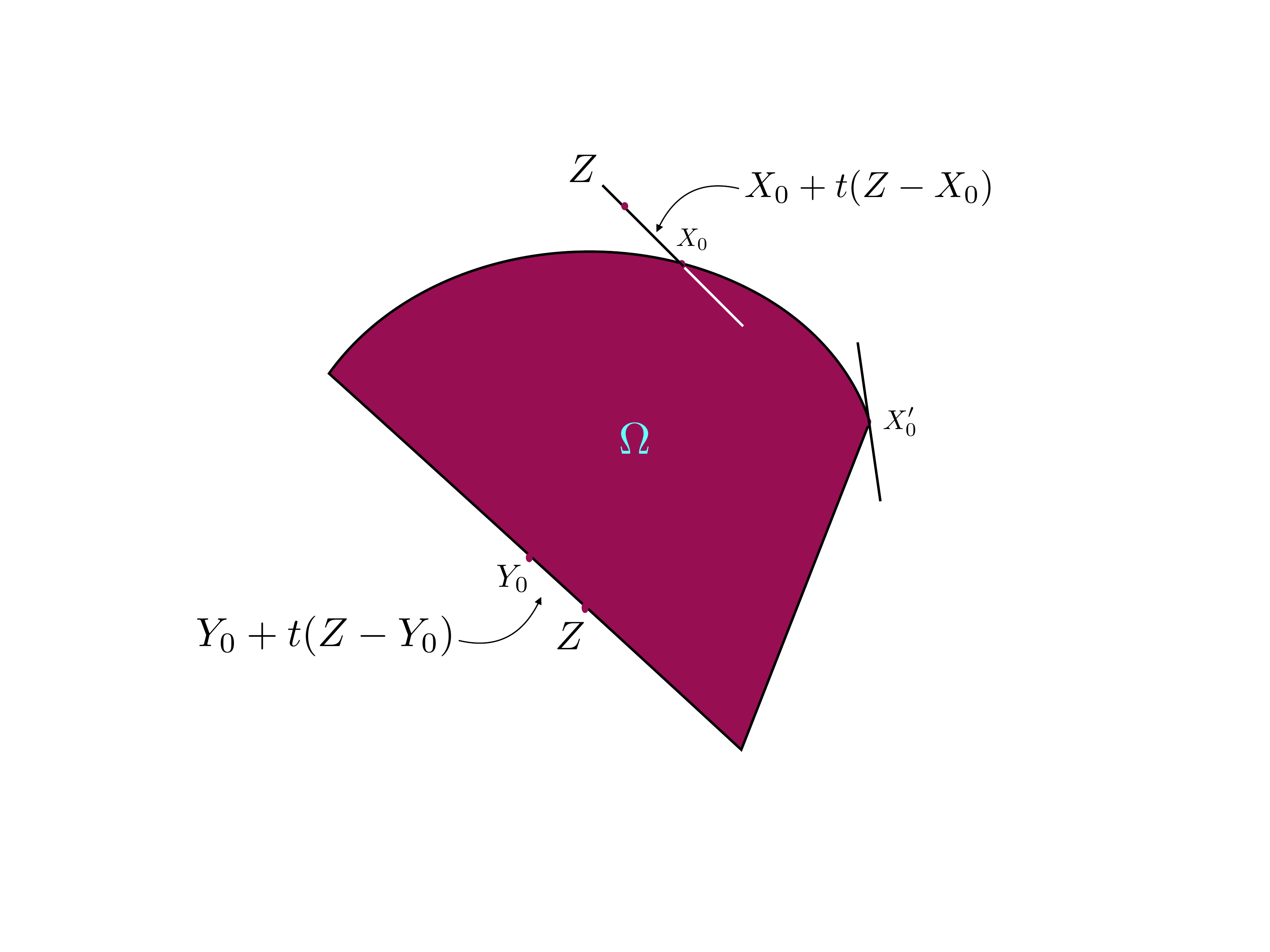}}$$\vspace{-2cm}
	\caption{$X_0$ and $X'_0$ are extreme points of the convex set $\Omega$.
	 Any short segment passing through these points contains points
	 outside $\Omega$. On the other hand $Y_0$ is not an extreme point
	 and there always exists a short segment passing through it which lies
	 entirely inside $\Omega$. }
	\label{extwitness}
\end{figure}
Now we can state and prove
the following result.
\begin{theorem}
Let $\mathbb{V}_m\subset \mathbb{M}_d$ be an $m-$dimensional subspace of the set of
$d-$dimensional complex matrices and let $\Omega\subset \mathbb{V}_m$
be the cone of semi-definite positive matrices. Then the rank $r$ of
any extreme point of $\Omega$ is bounded as  $r\leq \sqrt{d^2-m}$.
\end{theorem}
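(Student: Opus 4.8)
\emph{Proof proposal.} The plan is to combine the local structure of the positive semi-definite cone near $X_0$ with the witness characterisation of extremality and a Grassmann dimension count. Since every positive semi-definite matrix is Hermitian, take the ambient space $\mathcal{S}$ of the preceding lemma to be the real vector space of $d\times d$ Hermitian matrices, of real dimension $d^2$, and regard $\mathbb{V}_m$ and $\Omega$ as subsets of $\mathcal{S}$. Let $X_0$ be an extreme point of $\Omega$ with $\mathrm{rank}(X_0)=r$, let $P$ be the orthogonal projector onto $\mathrm{range}(X_0)$, and set
\[
U=\{H\in\mathcal{S}:\ H=PHP\},
\]
the space of Hermitian matrices supported on $\mathrm{range}(X_0)$. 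Since $U$ is isomorphic to the space of $r\times r$ Hermitian matrices, $\dim_{\mathbb{R}}U=r^2$. This $U$ is the linear span of the smallest face of the positive cone containing $X_0$, and the whole argument reduces to showing that extremality forces $U$ to meet $\mathbb{V}_m$ only in $\{0\}$.

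First I would establish the local fact that every $H\in U$ is a two-sided positivity direction at $X_0$: there is $\epsilon>0$ with $X_0+tH\succeq0$ for all $t\in(-\epsilon,\epsilon)$. In the block decomposition adapted to $\mathrm{range}(X_0)\oplus\ker(X_0)$, the matrix $X_0$ has a strictly positive top-left block $A$ and zeros elsewhere, while $H\in U$ is supported entirely in the top-left block; hence $X_0+tH$ is block diagonal with top block $A+tH_{11}$ (positive definite for small $|t|$ because $A\succ0$) and vanishing bottom block, so it is positive semi-definite. With this in hand, suppose some nonzero $H\in U\cap\mathbb{V}_m$ existed. Because $\mathbb{V}_m$ is a subspace, $X_0+tH\in\mathbb{V}_m$ for every $t$, and by the local fact $X_0+tH\succeq0$ for $|t|<\epsilon$, so the whole segment $\{X_0+tH:\ t\in(-\epsilon,\epsilon)\}$ lies in $\Omega$. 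Setting $Z:=X_0+H$, the segment equals $X_0+t(Z-X_0)$, and the preceding lemma then forces $Z=X_0$, i.e. $H=0$, a contradiction. Hence $U\cap\mathbb{V}_m=\{0\}$.

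The bound now follows from the Grassmann inequality applied to the two subspaces $U$ and $\mathbb{V}_m$ of the $d^2$-dimensional space $\mathcal{S}$,
\[
0=\dim(U\cap\mathbb{V}_m)\ \ge\ \dim U+\dim\mathbb{V}_m-\dim\mathcal{S}=r^2+m-d^2,
\]
so that $r^2\le d^2-m$, that is $r\le\sqrt{d^2-m}$, as claimed.

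I expect the main obstacle to lie in the positivity-preservation step and in pinning down the extremality convention precisely. One must verify carefully (for instance by a Schur-complement argument) that the directions keeping $X_0$ positive semi-definite for \emph{both} signs of $t$ are exactly those supported on $\mathrm{range}(X_0)$: a perturbation with a nonzero component on $\ker(X_0)$ is only one-sided and must be excluded, so $U$ is neither too large nor too small. The second delicate point is that the radial direction $X_0$ must not itself be admissible among the tangent directions in $\mathbb{V}_m$; this is what makes $U\cap\mathbb{V}_m$ genuinely trivial and yields the sharp exponent $d^2-m$ rather than $d^2-m+1$, and it holds because $X_0\notin\mathbb{V}_m$ for the constraint sets occurring in the application. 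The remaining steps are routine linear algebra.
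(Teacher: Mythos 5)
Your proof is correct and follows essentially the same route as the paper: both restrict the perturbation to Hermitian matrices supported on the range of $X_0$ (an $r^2$-dimensional space of two-sided feasible directions) and conclude by the dimension count $r^2 > d^2-m \Rightarrow U\cap\mathbb{V}_m\neq\{0\}$, which the paper phrases equivalently as a homogeneous linear system with more unknowns than equations having a nonzero solution. Your block-decomposition justification of two-sided positivity, and your closing observation that the radial direction $X_0$ must be excluded (i.e.\ $\mathbb{V}_m$ must really be read as the direction space of an affine constraint set, as in the paper's application to matrices with fixed unit diagonal, since a genuine cone has no nonzero extreme points), supply details the paper leaves implicit.
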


\begin{proof}
Let $Y_0$ be any element $\Omega$ with  rank $r$. We show that if
$r> \sqrt{d^2-m}$, then we can always find an element $Z\in \Omega$
different from  $Y_0$ such that the sufficiently short line segment
$Y_0+t(Z-Y_0)$ belongs entirely to $\Omega$.
This shows that such points cannot be extreme points of $\Omega$.
To this end, let us expand $Y_0$ in its eigenbasis as
\be
	Y_0=\sum_{\a=1}^r \lambda_\a|u_\a\ra\la u_\a|,
\ee
where $\lambda_\a > 0\ \ \forall \a.$ We choose $Z-Y_0\in \mathbb{M}_d$ in the form
\be
	Z-Y_0=\sum_{\a,\b}z_{\a\beta}|u_\a\ra\la u_\b|,
\ee
and let $t$ be so small that $Y_0+t(Z-Y_0)\geq 0$.  The only other
requirement that is needed for this matrix to belong to $\Omega$ is to satisfy
$d^2-m$ linear homogeneous equations which define the subspace
$\Omega$. This is a system of $d^2-m$ linear homogeneous equations on
$r^2$ variables and if $d^2-m<r^2$, it has always a non-zero solution.
This means that the point $Y_0$ is not an extreme point of the set $\Omega$.
Hence the rank of any extreme point of this set should be less than
or equal to $\sqrt{d^2-m}$.
\end{proof}

\begin{corollary}
As a corollary we find that the extreme points of the set of matrices (\ref{XX})
which is a subset of a $d^2-d$ dimensional space, have rank $r<\sqrt{d}$.
This means that for $d=3$, the extreme points of the set (\ref{XX}) have unit
rank and hence the quasi-inverse of mixed unitary channels with commuting
unitaries is a unitary channel. This theorem by itself does not preclude the
existence of quasi-inverses which are unitary maps in higher dimensions.
\end{corollary}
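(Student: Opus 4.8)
The plan is to obtain the corollary as a direct specialization of the rank-bound theorem just proved, supplemented by an explicit description of the rank-one case. First I would pin down the relevant subspace. The matrices $X=I+\sum_{i\ne j}q_{ij}\ket{i}\bra{j}$ of \eqref{XX} are precisely the positive semi-definite matrices whose diagonal is frozen to unity, so $\Omega$ is the slice of the PSD cone cut out by the $d$ affine conditions $X_{kk}=1$. The \emph{difference} $Z-Y_0$ of any two elements of $\Omega$ is then a Hermitian matrix with vanishing diagonal, and the space of such matrices has dimension $d^2-d$, matching the count quoted in the statement. Setting $m=d^2-d$ in the hypothesis of the theorem, the codimension becomes $d^2-m=d$, and the theorem delivers $r\le\sqrt{d^2-m}=\sqrt{d}$ for the rank $r$ of every extreme point of $\Omega$.

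Next I would specialize to $d=3$. Since $\sqrt{3}<2$ while every nonzero element of $\Omega$ has rank at least one, the only admissible value is $r=1$. A rank-one element of $\Omega$ must have the form $X=\project{v}$, and the unit-diagonal constraint forces $|v_k|^2=X_{kk}=1$, i.e. $v_k=e^{i\phi_k}$, whence $X_{ij}=e^{i(\phi_i-\phi_j)}$. Because $F=\Tr(X^\T W)$ is linear on the convex and compact set $\Omega$ (positivity together with unit diagonal gives $|X_{ij}|\le1$), its maximum is indeed attained at such an extreme point.

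Finally I would translate this back into the language of channels. Reading off $q_{ij}=e^{i(\phi_i-\phi_j)}$, and recalling that we already set $r_{i\ne j}=0$ while $r_{ii}=1-q_{ii}=0$ by \eqref{constraint}, the block form of the superoperator displayed just before \eqref{choiA} collapses to $\Phi_{\mathcal{E}^{qi}}=\sum_{ij}e^{i(\phi_i-\phi_j)}\ket{ij}\bra{ij}$, which is exactly the superoperator $\Phi_U$ of the diagonal unitary $U=\diag(e^{i\phi_1},e^{i\phi_2},e^{i\phi_3})$ appearing in \eqref{Equality}; hence the quasi-inverse is the unitary map $\rho\mapsto U\rho U^\dagger$. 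The one delicate point, and the place I would be most careful, is the dimension bookkeeping: the theorem is phrased for a PSD cone inside a linear subspace, whereas $\Omega$ is an affine slice, but this is harmless because the perturbation argument constrains only $Z-Y_0$, which satisfies the homogeneous (zero-diagonal) version of the constraints, so the effective codimension is genuinely $d$. I would also note that the theorem literally yields $r\le\sqrt{d}$; the strict bound $r<\sqrt{d}$ advertised in the statement follows from integrality of $r$ whenever $d$ is not a perfect square, and in particular the case $d=3$ of interest needs no separate argument.
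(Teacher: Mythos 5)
Your proposal is correct and follows essentially the same route as the paper: apply the rank-bound theorem with $m=d^2-d$ to the positivity-constrained set \eqref{XX}, deduce unit rank for $d=3$, and identify the resulting rank-one $X=\project{\phi}$ with unimodular entries as the superoperator data of a diagonal unitary, which is exactly what the paper does in the text following the corollary. Your two added clarifications — that the theorem applies to the affine slice because the perturbation $Z-Y_0$ satisfies the homogeneous zero-diagonal constraints, and that the theorem literally gives $r\le\sqrt{d}$ with the strict inequality recovered by integrality — are welcome patches of small imprecisions in the paper's statement rather than a departure from its argument.
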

Consider now the case of $d=3$. Here after setting $r_{ij}=0$ and satisfying
the constraint (\ref{constraint}), the superoperator is a 
diagonal matrix given by 
$\Phi_{{\cal E}^{qi}}=\diag[1,q_{12},q_{21},1,q_{23},q_{31},q_{32},1]$ and its Choi matrix is as follows:
\be C_{{\cal E}^{qi}}=
\left(\begin{array}{ccccccccc}
1&.&.&.&q_{12}&.&.&.&q_{13}\\
.&.&.&.&.&.&.&.&.\\
.&.&.&.&.&.&.&.&.\\
.&.&.&.&.&.&.&.&.\\
q_{21}&.&.&.&1&.&.&.&q_{23}\\
.&.&.&.&.&.&.&.&.\\
.&.&.&.&.&.&.&.&.\\
.&.&.&.&.&.&.&.&.\\
q_{31}&.&.&.&q_{32}&.&.&.&1
\end{array}\right).
\ee
The Hermitian matrices $W$ and $X$,
  which appear in Eq. (\ref{FXS}),
are now $3$ dimensional and take the form
\be
X=\left(\begin{array}{ccc}
1&q_{12}&q_{13}\\ q_{21}&1&q_{23}\\ q_{31}&q_{32}&1
\end{array}\right),\qquad{\rm and}\qquad
W=\left(\begin{array}{ccc}
	1&w_{12}&w_{13}\\ w_{21}&1&w_{23}\\ w_{31}&w_{32}&1
\end{array}\right).
\ee
Having proved that the matrix $X_0$ which maximizes the expression
(\ref{FXS}) is of unit rank, we can write it as $X=|\phi\ra\la\phi |$,
where $|\phi\ra=\left(\begin{array}{ccc}q_1&q_2&q_3\end{array}\right)^\T$
maximizes $\Tr(W|\phi\ra\la\phi|)$,
with $q_i$ being unimodular complex numbers.
 Note that the maximum will be smaller than the largest
eigenvalue of  $W$,  if the corresponding  eigenvectors
 is not built of unimodular entries.
Therefore the quasi-inverse is the unitary map
${\cal E}^{qi}(\rho)=U\rho U^\dagger$, where the unitary operator
$U$ is given by $U=\diag\left(q_1,q_2,q_3\right)$.
In view of (\ref{relation of corrected fidelity}) and (\ref{FXS}),
the final average fidelity becomes
\be
\overline{F}=\frac{1}{4}\bigl(1+\frac{1}{3}\Tr(\Phi_{{\cal E}^{qi}}\Phi_{\cal E})\bigr)
=\frac{1}{4}\bigl(1+\frac{1}{3}max_{|\phi\ra}\la \phi|W|\phi\ra\bigr).
\ee

As a concrete example, consider a spin-1 particle subject to a  magnetic field
in the $z$ direction, where the strength of the magnetic field or the exposure
time is random. The evolution of the state is given by the following channel
\be\label{KooroshChannel}
{\cal E}(\rho)=\int d\tau f(\tau) U(\tau)\rho U^\dagger (\tau),
\h U(\tau)=\left(
\begin{array}{ccc}
e^{i\tau}&&\\ &1&\\ &&e^{-i\tau}\end{array}\right),
\ee
for some distribution $f(\tau)$. For this channel we have

\be
w_{12}=w_{23}=\la e^{i\tau}\ra,\h w_{13}=\la e^{2i\tau}\ra,
\ee
where the matrix $W$ is Hermitian with unit diagonal entries and
the averages are taken with respect to the distribution function $f(\tau)$.
The vector $|\phi\ra$ is then of the form
$|\phi\ra=\left(\begin{array}{ccc}
e^{i\tau_m}&1&e^{-i\tau_m}
\end{array}\right)^\T$,
 where $\tau_m$ is chosen to maximize
$\la \phi|W|\phi\ra$ which is
\be
\la \phi|W|\phi\ra=Re\left[2\la e^{i\tau}\ra e^{-i\tau_m}+
\la e^{2i\tau}\ra e^{-2i\tau_m}\right].
\ee

In view of the form of $X$, the quasi-inverse will be given by
\be\label{KooroshInverse}
V=\left(\begin{array}{ccc}
e^{i\tau_m}&&\\ &1&\\ &&e^{-i\tau_m}
\end{array}\right),
\ee
and the average fidelity after application of the quasi-inverse is given by
\be
\overline{F}=\frac{3+\big\la \left(1+2\cos(\tau-\tau_m)\right)^2\big\ra}{12}.
\ee
Figure \ref{panel}
shows the average fidelity and the increase in average fidelity for a
simple discrete distribution $P(\tau=0)=1-p$ and $P(\tau=\tau_0)=p$.

\begin{figure}[h]\vspace{1cm}
\includegraphics[width=2.8in,height=4.2cm]{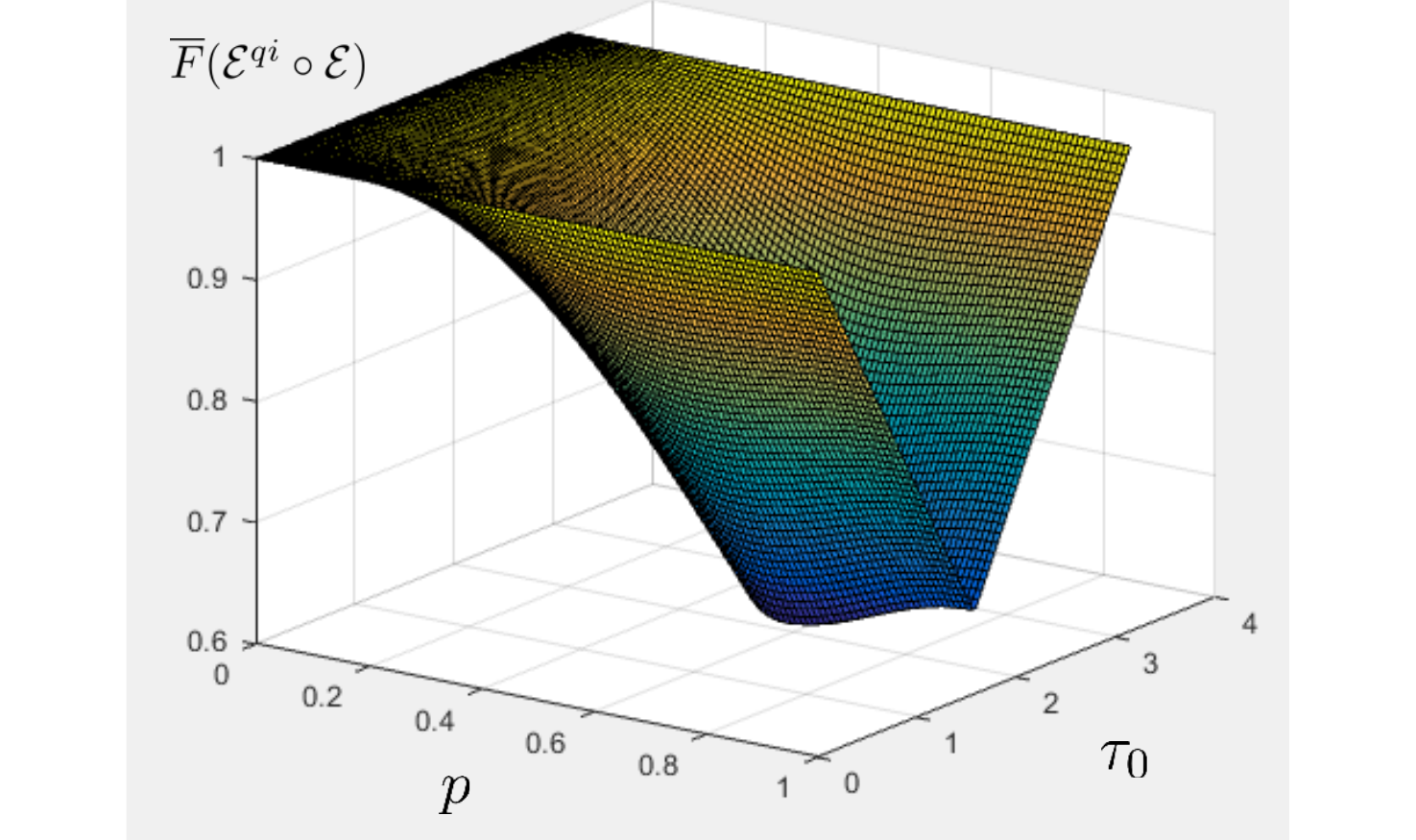}\hfill
\includegraphics[width=2.8in,height=4.2cm]{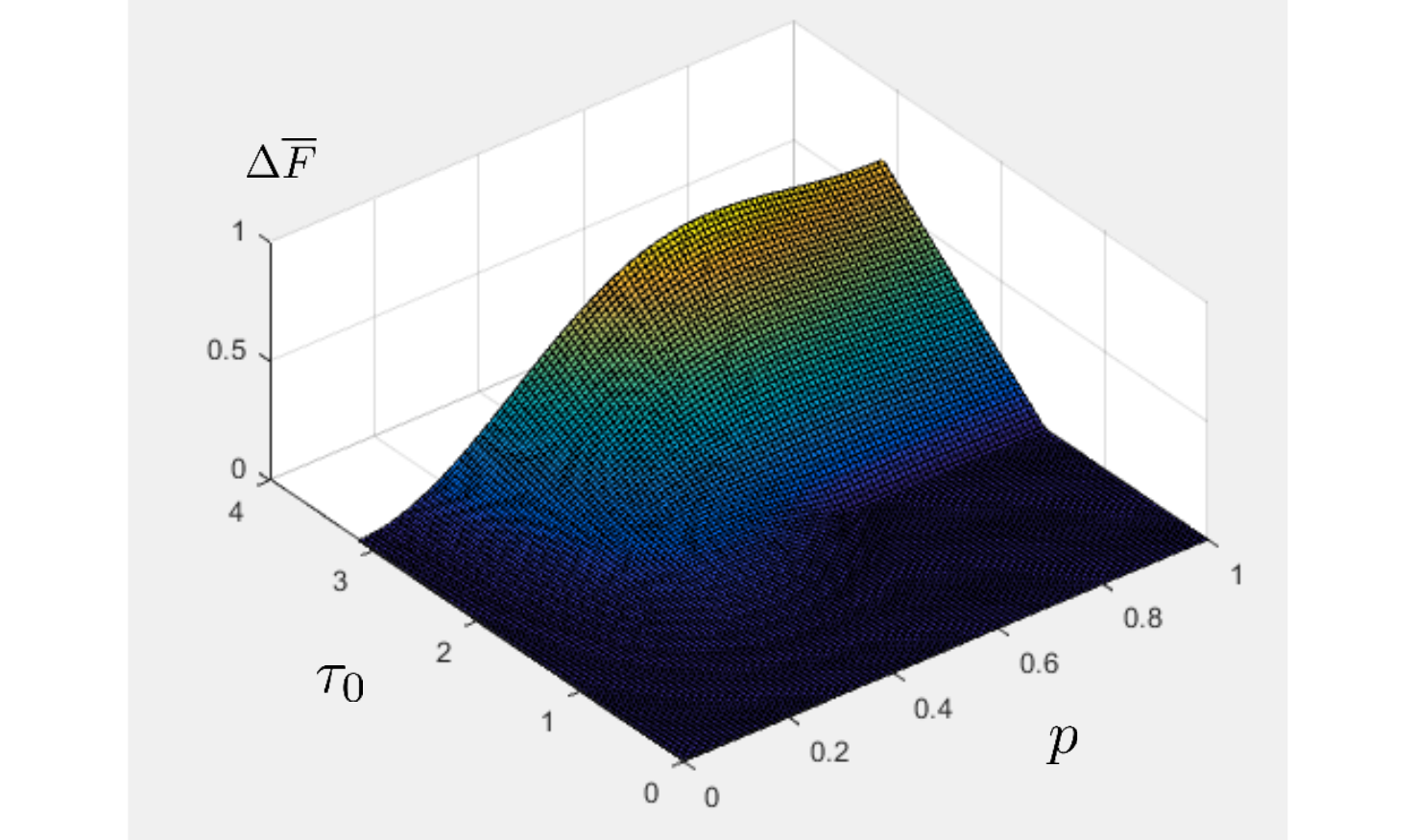}
\caption{(Left) The average fidelity, $\overline{F}$, and 
(Right) the increase of the average fidelity, $\Delta\overline{F}$,
 of the channel  (\ref{KooroshChannel})
	after correction with quasi-inverse (\ref{KooroshInverse}).}\label{panel}
\end{figure}

\end{example}

\section{Quasi-inversion of classical channels}
\label{inverseclassical}
There are several known parallels between probability distributions and
stochastic matrices on the one hand and their  quantum counterparts,
namely density matrices and quantum channels, on the other hand.
For instance, stochastic and bi-stochastic matrices
acting on probability vectors form
classical analogues  of quantum channels and unital quantum channels.
Furthermore, the discrete group of permutations
is the analog of the continuous group of unitary channels.
As the concept of convexity is critical in both domains,
it is instructive to analyze,
how the notion of quasi-inverse works in the
classical setup.\\

 In more concrete terms, the state of classical stochastic system of dimension
 $d$ is a real vector $\vect{p}$ whose entries are non-negative and
add up to one.
The 
set of all probability vectors of length $d$, forms the  simplex
$\Delta_d$ which is a $(d-1)$-dimensional compact and convex set.
For the sake of simplicity, let us use the bra-ket notation here to
mention a probability vector. In this sense, let
$\ket i$ for $i\in\{1,\dots,d\}$ denotes the pure probability vector
whose all components are zero but the $i$-th element which is equal to $1$.
Therefore, a general mixed probability vector
$\vect{p}=(p_1,\dots,p_d)^\T$
can be stated as a convex combination
of $\{\ket i\}$, i.e. $\ket{\vect{p}}=\sum_i p_i\ket i$.\\


A classical channel  is represented  by a stochastic transition
  matrix $T$ of order $d$
  with
non-negative elements where the the sum of all elements in each column is equal to $1$.
This is the analog of trace-preserving property.
The space of stochastic matrices  of order $d$
is a $(d^2-d)$-dimensional convex and compact set which
will be denoted by $\mathcal{S}_d$.\\

Making use of the analogy to the quantum case
consider the generalized Bloch representation  (\ref{density})
of a diagonal density matrix $\rho_{\rm diag}={\rm diag}(\vect{p})$.
Let us order the generators of $SU(d)$ matrices used in (\ref{gamma})
in such a way that $\Gamma_i$  are diagonal  for $i=1,\dots, d-1$.
Then any diagonal matrix $\rho_{\rm diag}$,
representing the classical state $\vect{p}$,
is represented in the Bloch form  (\ref{density}),
where the Bloch vector ${\bf t}_{cl}$
has now only $d-1$ components.\\

Using such a Bloch representation for
point  $\vect{p}$ of the probability simplex  $\Delta_d$
one can represent the action of an arbitrary stochastic matrix
in the form \cite{karol3},
\be
\label{stochastic-affine}
 T=\left(\begin{array}{cc}1&0\\ \sqrt{d-1}\
 {\bf t}_{\rm cl} &M_{cl} \end{array}
 \right).
 \ee
 Note that this representation mimics the Liouville form
(\ref{superoperator-affine}) of a quantum operation,
 with the only difference that the
 classical distortion matrix $M_{cl}$
   forms a $(d-1)$ dimensional truncation
   of the quantum distortion matrix $M$ or order $d^2-1$,
   while the classical  translation vector ${\bf t}_{\rm cl}$
   consists of $d-1$ components of the original
    translation vector ${\bf t}$ of size $d^2-1$.
   Hence in the Bloch basis the classical transition matrix $T$
   forms a block of a matrix representing
   a quantum operation $\Phi_{\cal E}$, which  decoheres to it \cite{kcpz}.
\\

Let us mention explicitly two distinguished classical maps.
The first is the permutation map ${P}_\alpha$ where
${P}_\alpha|i\ra = |\alpha_i\ra$,
where $1\leq i\leq d$, while $\alpha_i$  denote  permutations of $\{1,2,\cdots,d\}$.
The next one is the flat, van der Wearden matrix, denoted by $T_*$,
which sends all input states to the the uniform state.
It implies $(T_*)_{ij}={1/d}$ for any $1\leq i,j\leq d$.
As a result, the summation of elements of the columns of
an assumed matrix $A$ is equal to $1$ if and only if $T_*A=T_*$.
Accordingly, if $A$ is an invertible matrix, the summation of
elements on the columns of ${A}^{-1}$ is also equal to $1$.\\

Finally we note the fidelity of two probability vectors $|\vect{p}\ra$ and
$|\vect{q}\ra$ which is defined as

\be
\la\vect{p}|\vect{q} \ra:=(\sum_{i=1}^d \sqrt{p_iq_i})^2.
\ee
Proceeding in the same way that we did for the quantum case,
here we should define the average fidelity of a classical channel to be
 the fidelity of an output state with the input pure state averaged over
 all input states. Thus we define

\begin{equation}\label{FidelityBar'}
\mathcal{\overline{F}}(T):=
\frac{1}{d}\sum_{i=1}^d F(|i\ra,T|i\ra)=\frac{1}{d}\Tr(T).
\end{equation}

Under these assumptions, the quasi-inversion is the channel
(the stochastic matrix) increasing Eq. (\ref{FidelityBar'}) as much as possible:
\begin{equation}
\label{cl.fid.}
\Tr(T^{qi}T)\; \geq \; \Tr(T' T),
\hspace{4mm} \forall T' \in \mathcal{S}_d.
\end{equation}

To emphasize even further similarity to the quantum case,
consider the Bloch representation
  (\ref{stochastic-affine})
of the classical map $T$ involving
its distortion matrix  $M_{\rm cl}$.
Then the average fidelity of the corrected classical
transformation  $T^{qi} T$
can be expressed as
the maximum over the set of allowed classical distortion matrices,
\begin{equation}
\label{Fidelityclass}
\overline{\c F} ( T^{qi} T)=
\max_{M'_{\rm cl}} \frac{1}{d}
\left(1+ 
\mathrm{Tr} M'_{\rm cl}  M_{\rm cl}\right),
\end{equation}
which is in analogy to Eq.~\eqref{relation of corrected fidelity}.
However, the difference in prefactor with respect to
Eq.~\eqref{relation of corrected fidelity} is due to the averaging over
the set of classical probabilities of a  dimension smaller
than the set of quantum states.\\

All the  arguments of Theorem \ref{boundary point},
based on the linearity  of the fidelity function and its two properties
(\ref{property a}) and (\ref{property b}) are also valid here and therefore
Theorems \ref{boundary point} and \ref{extreme-ness} and the
corollary  \ref{uniqueness} hold true also if we replace the
quantum channels with classical ones.
In particular, the fact that the quasi-inverse of a quantum channel
can be taken to correspond to an extreme point is very important,
since compared with the quantum case, we have a much better knowledge
of the convex set of classical channels and its extreme points.
In the general case, our basic theorem is the following:
\begin{theorem}\label{classical inverse}
Let the stochastic matrix $T$ be such that in each row $i$, 
the element in the $a_i$-th column be the maximum. 
Then its quasi-inverse is found by replacing that single element by $1$ 
and setting all the other elements in that row equal to zero and 
then transposing the matrix.
\end{theorem}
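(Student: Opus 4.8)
The plan is to convert the maximization in \eqref{cl.fid.} into a collection of independent linear programs, one per column of the competing matrix $T'$. First I would write the objective explicitly as
\begin{equation}
\Tr(T'T)=\sum_{i,j}T'_{ij}T_{ji}=\sum_{j}\Big(\sum_{i}T'_{ij}T_{ji}\Big),
\end{equation}
and observe that the constraints defining $\mathcal{S}_d$ — nonnegativity of the entries together with the column-sum conditions $\sum_i T'_{ij}=1$ — impose no coupling between entries living in different columns of $T'$. Hence the feasible set is a product of $d$ probability simplices, one for each column, and the objective splits into a sum of terms that can be optimized separately. This decoupling is the structural observation that makes the problem tractable and is the key first step.

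Next I would maximize each bracketed term. For a fixed column index $j$, the inner sum $\sum_i T'_{ij}T_{ji}$ is a convex combination of the numbers $\{T_{ji}\}_{i=1}^d$, which are exactly the entries of row $j$ of $T$, weighted by the probability vector $(T'_{1j},\dots,T'_{dj})^\T$. A linear functional over a simplex attains its maximum at a vertex, so the optimum places all the weight on the index $i$ that maximizes $T_{ji}$, i.e.\ on the position of the largest entry in row $j$ of $T$. By the hypothesis of the theorem this position is $i=a_j$, so the optimal choice is $T'_{a_j,j}=1$ with all other entries of column $j$ equal to zero; in other words $T^{qi}_{ij}=\delta_{i,a_j}$. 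The attained value is $\Tr(T^{qi}T)=\sum_j T_{j,a_j}$, the sum of the row maxima of $T$.

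It then remains to identify this optimizer with the construction in the statement and to confirm admissibility. Forming the matrix $B$ with $B_{i,a_i}=1$ and zeros elsewhere in row $i$, and transposing, gives $(B^\T)_{ij}=B_{ji}=\delta_{i,a_j}=T^{qi}_{ij}$, so $T^{qi}=B^\T$ is precisely the prescribed matrix. I would check that $T^{qi}$ is a legitimate classical channel: each of its columns is a standard basis vector and therefore sums to $1$ with nonnegative entries, so $T^{qi}\in\mathcal{S}_d$. This also dovetails with the classical analogue of Theorem \ref{extreme-ness}, since the deterministic column-stochastic matrices are exactly the extreme points of $\mathcal{S}_d$, among which the quasi-inverse may always be sought.

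I do not expect a serious obstacle here; the only point needing care is the bookkeeping of the transpose, which exchanges the role of ``largest entry in row $j$ of $T$'' with ``support of column $j$ of $T^{qi}$,'' and the treatment of ties. If some row of $T$ has several equal maximal entries, the argmax is not unique and neither is the quasi-inverse, in agreement with Corollary \ref{uniqueness}; the statement tacitly fixes one maximizing column $a_i$ per row. I would note this explicitly so that the derivation yields all quasi-inverses as exactly the matrices $T^{qi}_{ij}=\delta_{i,a_j}$ ranging over all valid choices of the maximizers $a_j$.
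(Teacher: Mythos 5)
Your proposal is correct and follows essentially the same route as the paper: both decompose $\Tr(T'T)$ into a sum of inner products between the rows of $T$ and the columns of $T'$, note that the column-stochasticity constraints decouple, and maximize each term at a vertex of the corresponding simplex. Your version is slightly more explicit about why the columnwise optimization is legitimate (the feasible set being a product of simplices) and about the treatment of ties, but these are refinements of the same argument rather than a different one.
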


\begin{proof}
Let the matrix $T$ be written as 
$T=\left({\bf t}_1, {\bf t}_2, \cdots, {\bf t}_n\right)^\T$ where 
${\bf t}_i^\T$ denotes the $i-$th row as a vector. 
Denote its quasi-inverse as a matrix 
$T^{qi}=({\bf x}_1, {\bf x}_2,\cdots ,{\bf x}_n)$, 
where ${\bf x}_i$ is the $i-th$ column as a vector. 
The aim is to maximize  $\Tr(TT^{qi})=\sum_i  {\bf t}_i\cdot{\bf x}_i$. 
We can maximize this sum if we maximize each inner product 
$ {\bf t}_i\cdot{\bf x}_i$ independently. 
Since none of the components of the vectors ${\bf x_i}$ can be larger than one, 
the maximization is achieved if we choose each 
${\bf x}_i=(0,0,\cdots 1, 0 , 0 )$ such that $1$ stands 
in the position of maximum component of ${\bf t}_i$. 
This proves the theorem. 
\end{proof}
For example assume a general two dimensional stochastic matrix
parameterized as
\be
T=\left(\begin{array}{cc}1-x& y \\ x& 1-y\end{array}\right),\h 0\leq x,y\leq 1.
\ee
In this case, the quasi-inverse is either $I$, $\sigma_x$, 
or a mixture of these two for $x+y<1$, $x+y>1$ and $x+y=1$, respectively. This is depicted in Fig. \ref{fig3}.
\begin{figure}\vspace{-1.8cm}
	$${\includegraphics[scale=0.3]{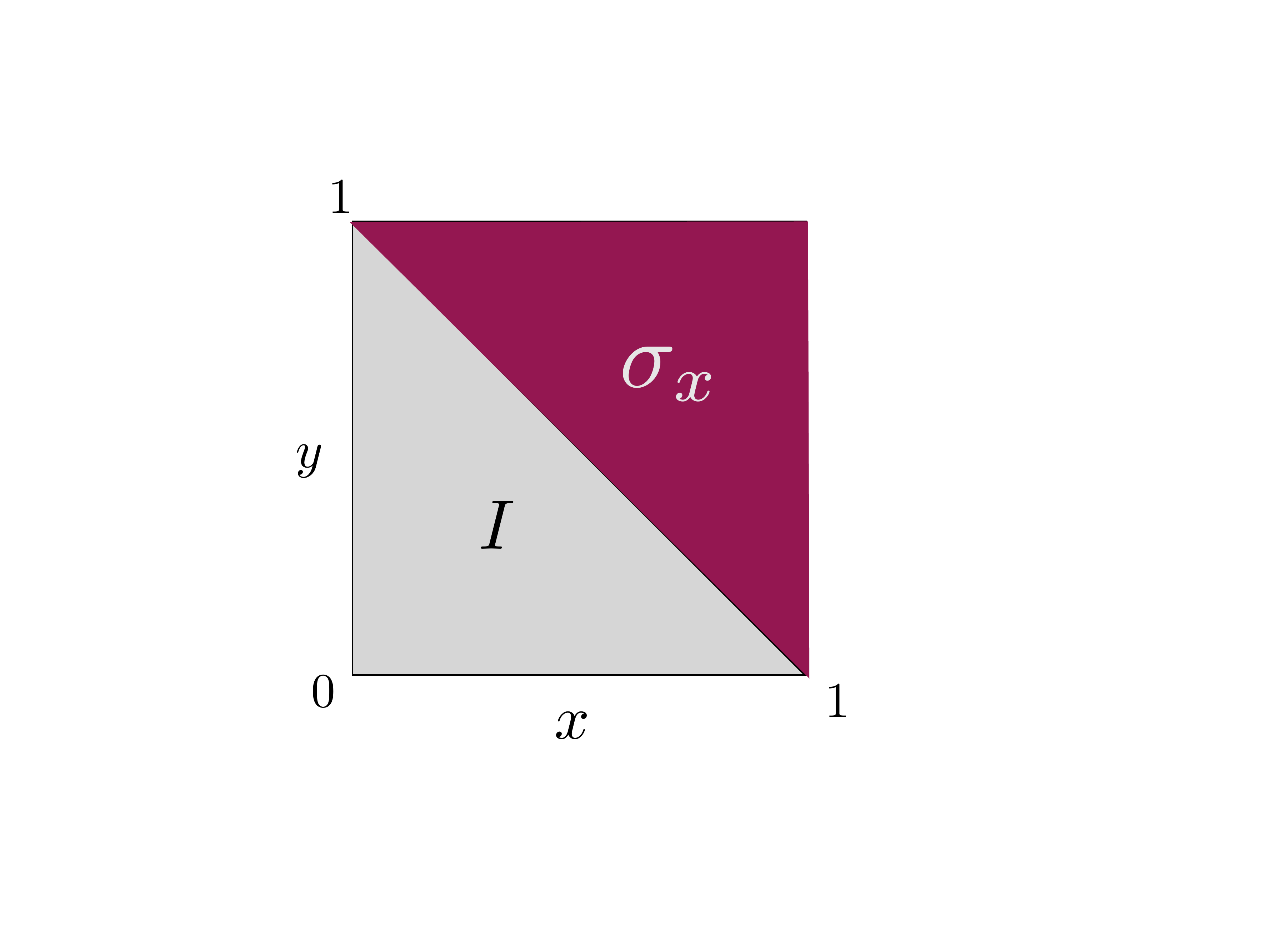}}$$\vspace{-2.5cm}
	\caption{The set of all stochastic matrices of dimension $2$.
	Transitions located in the gray region cannot be corrected
	in sense of input-output fidelity. For other stochastic maps
	one can apply $\sigma_x$ to modify the fidelity.}\label{fig3}
	\label{classic2}
\end{figure}
In higher dimensions, however, more options are possible.\\

\begin{example}
As examples, the quasi-inverse of the stochastic matrices
\be
T_1=\frac18\left(\begin{array}{ccc}
2&6&2\\
4&1&1\\
2&1&5
\end{array}\right),
\h {\rm and} \h	T_2=\frac{1}{48}
\left(\begin{array}{ccc}
12&6&3\\
4&36&42\\
32&6&3
\end{array}\right)
\ee
are given by
\be
T_1^{qi}=\left(\begin{array}{ccc}
0 & 1&0\\ 1&0&0\\ 0&0&1
\end{array}\right),
\h {\rm and} \h  T_2^{qi}=\left(\begin{array}{ccc}
1 & 0&1\\ 0&0&0\\ 0&1&0
\end{array}\right).
\ee
The second example shows that the quasi-inverse of a classical map is not necessarily a permutation.
In the examples above, the average fidelities of the stochastic maps $T_1$
and $T_2$ increase from
 ${1}/{3}$ and ${5}/{18}$ to
${35}/{72}$ and ${43}/{72}$, respectively.   \\

\end{example}

\begin{corollary}
	\label{e6-classical}\normalfont
If $T$ denotes a stochastic matrix for which
$\hspace{2mm} \forall i,j :\ T_{ii} \geq T_{ij}$, then  its quasi-inverse
is $I_d$ which means it is not possible to increase the average  fidelity.
\end{corollary}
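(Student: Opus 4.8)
The plan is to read this off directly from Theorem \ref{classical inverse}, since the hypothesis is precisely the statement that every row of $T$ attains its maximum on the diagonal. First I would observe that the condition $T_{ii}\geq T_{ij}$ for all $i,j$ says that, for each fixed row $i$, the largest entry sits in column $i$. Hence in the prescription of Theorem \ref{classical inverse} we may take $a_i=i$ for every $i$: replacing the maximal entry of row $i$ by $1$ and all other entries of that row by $0$ produces the row vector with a single $1$ in position $i$. Doing this for every row yields exactly the identity matrix $I_d$.

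Next, the theorem instructs us to transpose this matrix to obtain the quasi-inverse. Since $I_d^\T=I_d$, we conclude $T^{qi}=I_d$. It then remains only to interpret this result: the corrected channel is $T^{qi}T=I_d\,T=T$, so $\overline{\c F}(T^{qi}T)=\overline{\c F}(T)$ and the correction produces no gain. Because the quasi-inverse maximizes $\Tr(T'T)$ over all stochastic $T'$ by the defining inequality (\ref{cl.fid.}), the identity already being optimal shows that no stochastic matrix can raise the average fidelity; thus $T$ belongs to the classical counterpart of the non-correctable set.

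I do not expect any genuine obstacle here, as the statement is an immediate specialization of Theorem \ref{classical inverse}. The only point requiring a word of care is the tie-breaking: if in some row the diagonal entry equals one or more off-diagonal entries, the maximizing column $a_i$ is not unique and, in accordance with Corollary \ref{uniqueness}, neither is the quasi-inverse. The hypothesis nonetheless guarantees that $a_i=i$ is always an admissible choice, so the identity is always a valid quasi-inverse, which is all the corollary asserts.
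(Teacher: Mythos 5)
Your proposal is correct and matches the paper's (implicit) argument: the corollary is stated there as an immediate specialization of Theorem \ref{classical inverse} with $a_i=i$, exactly as you do, and your closing remark on ties is consistent with Corollary \ref{uniqueness} and the paper's subsequent example. Nothing further is needed.
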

As it is seen, these quasi-inverses are at extreme points of the space of
stochastic matrices. To have quasi-inverses which are not necessarily at the
extreme points, we should consider the case where there are more than one
maximum entry in each row. In this case if we follow the argument leading to
Theorem \ref{classical inverse}, we see
the quasi-inverse can be the convex combination of all quasi-inverses which
we construct when we consider only one of these elements.
The next example illustrates this point.

\begin{example}
The stochastic matrix
\be
T=\frac{1}{24}\left(\begin{array}{ccc}
8&3&8\\
12&6&6\\
4&15&10
\end{array}\right)
\ee
has as its quasi-inverse
\be
T^{qi}=\lambda
\left(\begin{array}{ccc}
1 & 1&0\\ 0&0&1\\ 0&0&0
\end{array}\right)
+(1-\lambda)\left(\begin{array}{ccc}
0 & 1&0\\ 0&0&1\\ 1&0&0
\end{array}\right)=\left(\begin{array}{ccc}
\lambda & 1&0\\ 0&0&1\\ 1-\lambda&0&0
\end{array}\right).
\ee
Here the average fidelity increases from the value $\frac{1}{3}$ to
$\frac{35}{72}$ which is expectedly independent from $\lambda$.\\
\end{example}

\begin{theorem}
Among the stochastic matrices with a unique quasi-inverse, 
only symmetric permutations are their own quasi-inverse.
\end{theorem}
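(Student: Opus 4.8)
The plan is to exploit the explicit description of the classical quasi-inverse furnished by Theorem \ref{classical inverse}, which shows that $T^{qi}$ is always a $0$--$1$ matrix with exactly one unit entry in each column. Indeed, by that theorem $T^{qi}=R^\T$, where $R$ is obtained from $T$ by keeping only the (strictly) maximal entry of each row, setting it to $1$ and zeroing the rest; since $R$ then has a single $1$ per row, its transpose $T^{qi}$ has a single $1$ per column. I would first record that the hypothesis of a \emph{unique} quasi-inverse is precisely the statement that each row of $T$ has a strict maximum, because a row with two or more equal maximal entries produces a one-parameter family of quasi-inverses (exactly the situation in the preceding example); uniqueness rules this out.

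Next I would assume $T=T^{qi}$ and read off the consequences. Since $T^{qi}$ is $0$--$1$ with one $1$ per column, so is $T$. Now apply the uniqueness hypothesis to this $0$--$1$ matrix: in a $0$--$1$ row the maximum is strict exactly when the row contains a single $1$, for an all-zero row has $d$ tied maxima and a row with two or more $1$'s has tied maxima. Hence every row of $T$ contains exactly one $1$. A $0$--$1$ matrix with exactly one $1$ in every row and every column is a permutation matrix, so $T=P_\sigma$ for some permutation $\sigma$, where $(P_\sigma)_{ij}=\delta_{i,\sigma(j)}$.

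It then remains to compute the quasi-inverse of a permutation and impose self-quasi-inversion. For $T=P_\sigma$ the single $1$ of row $i$ lies in column $\sigma^{-1}(i)$, so the reduced matrix $R$ coincides with $T$ and $T^{qi}=R^\T=P_\sigma^\T=P_{\sigma^{-1}}=T^\T$. The equation $T=T^{qi}$ thus reduces to $P_\sigma=P_\sigma^\T$, i.e.\ $\sigma=\sigma^{-1}$, so $\sigma$ is an involution and $T$ is a symmetric permutation matrix. For the converse I would note that any symmetric permutation has exactly one $1$ per row, hence a strict row-maximum and a unique quasi-inverse, and the same computation gives $T^{qi}=T^\T=T$; so symmetric permutations genuinely belong to the stated class and are self-quasi-inverse.

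I do not expect a genuinely hard step: the argument is essentially the observation that $T^{qi}$ is forced to be a $0$--$1$ column-selection matrix, after which the combinatorics is immediate. The one point demanding care is the faithful translation of \emph{unique quasi-inverse} into \emph{strict maximum in each row}, together with the verification that, for $0$--$1$ matrices, this strictness is equivalent to a single $1$ per row; this is where the all-zero-row and repeated-entry cases must be excluded explicitly, and it is the only place the uniqueness hypothesis is actually used.
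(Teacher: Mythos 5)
Your proposal is correct and follows essentially the same route as the paper's own proof: uniqueness forces a strict maximum in each row, the structure of $T^{qi}$ from Theorem \ref{classical inverse} forces a self-quasi-inverse $T$ to be a $0$--$1$ matrix with one unit per column, uniqueness again forces one unit per row, and the permutation identity $T^{qi}=T^{\T}$ yields symmetry. Your write-up is somewhat more explicit about the degenerate cases (all-zero rows, ties) and adds the converse verification, but the underlying argument is the same.
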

\begin{proof}
Uniqueness of the quasi-inverse suggests
that in each row of $T$ there exists a single entry which is 
strictly larger than other elements in the row.
Hence, we have exactly one non-zero array (which is equal to $1$) 
in each column of $T^{qi}$. 
Suppose that the classical channel $T$ is a self-quasi-inverse 
stochastic matrix, i.e. $T^{qi} = T$. 
This equality  implies the existence of 
exactly one non-zero element equal to unity in each row of 
$T^{qi}$. Because we would have had more than one leading value 
in a row of $T$ which violates the uniqueness of 
$T^{qi}$, otherwise. So $T^{qi}$ and consequently $T$ are the same
permutation matrix.
However, a permutation matrix has a real inverse which is equal to 
its transpose. So we have $T^{qi}=T=T^\T$ suggesting that 
$T$ is a symmetric permutation matrix.
\end{proof}

\subsection{On the commutativity of super-decoherence and quasi-inverse.}
 Now that we have discussed the quasi-inverses of both quantum and classical channels, a natural question is whether or not through super-decoherence \cite{kcpz}
 of quasi-inverse of a quantum channel $\c E$, the quasi-inverse of a classical channel $T_\c E$
 can be obtained. In other words, we want to see if the action of taking quasi-inverse commutes with super-decoherence. As we will see below, in general the answer is negative. For convenience we first remind the concept of super-decoherence \cite{kcpz}.\\

 The decoherence channel removes off-diagonal elements of density matrices
and sends any quantum state $\rho$ into its diagonal
$\c D(\rho)=\rho_d=\sum\rho_{ii}\project{i}$, i.e. a classical state.
Defining an analogous process in the space of quantum channels,
one may extract a classical map, a stochastic matrix,
from any quantum channel. This process is called super-decoherence,
noted by $\Delta$, to emphasize that it acts on quantum channels
and not states. For a quantum channel $\c E$,
the assigned classical transition matrix obtained by super-decoherence
is defined by  $T_\c E=\Delta(\c E)$ where
\begin{equation}
(T_\c E)_{ij}=\bra i\c E(\project j)\ket i=(\Phi_\c E)_{ii, jj}=(C_\c E)_{ij, ij}.
\end{equation}
The last equality above shows that this stochastic matrix is actually gained
by decohering the Choi matrix and clarifies why it is called super-decoherence.
It is straightforward to see stochasticity of $T_\c E$ is guaranteed by the
fact that $\c E$ is a positive and trace preserving map.
This is related to the fact that the classical distortion matrix
 $M_{cl}$ of size $(d-1)$ used in (\ref{stochastic-affine})
 forms a block of the quantum distortion matrix $M$ of size $d^2-1$
 present in the Liouville form  (\ref{superoperator-affine})
 of any corresponding  quantum operation \cite{karol3}.\\

Moreover, one can show that if the channel $\c E$ is described by the set of
Kraus operators $\{K_\alpha\}$, then
$T_\c E=\sum K_\alpha\odot K_\alpha^*$ where $\odot$ defines Hadamard
(entry-wise) product.
Through this relation it is easy to see by super-decohering a unital
channel we get a bistochastic matrix, while a unistochastic matrix is obtained
if the input channel is unitary.\\

However, a simple counter example shows that the answer to the question posed at the beginning of this subsection is negative. Consider the case of a single-qubit channel, where the quasi-inverse of a channel is in general a unitary map, $U=e^{i{\theta\bf n}\cdot{\bm\sigma}}$, where $\theta$ and ${\bf n}$ depend on the channel. On the other hand, as shown in Fig. \ref{classic2}, the quasi-inverse of any 2-dimensional classical map is either identity $I$ or the
permutation $\sigma_x$,
which arise due to super-decoherence of a small subset
of the  unitary channels \cite{kcpz19}.

\section{Further results and examples}\label{furtherexamples}

Although Theorem \ref{classical inverse} gives a complete prescription for
finding the quasi-inverse of any stochastic matrix, it is instructive to consider a
few special classes. These examples illustrate further the parallels between
classical and quantum notions of maps and their quasi-inverses.
The first result is the analog of unitarily equivalent quantum maps.\\

Consider two transition matrices  $T_1$ and $T_2$ which are
related by two arbitrary permutations $P_\alpha$
and $P_\beta$ in the following way
$T_2 = P_\alpha \circ T_1 \circ P_\beta$.
Let us call them permutationaly equivalent stochastic matrices.
Pursuing the same approach adopted in obtaining Eq. \eqref{unitarily},
and noting that the set of stochastic matrices, $\c S_d$, 
is invariant under permutations, we get
\begin{equation}
	T_2^{qi} = P^{-1}_\beta \circ T^{qi}_1 \circ P^{-1}_\alpha.
\end{equation}
Moreover, $T_1$ and $T_2$ have the same amount of fidelity after correction.
The next example illustrates the connection with its quantum analog,
Example \ref{mixedU}.\\

\begin{example}[\bf Convex combination of orthogonal permutations]\label{e2}\normalfont
\ A bi-stochastic matrix $T$ is a stochastic matrix with the extra property that
the sum of entries of each row equals unity. It is a well-known result due to
Birkhoff's theorem that any such matrix can be written as a convex
combination of permutations. For dimension $d$, there are $d!$ such
permutations which form the extreme points of the convex set of these
matrices, conventionally called the Birkhoff polytope. Note however that a
bi-stochastic matrix has $(d-1)^2$ independent parameters and hence the convex
decomposition of an arbitrary bi-stochastic matrix in terms of $d!$
permutations is not unique.   Consider now a special class of  bi-stochastic
matrices which are convex combination of orthogonal permutations.   These
permutations are defined by the property that
\be\label{orthogonality of permutations}
	\Tr(P^{-1}_mP_n)=d\delta_{n,m}.
\ee
Equation (\ref{orthogonality of permutations}) indicates that
\be\label{pmpn}
	\forall \ \ k\h P_m|k\ra\ne P_n|k\ra,\ \ \ if\ \ \ m\ne n.
\ee
This means that there are at most $d$ orthogonal permutations in the group $S_d$ of all
permutations of $d$ objects. Furthermore, it implies  that

\be\label{sumP}
	\la j|\sum_mP_m|k\ra=0,1\h \forall\ \  j, \  k.
\ee
Here the sum runs over all the permutations in the convex combination, which may be a subset of all the $d$ orthogonal permutations.
A simple example consists of the set of permutations of the form $\{P_m=X^m,\ \ m=0,\cdots d-1\}$, where $X=\sum_{n=0}^{d-1}|n+1\ra\la n|$ is the full-cycle permutation, or any subset thereof. By definition we have
$P_m|k\ra=|k+m\ra$ which clearly satisfies (\ref{sumP}) as $\sum_m\la j|P_m|k\ra=\sum_m\delta_{j,k+m}=0, 1.$
In this example all the permutations commute with each other. As an example consisting of non-commuting but orthogonal permutations consider the following:
\be
	P_1=\left(\begin{array}{ccc}1&2&3\\ 2&1&3\end{array}\right),\h
	P_2=\left(\begin{array}{ccc}1&2&3\\ 3&2&1\end{array}\right),\h
	P_3=\left(\begin{array}{ccc}1&2&3\\ 3&1&2\end{array}\right),
\ee
with matrix representations
\be
		P_1=\left(\begin{array}{ccc}
		0&1&0\\ 1&0&0\\ 0 &0&1\end{array}\right),\h
		P_2=\left(\begin{array}{ccc}0&0&1\\
		0&1&0\\1&0&0\end{array}\right),\h
		P_3=\left(\begin{array}{ccc}
		0&1&0\\ 0&0&1\\ 1&0&0\end{array}\right).
\ee
One can see that $P_1$ and $P_2$ are orthogonal while $P_1$ and
$P_3$ are not. Also one can see that
\be
	P_1+P_2=\left(\begin{array}{ccc}
	0&1&1\\ 1&1&0\\ 1 &0&1\end{array}\right),
\ee
satisfying the condition (\ref{sumP}) while
\be
		P_1+P_3=\left(\begin{array}{ccc}
		0&2&0\\ 1&0&1\\ 1 &0&1\end{array}\right),
\ee
 violates (\ref{sumP}).  Furthermore one can check that in the group
 $S_3$ with generators
 $\sigma_1=\left(\begin{array}{ccc}
 1&2&3\\ 2&1&3\end{array}\right)$ and
 $\sigma_2=\left(\begin{array}{ccc}
 1&2&3\\ 1&3&2\end{array}\right)$,
 each of the two sets of even and odd permutations, respectively given by
 $(I,\ \sigma_1\sigma_2,\ \sigma_2\sigma_1)$  and
 $(\sigma_1,\ \sigma_2,\ \sigma_1\sigma_2\sigma_1)$
 comprise orthogonal permutations. \\

 Consider now a bi-stochastic matrix of the form
\begin{equation}
	B=\sum \lambda_m P_{m},
\end{equation}
where $\{P_m\}$ are orthogonal.
In what follows we will show the quasi-inverse of any such bi-stochastic
matrix is	$P^{-1}_{m_0}$ where $m_0$ refers to the index
of the greatest coefficient $\lambda_m$.
To see this we note that
\begin{eqnarray}\label{Fbar,Tr(AB)}
	\mathcal{\overline{F}}(T\circ B) &=& \frac{\Tr(TB)}{d}=
	\frac{\sum_m\lambda_m \Tr(T P_{m})}{d} \cr&\leq&
	\frac{\lambda_{m_0} \Tr(T \sum_m P_{m} )}{d}=
	\frac{\lambda_{m_0} \sum_{k,j} \la k|T|j\ra\la j|\sum_m P_{m}|k\ra}{d}.
\end{eqnarray}
Using Eq. (\ref{sumP}) for orthogonal permutations, we find
\begin{equation}
	\mathcal{\overline{F}}(T\circ B)\leq
	\frac{\lambda_{m_0} \sum_{k,j} \la k|T|j\ra}{d} = \lambda_{m_0}.
\end{equation}
where we used the fact that for every stochastic matrix,
the sum of its elements is equal to the dimension $d$.
Now, it is straightforward to see $T^{qi}=P^{-1}_{m_0}$
recovers the average fidelity of $T$ to this upper bound.
\vspace{0.5cm}

Thus the quasi-inverse of convex combination of orthogonal permutations is
the inverse of the single permutation which has the largest share in the convex
combination. This is however not the case if the permutations are not
orthogonal. This is shown in the next example.\\
\end{example}
\noindent
\begin{example}[\bf Convex combination of non-orthogonal permutations]\label{e3}\normalfont
\ Consider the following permutations
\be
	P_1=I\oplus I\oplus I,\ \ \ \ P_2=\sigma_x\oplus \sigma_x\oplus I,
	\ \ \ \ P_3=\sigma_x\oplus I\oplus \sigma_x,
\ee
where $I=\left(\begin{array}{cc}1&0\\ 0 & 1\end{array}\right)$ and
$\sigma_x=\left(\begin{array}{cc}0&1\\ 1 & 0\end{array}\right)$.
These permutations are obviously non-orthogonal in the sense that
$\Tr(P_iP_j^\T)\ne 0.$ Consider now the following convex combination
\be
	P=\lambda_1P_1+\lambda_2P_2+\lambda_3P_3,
\ee
where $\lambda_1+\lambda_2+\lambda_3=1$ and all
$ \lambda_i<\frac{1}{2}.$
In explicit form this permutation matrix is given by
\be
	P=\left(\begin{array}{cccccc}
	\lambda_1 & \lambda_2+\lambda_3&.&.&.&.
	\\ \lambda_2+\lambda_3& \lambda_1&.&.&.&.
	\\ .&.&\lambda_1+\lambda_3& \lambda_2&.&.\\.&.&
	\lambda_2&\lambda_1+\lambda_3&.&.\\ .&.&.&.&
	\lambda_1+\lambda_2&\lambda_3\\ .&.&.&.&
	\lambda_3&\lambda_1+\lambda_2
	\end{array}\right).
\ee
Since $\lambda_i<\frac{1}{2}$ for all $i$, we find
$\lambda_i+\lambda_j>\frac{1}{2}$ for all $(i\ne j)$
(due to the requirement that $\lambda_1+\lambda_2+\lambda_3=1$).
Thus according to Theorem \ref{classical inverse}, the quasi-inverse of
$P$ is given by replacing the largest entry of each row with unity and then
replacing the resulting matrix, hence  $P^{qi}=\sigma_x\oplus I\oplus I$
which is not equal to inverse of any of the permutations
$P_1$, $P_2$ or $P_3$.
\end{example}

\noindent

\section{Concluding remarks}
We have extended the concept of quasi-inversion of qubit channels \cite{kbf} to quantum channels in arbitrary dimensions and to the classical domain, i.e. Markov processes in discrete time. In both cases, a quasi-inverse is a map which when combined with the original channel increases the average input-output fidelity in an optimal way. While
the complete classification of  qubit channels \cite{rus,fuj}, makes a complete characterization of their quasi-inverse possible, in higher dimensions the lack of such classification makes the problem a highly non-trivial one. The most notable difference is that in the qubit case, the extreme points of all unital channels are the unitary maps while in higher dimensions this is not the case any more and no general theorem is known on extreme points. Therefore in this paper, we have 
established  certain general theorems on the nature of the quasi-inverse, and have provided certain bounds on the average fidelity after quasi-inversion. Applying these general results to some concrete cases, we have found in examples 1 through 5  the quasi-inverse for a large class of quantum channels.  Moreover, we have done a  parallel analysis for the classical channels, represented by stochastic matrices. \\

As shown in the Appendix \ref{statis-cl}, we have also obtained exact expressions for the average input-output fidelity for classical channels in any dimension $d$ and have shown  that the quasi-inversion increases this quantity
 from $d^{-1}$ to, approximately, $d^{-1/2}$.
In the quantum case,
we analyzed numerically in  Appendix \ref{statis-qu}
the improvement of the average fidelity of a random channel after correction
with quasi-inversion and after applications of
the best possible unitary evolution -- see Fig. \ref{CorrectedF}.
 If the dimension of the systems increases,
 the non-unitarity of quasi-inversion becomes larger.
 In other words, for a generic channel acting in higher dimensions,
 its  quasi-inverse is usually  non-unitary.\\

It is noteworthy to mention that by applying quasi-inversion,
we aim to get as close as possible to the identity map
in the sense of the maximizing average fidelity
and so by the reduction of the average Bures distance.
We do not expect the notion of quasi-inversion to improve some other
properties of channels, for which the identity map is not the
most distinguished channel.
There are other figures of merit, like the average output purity of a channel
\be
\overline{P}(\c E):=\int{\rm d}\psi\ \Tr\left[ (\c E(\project\psi)\right)^2] ,
\ee
or the unitarity of a channel  \cite{flamina}

\be
\overline{u}(\c E):=\frac{d}{d-1}\int{\rm d}\psi\
\Tr\left[\c E(\project\psi)-\c E(\frac{I}{d})\right]^2,
\ee
which can also be studied in the same way, leading to a different version of the
quasi-inverse, both in the quantum and the classical domain.
We are aware of examples which show that these notions of quasi-inverse are different.  That is, a quasi-inverse which increases the average fidelity can increase, decrease or keep constant the unitarity of a quantum channel. One can also study the effect of quasi-inversion on the cohering power of quantum channels as defined in \cite{ple, mk, zan}.\\

Finally,  an
interesting by-product of our study is Appendix \ref{A2}, where we have constructed special types of channels with affine maps $( M,\vect{ t})$, where the distortion matrix $M$  can stretch the Bloch vectors. Such channels are possible only in dimensions higher than two. These are of course different from channels of the form ${\cal E}(\rho)=|\psi\ra\la\psi|$, where the distortion matrix $M$ vanishes and the stretching is due only to the translation vector $\vect{t}$. \\

We hope that this study can be pursued in different directions, i.e. in obtaining more information about the extreme points of the space of quantum channels, and hence the quasi-inverse of  larger classes of channels, those channels which do not have unique quasi-inverses, and those which are their own quasi-inverse. And also more importantly in finding connections with the recovery maps \cite{ren1,ren2,ren3,renal}.

\bigskip

It is a pleasure to thank Seyed Javad Akhtarshenas, Erik Aurell,
Giulio Chiribella, Sergey Filippov,
Kamil Korzekwa, and {\L}ukasz Pawela
for several discussions and helpful remarks.
This research was partially supported by the grant number G98024071 from Iran National Science Foundation.
Financial support by Narodowe Centrum Nauki under the grant number
DEC-2015/18/A/ST2/00274  and by the Foundation
for Polish Science under the Team-Net NTQC project is gratefully acknowledged.



\appendix
\section{An explicit basis for $L(H_d)$}\label{app1}
In this appendix we choose an explicit representation for the matrices
$\Gamma_i$ belonging to $L(H_d)$. Let $E_{ij}:=|i\ra\la j|$ be the standard basis of matrices,
then the set
\be
H_k=\sqrt{\frac{d(d-1)}{k(k+1)}}(\sum_{i=1}^{k}E_{ii}-kE_{k+1,k+1})
\h k=1,\cdots d-1
\ee
and
\be
X_{ij}=\sqrt{\frac{d(d-1)}{2}}(E_{ij}+E_{ji})\h Y_{ij}=
\sqrt{\frac{d(d-1)}{2}}(-iE_{ij}+iE_{ji})
\ee
form   a basis of Hermitian matrices with the proper normalization in
(\ref{gamma}). They are nothing but the standard Gell-Mann matrices,
properly normalized to satisfy (\ref{gamma}).
To work in parallel with the Bloch representation of any
classical probability vector $\bf p$
it is convenient to order the matrix basis in such a way that
the  $d-1$ diagonal matrices appear first,
$\Gamma_k = H_k$, for $k=1,\dots, d-1$ --
see Section \ref{inverseclassical}.\\

A density matrix can be written as
$\rho=\frac{1}{d}(I+{\bf r}\cdot{\bm \Gamma})$, where
${\vect r}$ is the generalized Bloch vector and 
${\bm \Gamma}$ is a vector constructed by the elements $\Gamma_i$.
We note, as stated after Eq. (\ref{rhosquared}),
that any pure state corresponds to  unit vector ${\bf r}$.
However the converse is not true.
In fact one can easily verify that a matrix like
$\frac{1}{d}(I-H_{d-1})=|d\ra\la d|$ is a pure state while a state like
$\frac{1}{d}(I+H_{d-1})$ is not a state at all,
since it has negative eigenvalues.
Both correspond to antipodal points on the sphere $S_{d^2-2}$.

\section{Some details on the calculations of average input-output fidelity }
\label{A1}
As shown in Eq. (\ref{fidel}), the input-output fidelity is given by
$\overline{F}({\cal E})=\Tr(L\Phi_{\cal E})$, where $L$ is given by
\begin{eqnarray}
\nonumber L&=&
\int{\rm d}\psi\ \ket\psi\bra\psi\otimes\ket{\psi^\ast}\bra{\psi^\ast}\\&=&
\int{\rm d}U\ (U\otimes U^\ast)\project{00}(U\otimes U^\ast)^\dagger.
\end{eqnarray}
The last line shows $L=T\left(\project{00}\right)$ is an isotropic
state  obtained by twirling the state $\ket{00}$. In general for any state $\rho$, its twirling gives
 \cite{HH99}:
\begin{eqnarray}
T(\rho)&=&\int{\rm d}U\ (U\otimes U^\ast)\rho(U\otimes U^\ast)^\dagger
\nn\\&=&
\frac{d^2}{d^2-1}\left((1-f_\rho)\frac{I}{d^2}+(f_\rho-\frac{1}{d^2})P_+\right),
\end{eqnarray}
where $f_\rho=\Tr\left(P_+\rho\right)$ and
 $P_+=\project{\phi_+}$
 is the maximally entangled state. So for the state $\project{00}$ of
 a  $d$-dimensional system, one has:
 \begin{equation}
f_{\ket{00}}=\Tr\left(P_+\project{00}\right)=\frac{1}{d},
 \end{equation}
and
\begin{equation}
L=T\left(\project{00}\right)=\frac{1}{d(d+1)} I+\frac{1}{(d+1)}P_+,
\end{equation}
which coincides with Eq. (\ref{isostate}) in the text. \\

One can also write the average fidelity (\ref{fidel}) in the form
\be
\overline{F}({\cal E})=\Tr(\Lambda \Psi_{\cal E}),
\ee
where
$\Psi_{\cal E}=\sum_\a K_\a\otimes K_\a^\dagger$ and
\begin{eqnarray}
\nonumber \Lambda=
\int{\rm d}\psi\ \ket\psi\bra\psi\otimes\ket{\psi}\bra{\psi}
\end{eqnarray}
is a symmetric Werner state
\be
\Lambda=\frac{1}{d(d+1)}(I+P_{sym})=\frac{1}{d(d+1)}(I+\sum_{i,j}|i,j\ra\la j,i|).
\ee
This will also lead to the same result as in (\ref{fbar}). \\

Finally let us calculate the average fidelity in yet another way,
by expressing the pure states  as
\be\label{pure}
|\psi\ra\la \psi|=\frac{1}{d}(I+{\bf n}\cdot \bm\Gamma).
\ee
From
\be
{\cal E}(|\psi\ra\la \psi|)=\frac{1}{d}
\bigl(I+(M{\bf n}+{\bf t})\cdot \bm\Gamma \bigr).
\ee
and Eq. (\ref{pure}),
one finds
\be
\la \psi|{\cal E}(|\psi\ra\la\psi|)|\psi\ra=\frac{1}{d}\bigl(1
+(d-1){\bf n}\cdot (M{\bf n}+{\bf t})\bigr).
\ee
If we now assume a uniform distribution of pure states on the sphere $S_{d^2-2}$ (which is of course not dense) and hence use the relations
\be\label{integraln}
\int d{\bf n}\ n_i=0,\h \int d{\bf n}\ n_in_j=\frac{1}{(d^2-1)}\delta_{ij}
\ee
we arrive at
\be
\overline{F}({\cal E})=\frac{1}{d}\Bigl( 1+\frac{1}{d+1}\Tr M\Bigr),
\ee
which coincides with the result (\ref{fbar}) which we obtained by integrating
over the invariant  volume $d\psi$ of all pure states. Here we have not used
any specific measure of volume over the sphere $S_{d^2-2}$ and only have
assumed that  pure states are distributed symmetrically (albeit in a parse way)
on this sphere.  \\

\section{Quantum channels which stretch the Bloch vector }
\label{A2}
In two dimension, to satisfy the positivity condition, the affine matrix $M$ of
any positive map, and thus any quantum channel,  fulfills $M^{\T}M\leq I$.
As a result, the Bloch vector corresponding to a qubit state is always shrunk
when $M$ acts on it. Here we show that this is no longer the case in higher
dimensions. This is one of the strange or un-expected properties
of higher dimensional channels which makes the study of their quasi-inverses, among other things, difficult.
As describing this class of quantum maps
is not straightforward we construct an example of
such a channel in this Appendix.\\

 Let $H_d$ be a $d-$dimensional Hilbert space and let $P_1, P_2, Q_1 $ and
$Q_2$  be orthogonal projectors of rank $d_1, d_2, m_1$ and $m_2$
respectively, with ($d=d_1+d_2=m_1+m_2$), i.e.
$$P_1+P_2=Q_1+Q_2=I, \quad P_1P_2=Q_1Q_2=0.$$

The following operators are Hermitian and traceless

\be\label{A and B}
A:=\frac{m_2Q_1-m_1Q_2}{\sqrt{m_1m_2}},\h B=\frac{d_2P_1-d_1P_2}{\sqrt{d_1d_2}},
\ee
and we have
\be
\Tr(AB)=0,\h \Tr(A^2)=d,\h \Tr(B^2)=d.
\ee
Consider now the following measure-and-prepare CPT map

\be
{\cal E}(\rho)=\frac{1}{d_1}\Tr(Q_1\rho)P_1+\frac{1}{d_2}\Tr(Q_2\rho)P_2.
\ee
By expanding the projectors as 

\ba
&&P_1=\sum_{i=1}^{d_1}|i\ra\la i|,\hspace{1.72cm}   P_2=
\sum_{i=d_1+1}^{d}|i\ra\la i|,\cr  &&Q_1=
\sum_{\a=1}^{m_1}|\varphi_\a\ra\la \varphi_\a|,\h   Q_2=
\sum_{\a=m_1+1}^{d}|\varphi_\a\ra\la \varphi_\a|.
\ea
a set of  Kraus operators for this map is given by
\be
K_{i,\a}:=\frac{1}{\sqrt{d_1}}|i\ra\la \varphi_\a| \h i=1,\dots, d_1, \ \ \ \a=1,\dots, m_1
\ee
and
\be
\h L_{i,\a}:=\frac{1}{\sqrt{d_2}}|i\ra\la \varphi_\a| \h i=d_1+1,\dots, d, \ \ \ \a=\m_1+1,\dots, d.
\ee

Note that
\be
\sum_{i,\a}K_{i,\a}^\dagger K_{i,\a}+\sum_{i,\a}L_{i,\a}^\dagger L_{i,\a}=I_d,
\ee
but
\be
\sum_{i,\a}K_{i,\a} K_{i,\a}^\dagger+\sum_{i,\a}L_{i,\a} L_{i,\a}^\dagger=\frac{m_1}{d_1}P_1+\frac{m_2}{d_2}P_2,
\ee
hence the channel is non-unital, unless $m_1=d_1$ and $m_2=d_2$. \\

From the above relations and the fact that  $P_1+P_2=Q_1+Q_2=I$,                                                                                                                                                                                                                                                                                                           we find

\be\label{PP}
P_1=\frac{1}{d}(d_1I+\sqrt{d_1d_2}B)\h P_2=\frac{1}{d}(d_2I-\sqrt{d_1d_2}B)
\ee
and
\be
Q_1=\frac{1}{d}(m_1I+\sqrt{m_1m_2}A)\h Q_2=\frac{1}{d}(m_2I-\sqrt{m_1m_2}A)
\ee
Consider now a state
\be
\rho=\frac{1}{d}(1+xA)
\ee
where
$x$ represents the Bloch vector of this state,
 since $x=\Tr(\rho A).$
We will then find
\be
{\cal E}(\rho)=\frac{1}{d}(I+x'B),
\ee
where
\be\label{x'}
x'=\sqrt{\frac{m_1m_2}{d_1d_2}}x+
\frac{m_1d_2-m_2d_1}{d\sqrt{d_1d_2}} .
\ee

Straightforward calculation completes the reasoning.
Hence we find that if $m_1m_2>d_1d_2$, then the Bloch vector can be stretched.
It is interesting to note that the inhomogeneous translation does not
compensate the stretching of $x$, rather it enhances it. Obviously,  this is not
possible in dimensions $d=2, 3$ but it is possible in dimensions $d\geq 4$,
where for example we can take $(m_1,m_2)=(2,2)$ and $(d_1,d_2)=(1,3)$.
Moreover we see that if the channel is unital, then $x'=x$.\\

To see how the existence of these channels may affect
quasi-inversion, let us denote by $\c M_s$ a subset of quantum channels
whose distortion matrices $M_s$ can only shrink the Bloch vector, i.e.
$\c M_s=\{\c E_s|\ M_s^\T M_s\leq I\}$.
Let us highlight two remarks related to the set $\c M_s$.
First, according to Russo-Dye theorem any linear positive map obtains
its norm at the identity \cite{b07}. Thus, singular values of any unital map
are less than or equal to one which implies all unital maps belong to $\c M_s$.
Moreover, for any $M_s$ included in the set $\c M_s$
it is always possible to find unitary operators $U_1$ and $U_2$
such that $M_s=(U_1+U_2)/2$.
For now, let us assume that quasi-inverse for a quantum channel $\c E$
belongs to $\c M_s$. In that case, one has \eqref{relation of corrected fidelity}
\begin{eqnarray}\label{up-unital}
\overline{F}(\c E^{qi}\circ\c E)&=&\max_{\c E'}
\overline{F}(\c E'\circ\c E)=\max_{\c E_s\in\c M_s}
\overline{F}(\c E_s\circ\c E)\nonumber\\
&=&\max_{M_s} \frac1d\left(1+\frac{1}{d+1}\Tr M_sM\right)
\nonumber\\&\leq&
\max_{U_1,U_2} \frac1d\left(1+\frac{1}{2(d+1)}\Tr[(U_1+U_2)M]\right)
\nonumber\\&=& \frac1d\left(1+\frac{1}{d+1}\Tr|M|\right).
\end{eqnarray}
This inequality imposes an upper bound on the corrected fidelity whenever
quasi-inversion lies in the set $\c M_s$. An example
of which is a qubit channel where not only quasi-inversion is a unital and
unitary map in $\c M_s$, but also the entire set of qubit
channels are contained in $\c M_s$. However, in higher dimensions
$\c M_s$
is a nontrivial subset of quantum channels as our above example shows.
It is a question then whether quasi-inversion of any quantum channel
belongs to the set $\c M_s$. In what follows, we show with
an example it is possible to exceed the bound in \eqref{up-unital},
which implies a negative answer to the question.
Moreover, as the bound \eqref{up-unital} is valid when
quasi-inversion is a unital map, one infers in higher
dimensions, against qubit channel, quasi-inverse is not necessarily unital.\\

Before presenting the example, we mention explicitly taking
$\Gamma_1=\sqrt{d-1}A$ and $\Gamma_2=\sqrt{d-1}B$,
see Eq.~\eqref{A and B}, along
with the identity and $d^2-3$ other Hermitian and traceless operators
orthogonal to $A$ and $B$ as the set of basis,
the distortion matrix $M$ and the translation vector $\vect{t}$
of the channel $\c E$ in the beginning of this section
 are given by their entries as
\begin{equation}\label{e}
M_{ij}=\sqrt{\frac{m_1m_2}{d_1d_2}}\delta_{i2}\delta_{j1},
\quad\quad\quad
t_i=\frac{m_1d_2-m_2d_1}{d\sqrt{d-1}\sqrt{d_1d_2}}\delta_{i2}.
\end{equation}
Now assume $\lambda>0$ is sufficiently small
so the $d$-dimensional map $\c E'$
described by the following affine parameters is a quantum channel
\begin{equation}\label{e'}
M'_{ij}=\lambda\delta_{i1}\delta_{j2},
\quad\quad\quad
t'_i=0.
\end{equation}
Note that $\Tr(MM')=\sqrt{\frac{m_1m_2}{d_1d_2}}\lambda$.
This amount is certainly larger than $\Tr|M'|=\lambda$ if
$m_1m_2>d_1d_2$, i.e. if the affine matrix $M$ can stretch
the Bloch vector and $\c E$ is not contained in $\c M_s$.
So for the channel $\c E'$ specified in Eq. \eqref{e'} we have through
Eq.~\eqref{relation of corrected fidelity}
\begin{equation}
\overline{F}\left(\c E^{\prime qi}\circ\c E'\right)
\geq\frac{1}{d}\left[1+\frac{\lambda}{d+1}\max\sqrt{\frac{m_1m_2}{d_1d_2}}\right].
\end{equation}
For an even $d$ one has $\max\sqrt{\frac{m_1m_2}{d_1d_2}}=\frac{d}{2\sqrt{d-1}}$, while
$\max\sqrt{\frac{m_1m_2}{d_1d_2}}=\frac{\sqrt{d+1}}{2}$
for an odd $d$. In any case, assuming $d>3$ we get
\begin{equation}
\overline{F}\left(\c E^{\prime qi}\circ\c E'\right)
\geq\frac{1}{d}\left[1+\frac{\lambda}{2\sqrt{d+1}}\right]>
\frac{1}{d}\left[1+\frac{\lambda}{d+1}\right]=
\frac{1}{d}\left[1+\frac{\Tr|M'|}{d+1}\right].
\end{equation}

\section{Statistical Properties of Classical Channels and their quasi-inverses}\label{statis-cl}
In this section we discuss some of the properties of classical channels and their inverses.
			A random classical channel is a stochastic matrix $T$ in which the columns are independently picked from the uniform ensemble over the probabilistic simplex
		\be \Delta_d\equiv\big\{\vect{p}\in\mathbb{R}^d\hspace{1mm}\big|\hspace{1mm}p_i\geq0;\hspace{1mm}\sum_ip_i=1\big\}\ee
		Let the Probability Distribution Function (PDF) for any component $p_i$ be denoted by $f_d(x)$, that
		is $\mathbb{P}[x\leq  p_i\leq x+dx]=f_d(x)dx\ \ \ \forall \ i$.
		 Noting as an example figure (\ref{slab}) for the $d=3$ case,
		we see that this PDF is proportional to the area
		 of the narrow slab on the triangle which defines the probability simplex.
			\begin{figure}[H]\vspace{-1cm}
			$${\includegraphics[scale=0.33]{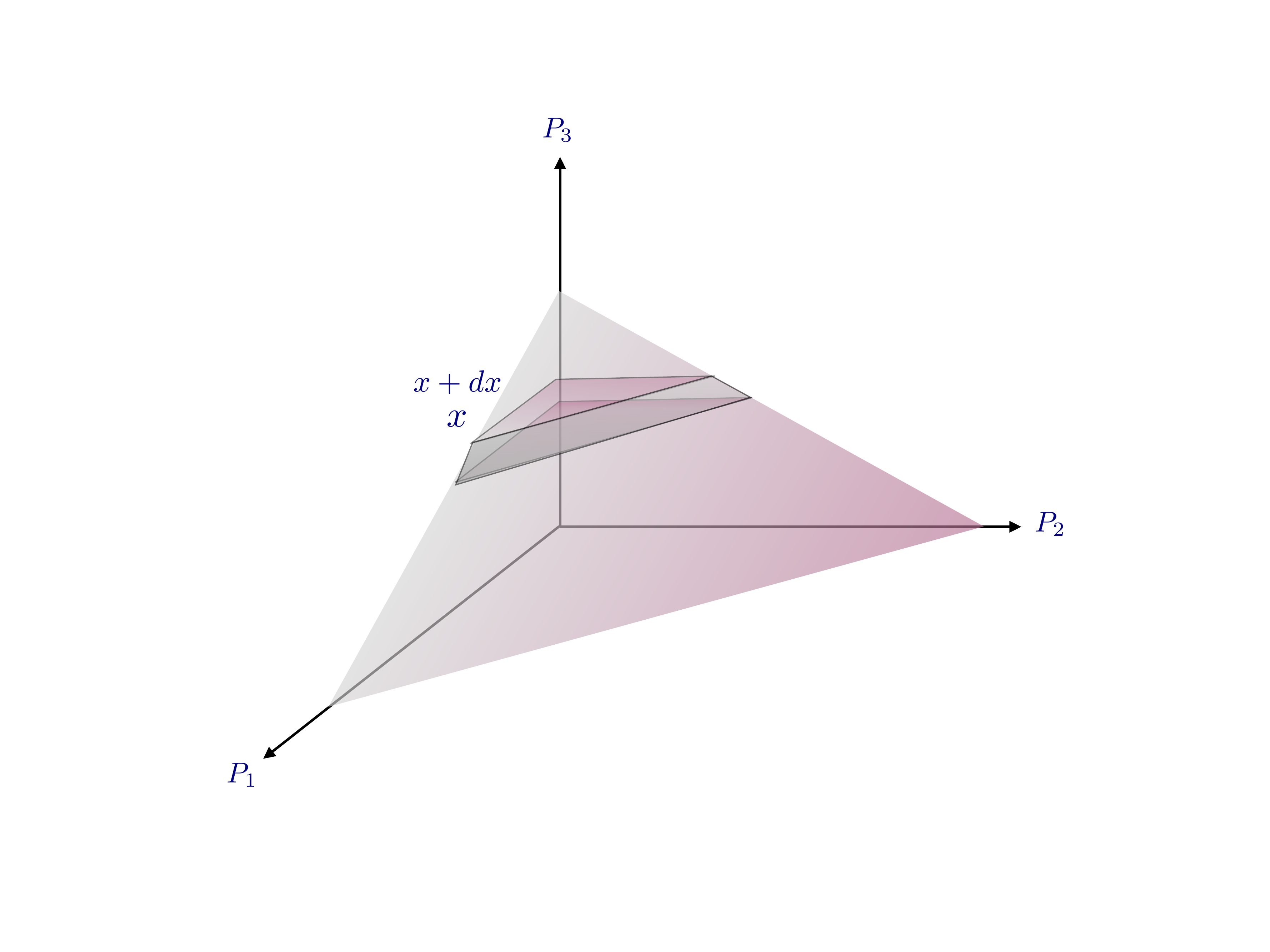}}$$\vspace{-2.2cm}
			\caption{A probability simplex in 3 dimensions. Each point on the large triangle represents a valid probability vector.}
			\label{slab}
		\end{figure}
		For the general case, this function is given by
		\be \phi_d(x)\equiv(d-1)(1-x)^{d-2};\hspace{4mm}x\in[0, 1].\ee
		It is also useful to have an expression for the corresponding CDF
		\be \Phi_d(x)=\mathbb{P}[p_i\leq x]=\int_0^x\phi_d(t)dt=1-(1-x)^{d-1}.\ee
		The average fidelity for any classical channel $T$ is given by $\overline{\c F}(T)\equiv \frac{1}{d}\Tr T$. If we now take the average over the uniform ensemble of all stochastic matrices, we find
		\be \langle \overline{\c F}\rangle=\frac{1}{d}\sum_i\langle T_{ii}\rangle=\int_0^1(d-1)(1-x)^{d-2}xdx=\frac{1}{d},\ee
		where we have used the statistical independence of the columns of $T$ in the uniform ensemble. Moreover we can also find the variance of this average fidelity which turns out to be
		\be \operatorname{Var}\overline{\c F}=\frac{1}{d^2}\sum_i\operatorname{Var}T_{ii}=\frac{d-1}{d^3(d+1)}.\ee
		
		The interesting question is how much this average fidelity increases for classical channels when we apply the quasi-inversion. 		
		To find this we note that the average fidelity after quasi inversion is
	\be \overline{\c F}(T^{qi}\circ T) =\frac{1}{d}\sum_i\max_j T_{ij},\ee
		that is the average fidelity after quasi-inversion is the average of the maximum element in each row of $T$.
		Assuming that the largest element of two different rows do not occur on the same column (which can happen only for a subset of measure zero in the ensemble), the ensemble average of the improved average fidelity becomes
			\be\la \overline{\c F}(T^{qi}\circ T)\ra =\la  p_{m}\ra,\ee
		where $p_{m}$ is the largest element in a single probability vector chosen uniformly.
		To find the average of this quantity for the uniform ensemble, we invoke the proposition that only on a set of measure zero, the maximum values of two different rows may occur on the same column.  Therefore we first find the following Cumulative Distribution Function (CDF), i.e.  the probability that the maximum values in all columns are less than $x$
		$$\Psi_d(x):=\mathbb{P}[p_m\leq x, \hspace{1mm}\forall j]=\Phi_d(x)^d=\big[1-(1-x)^{d-1}\big]^d$$
		this leads to the following PDF, i.e. the probability that the maximum value is between $x$ and $x+dx$,
		$$\psi_d(x)=\frac{d\Psi_d(x)}{dx}=d(d-1)\big[1-(1-x)^{d-1}\big]^{d-1}(1-x)^{d-2}$$
		Now it is easy to see that
		
		\ba\la \overline{\c F}(T^{qi}\circ T)\ra=\langle p_m\rangle&=&1-\langle 1-p_m\rangle=1-d(d-1)\int_0^1dx\big[x(1-x^{d-1})\big]^{d-1}\cr
			&=&1+\sum_{n=0}^{d}\frac{(-1)^n(d-1)n}{(d-1)n+1}\begin{pmatrix}d\\n\end{pmatrix}.
		\ea
		Figure \ref{f*} shows the average fidelity after quasi-inversion versus dimension. The curve fits the equation
		\be
		\la \overline{\c F}(T^{qi}\circ T)\ra=c d^{x}\sim d^{-1/2},
		\ee
		with $c\approx 1.055$ and the exponent $x\approx -0.5042$.
Therefore, the average fidelity of a classical channel
appears to be increasing due to quasi-inversion from $1/d$ to $1/\sqrt{d}$.

			\begin{figure}[H]
					\centering
				\includegraphics[scale=0.45]{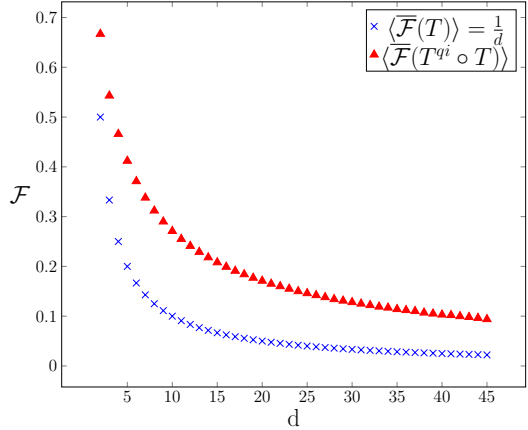}\vspace*{-5cm}
						\caption{The average fidelity of a classical channel
	before $\langle\overline{\c F}(T)\rangle$ (blue cross) and
	after $\langle\overline{\c F}(T^{qi}\circ T)\rangle$ (red triangle)
	it is compensated by the quasi-inversion.
						}
			\label{f*}
		\end{figure}
\section{ A note on statistical Properties of Quantum Channels and their quasi-inverses}\label{statis-qu}

Investigating the same problem in the set of quantum channels seems
a highly difficult task since in quantum case the quasi-inversion is not
generally known. Nonetheless, numerical analysis is possible in lower
dimensions and it shows improvement in amount of fidelity
for a typical channel.
To see that, one may notice for random quantum channels distributed
uniformly in the set of quantum channels \cite{karol3}
\begin{equation}
\langle\Phi_{\c E}\rangle=\Phi_*,
\end{equation}
where $\Phi_*$ is the maximally depolarizing channel,
see Section \ref{prem}. This implies $\langle\Tr\Phi_{\c E}\rangle=1$.
Thus, the average amount of input-output fidelity over random channels
distributed uniformly in $d$ dimension is given by \eqref{fbar}
\begin{equation}
\langle\overline{F}(\c E)\rangle=\frac1d.
\end{equation}
To get the fidelity after correction averaged over the set of uniformly
distributed quantum channels, we applied  numerically searching
for the quasi-inverse. The sketch of our method is based on
the observation that any quantum channel ${\cal E}:D(H_d)\lo D(H_d)$
should satisfy a continuous family of inequalities
\be
\la \phi|({\cal E}\otimes I)(|\psi\ra\la \psi|)|\phi\ra\geq 0
\h |\phi\ra, \ |\psi\ra\ \in H_d\otimes H_d.
\ee
To approximate the quasi-inverse of a channel numerically,
we first approximate the set of CP operators by only including a
finite subset of such constraints obtained by random entangled states in
$H_d\otimes H_d$. Then the quasi-inverse will be calculated as the answer to
a finite linear programming problem using the simplex method or other efficient algorithms.
The results for dimensions $d=2,3,4,5$ are presented in
Fig.~\ref{CorrectedF} and they confirm improvement in the average fidelity
after correction for random channels.
\begin{figure}[!h]
\centering
\includegraphics[scale=0.45]{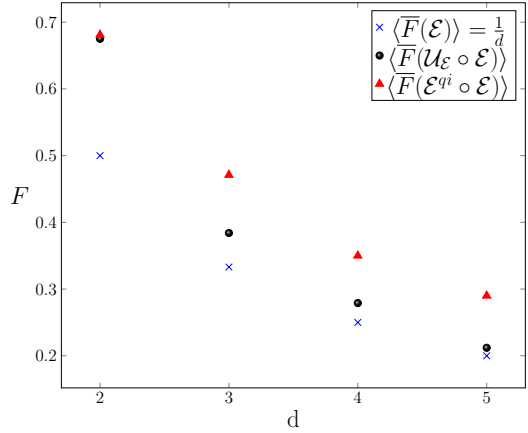}\vspace*{-5cm}
\caption{The amount of input-output fidelity
$\langle\overline{F}(\c E)\rangle$ (blue cross),
fidelity after correction with the best possible unitary
$\langle\overline{F}(\c U_{\c E}\circ\c E)\rangle$ (black circle),
and fidelity after correction with quasi-inversion
$\langle\overline{F}(\c E^{qi}\circ\c E)\rangle$ (red triangle)
averaged over the set of random channels distributed uniformly
for $d=2,\dots,5$.}
\label{CorrectedF}
\end{figure}

One may also think about applying a unitary evolution to correct the
input-output fidelity. However, our numerical results show that the best
unitary quantum channel, which will be denoted by $\c U_\c E$,
 cannot modify the fidelity in the best possible way,
 see Fig.~\ref{CorrectedF} for a comparison.
Indeed, we can take the purity of Jamio{\l}kowski state as a signature
of unitarity of a given channel $\c E$, i.e. this quantity is equal to $1$
if and only if the channel is unitary and any deviation from $1$ shows
non-unitarity. Adopting such a function, one finds
unitarity for the quasi-inverse of random channels averaged over
the set of uniformly distributed CPT maps is equal to
$0.99\pm0.01$ for $d=3$, $0.68\pm0.05$ for $d=4$, and
$0.52\pm0.05$ for $d=5$. This interesting fact counter-intuitively
shows in higher dimensions for a typical channel quasi inversion
is almost a non-unitary channel. However, if we measure unitality by
$1-|\vect{t}|^2$ ($\vect{t}$ is the translation vector, see \eqref{affine})
averaged over the set of random channels,
we see for $d=3,4,5$ it is respectively given by:
$0.98\pm0.01$, $0.97\pm0.01$, and $0.97\pm0.01$, which confirms
that the quasi-inverse of a typical channel is close to be unital.



\begin{thebibliography}{}

\bibitem{kbf}V. Karimipour, F. Benatti, and R. Floreanini,
``Quasi-inversion of qubit channels,"
\href{https://journals.aps.org/pra/abstract/10.1103/PhysRevA.101.032109}
{Phys. Rev. A {\bf 101}, 032109  (2020)}.


\bibitem{NC00} M. A. Nielsen, and I. L. Chuang,
 \textsl{Quantum Computation and Quantum Information},
  (Cambridge University Press,  2000).
	
\bibitem{GKS76}  V. Gorini, A. Kossakowski, and E. C. G. Sudarshan,
``Completely  positive  dynamical  semigroups of $N$-level systems,''
J. Math. Phys. {\bf 17}, 821 (1976).


\bibitem{Li76} G. Lindblad,
``On the generators of quantum dynamical semigroups,''
Commun. Math. Phys. {\bf 48}, 119 (1976).

	
\bibitem{cho} M.-D. Choi,
``Completely positive linear maps on complex matrices,"
\href{https://www.sciencedirect.com/science/article/pii/0024379575900750}
{Linear Algebra Appl. {\bf 10}, 285–290 (1975)}.

\bibitem{kra} K. Kraus,
``General state changes in quantum theory,"
\href{https://www.sciencedirect.com/science/article/abs/pii/0003491671901084}
{Ann. Physics {\bf 64}, 311–335 (1971)}.


\bibitem{lam} E. Knill,
``Quantum computing with realistically noisy devices,"
\href{https://www.nature.com/articles/nature03350}
{Nature  \textbf{434}, 39 (2004)}.


\bibitem{lid}D. A. Lidar, I. L. Chuang, and K. B. Whaley,
``Decoherence-free subspaces for quantum computation,"
\href{https://journals.aps.org/prl/abstract/10.1103/PhysRevLett.81.2594}
{Phys. Rev. Lett. \textbf{81}, 2594 (1998)}.


\bibitem{bei} A. Beige, D. Braun, B. Tregenna, and P. L. Knight,
``Quantum computing using dissipation to remain in a decoherence-free subspace
,"
\href{https://journals.aps.org/prl/abstract/10.1103/PhysRevLett.85.1762}
{Phys. Rev. Lett. \textbf{85}, 1762 (2000)}.


\bibitem{kwi} P. G. Kwiat, A. J. Berglund, J. B. Altepeter, and A. G. White,
``Experimental verification of decoherence-free subspaces,"
\href{https://science.sciencemag.org/content/290/5491/498}
{Science \textbf{290}, 498 (2000)}.


\bibitem{K08} N. Katz et al.,
``Reversal of the weak measurement of a quantum state in a superconducting phase qubit,"
\href{https://journals.aps.org/prl/abstract/10.1103/PhysRevLett.101.200401}
{Phys. Rev. Lett. \textbf{101}, 200401 (2008)}.


\bibitem{SAZ09} Q. Q. Sun, M. Al-Amri, and M. S. Zubairy,
``Reversing the weak measurement of an arbitrary field with finite photon number,"
\href{https://journals.aps.org/pra/abstract/10.1103/PhysRevA.80.033838}
{Phys. Rev. A \textbf{80}, 033838 (2009)}.


\bibitem{SADZ10} Q. Q. Sun, M. Al-Amri, L. Davidovich, and M. S. Zubairy,
``Reversing entanglement change by a weak measurement,"
\href{https://journals.aps.org/pra/abstract/10.1103/PhysRevA.82.052323}
{Phys. Rev. A \textbf{82}, 052323 (2010)}.


\bibitem{KLKK12}Y. S. Kim, J. C. Lee, O. Kwon, and Y. H. Kim,
``Protecting entanglement from decoherence using weak measurement and quantum measurement reversal,"
\href{https://www.nature.com/articles/nphys2178}
{Nat. Phys. \textbf{8}, 117 (2012)}.


\bibitem{AAV88} Y. Aharonov, D. Z. Albert, and L. Vaidman,
``How the result of a measurement of a component of the spin of a spin-$1/2$ particle can turn out to be $100$,"
\href{https://journals.aps.org/prl/abstract/10.1103/PhysRevLett.60.1351}
{Phys. Rev. Lett. {\bf 60}, 1351 (1988)}.


\bibitem{POWRW05} G. J. Pryde, J. L. O'Brien, A. G. White, T. C. Ralph, and H. M. Wiseman,
``Measurement of quantum weak values of photon polarization,"
\href{https://journals.aps.org/prl/abstract/10.1103/PhysRevLett.94.220405}
{Phys. Rev. Lett. {\bf 94}, 220405 (2005)}.


\bibitem{karol} E. Aurell,  J. Zakrzewski, and K. \.Zyczkowski,
``Time reversals of irreversible quantum maps,"
\href{https://iopscience.iop.org/article/10.1088/1751-8113/48/38/38FT01/meta} 	
{J. Phys. A {\bf 48}, 38FT01 (2015)}.

\bibitem{CAZ21} G. Chiribella, E. Aurell, and K. \.Zyczkowski,
``Symmetries of quantum evolutions,"
\href{https://arxiv.org/abs/2101.04962} 	
{arXiv:2101.04962}, Phys. Rev. Research, in press.

\bibitem{rus} M. B. Ruskai, S. Szarek, and E. Werner,
``An analysis of completely-positive trace-preserving maps on $M_2$,"
\href{https://www.sciencedirect.com/science/article/pii/S002437950100547X}
{Linear Algebra and its Applications {\bf 347},  159–187 (2002)}.

\bibitem{fuj} A. Fujiwara, and P. Algoet,
``One-to-one parametrization of quantum channels,"
\href{https://journals.aps.org/pra/abstract/10.1103/PhysRevA.59.3290}
{Phys. Rev. A {\bf 59},   3290–3294 (1999)}.


\bibitem{KRS09} R. Koenig, R. Renner, and C. Schaffner,
``The operational meaning of min- and max-entropy'',
IEEE Trans. Inf. Th. {\bf 55},  4337  (2009).

\bibitem{karol2} I. Bengtsson, and K. \.Zyczkowski,
 {\it Geometry of Quantum States: An Introduction to Quantum Entanglement},
II ed.,  Cambridge University Press, 2017.

\bibitem{CE16} G. Chiribella and D. Ebler,
``Optimal quantum networks and one-shot entropies'',
New J. Phys. {\bf 18}, 093053 (2016).

\bibitem{LS}L. J. Landau, and R.F. Streater,
``On Birkhoff's theorem for doubly stochastic completely positive maps of matrix algebras,"
\href{https://www.sciencedirect.com/science/article/pii/002437959390274R}
{Lin. Alg. Appl. {\bf 193}, 107 (1993)}.


\bibitem{Tg} S. L. Tregub,
``Bistochastic operators on finite-dimensional von Neumann algebras,"
\href{http://www.mathnet.ru/php/archive.phtml?wshow=paper&jrnid=ivm&paperid=7524&option_lang=eng}
{Soviet Math. {\bf 30}, 
105 (1986)}.


\bibitem{Kumm} B. K\"ummerer, and H. Maassen,
``The essentially commutative dilations of dynamical semigroups on $M_n$,"
\href{https://link.springer.com/article/10.1007/BF01205670}
{Commun. Math. Phys. {\bf 109}, 1 (1987)}.

\bibitem{FK19} S. N. Filippov, and K. V. Kuzhamuratova,
``Quantum informational properties of the Landau–Streater channel,"
\href{https://aip.scitation.org/doi/10.1063/1.5037700}
{J. Math. Phys. {\bf 60}, 042202 (2019)}.


\bibitem{kcpz}K. Korzekwa, S. Czach\'orski, Z. Pucha\l a, and K. \. Zyczkowski, ``Coherifying quantum channels,"
\href{https://iopscience.iop.org/article/10.1088/1367-2630/aaaff3}
{ New J. Phys. {\bf 20}, 043028 (2018)}.

\bibitem{goy} S. K. Goyal, B. N. Simon, R. Singh, and S. Simon,
``Geometry of the generalized Bloch sphere for qutrits,"
\href{https://iopscience.iop.org/article/10.1088/1751-8113/49/16/165203}
{J. Phys. A: Math. Theor.{\bf  49}, 165203 (2016)}.


\bibitem{JS01}L. Jak\'obczyk, and M. Siennicki,
``Geometry of Bloch vectors in two-qubit system,"
\href{https://www.sciencedirect.com/science/article/abs/pii/S0375960101004558}
{Phys. Lett. A   {\bf 286},  383 (2001)}.


\bibitem{K03}G. Kimura,
``The Bloch vector for $N$-level systems,"
\href{https://linkinghub.elsevier.com/retrieve/pii/S0375960103009411}
{Phys. Lett. A {\bf 314}, 339 (2003)}.


 \bibitem{KSRJO14}  S. Kimmel, M. P. da Silva, C. A. Ryan, B. R. Johnson,  and T. Ohki,
`` Robust extraction of tomographic information via randomized benchmarking,"
 { Phys. Rev.  X} {\bf  4}, 011050 (2014).

\bibitem{Fa83} U. Fano, ``Pairs of two-level systems,"
{Rev. Mod. Phys.} {\bf  55}, 855 (1983).

 \bibitem{SN96}B. Schumacher, and M. A. Nielsen,
``Quantum data processing and error correction,"
\href{https://journals.aps.org/pra/abstract/10.1103/PhysRevA.54.2629}
{Phys. Rev. A {\bf 54}, 2629 (1996)}.


\bibitem{HH99}M. Horodecki, and P. Horodecki,
``Reduction criterion of separability and limits for a class of distillation protocols,"
\href{https://journals.aps.org/pra/abstract/10.1103/PhysRevA.59.4206}
{Phys. Rev. A {\bf 59}, 4206 (1999)}.

\bibitem{HHH99}M. Horodecki,  P. Horodecki, and R. Horodecki,
``General teleportation channel, singlet fraction, and quasidistillation,"
\href{https://journals.aps.org/pra/abstract/10.1103/PhysRevA.60.1888}
{Phys. Rev. A {\bf 60}, 1888 (1999)}.

\bibitem{Uhl76} A. Uhlmann,
``The “transition probability” in the state space of a $*$-algebra,"
Rep. Math. Phys. {\bf 9}, 273 (1976).

\bibitem{Jo94} R. Jozsa,
``Fidelity for mixed quantum states,"
J. Mod. Opt. {\bf 41}, 2315 (1994).

\bibitem{ZS05}  K. {\.Z}yczkowski, and H.--J. Sommers,
``Average fidelity between random quantum states,"
{Phys. Rev. A} {\bf 71}, 032313 (2005).

\bibitem{R05}R. Renner, ``Security of quantum key distribution,"
Ph.D. dissertation, ETH Zurich, arXiv:0512258 (2005).

\bibitem{AL07} R. Alicki, and K. Lendi,
{\sl Quantum Dynamical Semigroups and Applications}
Springer-Verlag Berlin, 2007.

\bibitem{aud} K. Audenaert, and S. Scheel,
``On random unitary channels,"
\href{https://iopscience.iop.org/article/10.1088/1367-2630/10/2/023011}
{New J. Phys. {\bf 10}, 023011 (2008)}.

\bibitem{WH02}R. F. Werner, and A. S. Holevo,
``Counterexample to an additivity conjecture for output purity of quantum channels,"
J.  Math. Phys. {\bf 43}, 4353 (2002).

\bibitem{karol3}R. Kukulski, I. Nechita, \L. Pawela, Z. Pucha\l a, and K. \. Zyczkowski,
``Generating random quantum channels,"
J. Math. Phys. {\bf 62}, 062201 (2021).

\bibitem{kcpz19} K. Korzekwa, S. Czach{\'o}rski, Z. Pucha{\l}a, and  K. {\.Z}yczkowski,
``Distinguishing classically indistinguishable states and channels,"
{ J. Phys. A} {\bf 52}, 475303 (2019).

\bibitem{flamina}J. J. Wallman, C. Granade, R. Harper, and S. T. Flammia,
``Estimating the coherence of noise,"
New J. Phys. {\bf 17}, 113020 (2015).


\bibitem{ple} A. Streltsov, G. Adesso, and M. B. Plenio,
``Colloquium: Quantum coherence as a resource,"
\href{https://journals.aps.org/rmp/abstract/10.1103/RevModPhys.89.041003}
{Rev. Mod. Phys. {\bf 89}, 041003 (2017)}.



\bibitem{mk}A. Mani, and V. Karimipour,
``Cohering and decohering power of quantum channels,"
\href{https://journals.aps.org/pra/abstract/10.1103/PhysRevA.92.032331}
{Phys. Rev. A {\bf 92}, 032331 (2015)}.


\bibitem{zan}P. Zanardi, G. Styliaris, and L. C. Venuti,
``Coherence-generating power of quantum unitary maps and beyond,"
\href{https://journals.aps.org/pra/abstract/10.1103/PhysRevA.95.052306}
{Phys. Rev. A {\bf 95}, 052306 (2017)}.


\bibitem{ren1} O. Fawzi and R. Renner,
``Quantum conditional mutual information and approximate Markov chains,"
\href{https://link.springer.com/article/10.1007/s00220-015-2466-x}
{Comm. Math. Phys. {\bf  340}, 575–611 (2015)}.


\bibitem{ren2}  M. Berta, M. Lemm, and M. M. Wilde,
``Monotonicity of quantum relative entropy and recoverability,"
\href{https://dl.acm.org/doi/10.5555/2871378.2871383}
{Quant. Inf. Compu. {\bf  15},  1333-1354 (2015)}.


\bibitem{ren3} M. Berta, and M. Tomamichel,
``The fidelity of recovery is multiplicative,"
 \href{https://ieeexplore.ieee.org/document/7404264}
 {IEEE Transactions on Information Theory {\bf 62},  1758-1763 (2016)}.


\bibitem{renal} M. Junge, R. Renner, D. Sutter, M. M. Wilde, and  A. Winter,
``Universal recovery maps and approximate sufficiency of quantum relative entropy,"
\href{}
{In Proceedings of the IEEE International Symposium on Information Theory (ISIT), Barcelona, Spain,  pp. 2494–2498 (2016)}.


\bibitem{b07} R. Bhatia, {\it Positive Definite Matrices}, Princeton University press,  2007.


\end{thebibliography}
\end{document}